\documentclass[%
reprint,
nofootinbib,
mathtools,
amssymb,
aps,
prx,
superscriptaddress,
usenames,dvipsnames,
floatfix
]{revtex4-2}

\usepackage{graphicx} % Include figure files
\usepackage[caption=false]{subfig} % use this instead of caption and subcaption, since those are incompatible with revtex4-2!
\usepackage{tabularray}
\usepackage{multirow}
\usepackage{comment}

\usepackage{bm}% bold math
\usepackage[tbtags]{amsmath}
\usepackage{amsthm}
\usepackage{nicefrac}
\usepackage{bbm}
\usepackage{physics}
\usepackage{titlesec}

\usepackage{tikz}
\usetikzlibrary{quantikz2}

\usepackage{enumitem}

\usepackage{xcolor,colortbl}
\usepackage[colorlinks=true,linktocpage=true]{hyperref}%
\hypersetup{
    allcolors  = {NavyBlue},
}

\definecolor{IBMblue}{RGB}{100,143,255}
\definecolor{IBMpurple}{RGB}{120,94,240}
\definecolor{IBMpink}{RGB}{220,38,127}
\definecolor{IBMorange}{RGB}{254,97,0}
\definecolor{IBMyellow}{RGB}{255,176,0}

\theoremstyle{plain}
\newtheorem{theorem}{Theorem}
\newtheorem{lemma}[theorem]{Lemma}
\newtheorem{proposition}[theorem]{Proposition}

\theoremstyle{plain}

\theoremstyle{definition}

\newcommand{\ugrshort}{Quantum Thermodynamics and Computation Group, Electromagnetism and Matter Physics Department, University of Granada, Granada, Spain}
\newcommand{\jku}{Department for Quantum Information and Computation at Kepler (QUICK), Johannes Kepler University, Linz, Austria}
\newcommand{\uottawa}{Department of Mathematics and Statistics, University of Ottawa, Ottawa, Canada}
\newcommand{\luh}{Institut f{\"{u}}r Theoretische Physik, Leibniz Universit{\"{a}}t Hannover, Hannover, Germany}
\newcommand{\bilkent}{Department of Mathematics, Bilkent University, Ankara, Turkey}

\newcommand{\F}[1]{\{0,1\}^{#1}}
\newcommand{\Cl}[1]{\mathcal{C}\ell_{#1}} % Clifford group
\renewcommand{\P}[1]{\mathcal{P}_{#1}} % Pauli group

\newcommand{\str}[1]{\boldsymbol{#1}}

\newcommand{\SGate}{S}
\newcommand{\TGate}{T}
\newcommand{\CNOT}{CX}
\newcommand{\PREC}{\prec}
\newcommand{\papertitle}{
Normal-Form Characterization of Adaptivity and Entanglement\\in Pauli-Based Computation}

\begin{document}
\title{\papertitle}

\author{Filipa C.\,R. Peres}
\affiliation{\ugrshort}
\affiliation{\jku}

\author{Selman Ipek}
\affiliation{\luh}
\affiliation{\bilkent}

\author{Daniel Lovsted}
\affiliation{\uottawa}

\author{Joshua Nevin}
\affiliation{\uottawa}

\author{Anne Broadbent}
\affiliation{\uottawa}

\makeatletter \let\@date\@empty \makeatother

\begin{abstract}
    Pauli-based computation~(PBC) is a measurement-based model of quantum computation that samples from output distributions of universal Clifford+$T$ circuits. Here, we provide a normal form~(NF) that completely characterizes its adaptive and entanglement structures. These insights unify PBC with one-way quantum computing as dual or `time-reversed' counterparts. We use the NF to advance strategies for distributed computation and investigate magic-state injection rates. We also demonstrate an exponential improvement in the detection of unnecessary magic-state qubits, allowing their efficient removal. Our NF further enables ahead-of-time circuit compilation using PBC, reducing CNOT gate count and quantum memory of random quantum circuits on average by 50\% and 40\%, respectively, beating other state-of-the-art compilers.
\end{abstract}

\maketitle

\begin{figure*}
    \centering
    \includegraphics[width=0.99\linewidth]{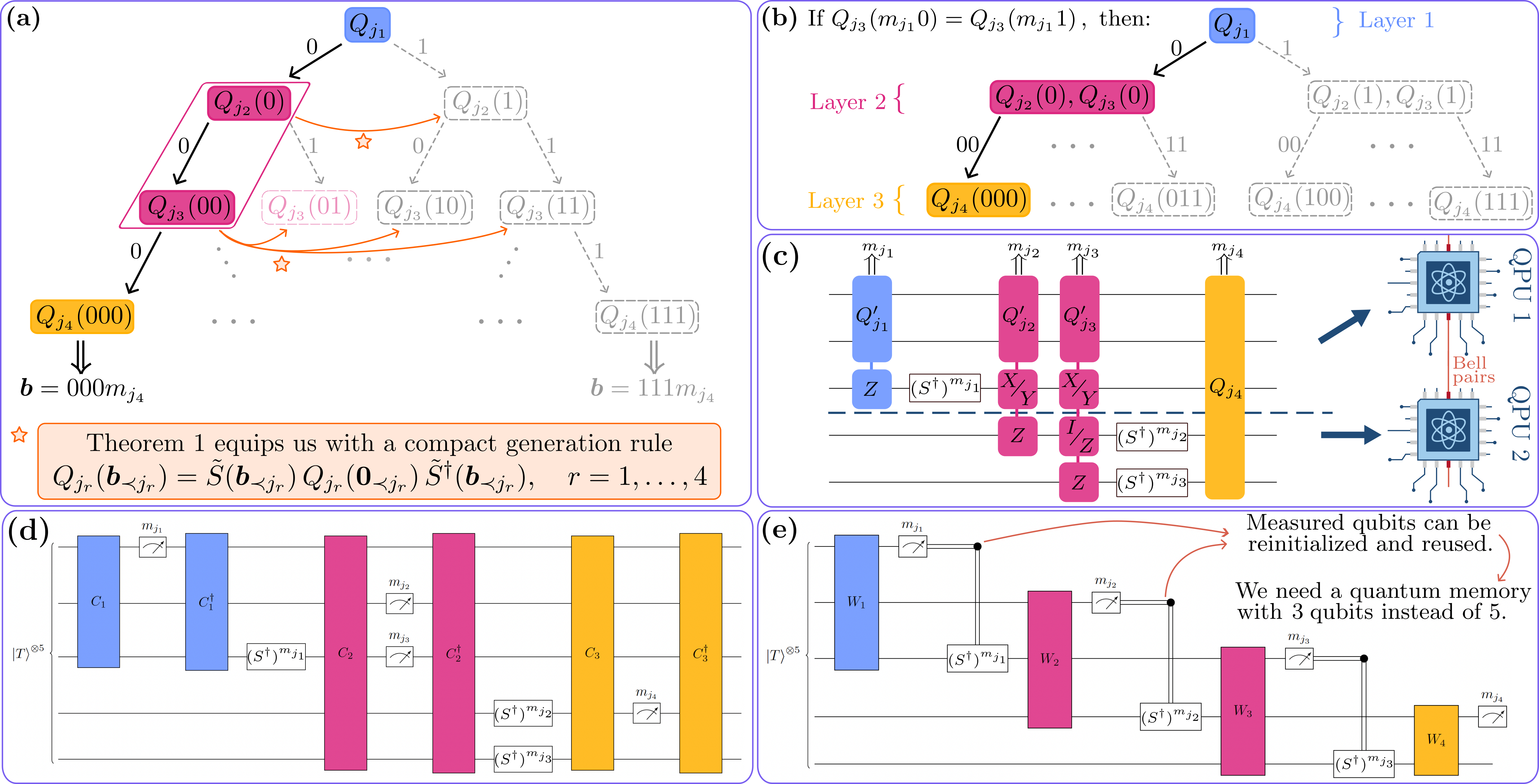}
    \caption{\textbf{(a)~PBC binary tree.} The reference branch (in black) can be determined efficiently; the NF provides a rule for obtaining any other node in the tree from the same-level node of the reference branch (orange arrows, marked with stars).
    \textbf{(b)~NF PBC tree.} The NF allows us to identify measurements that belong to the same adaptive layer.
    \textbf{(c)~Fault-tolerant architecture design.} The adaptive structure implies that not all magic states are needed at once (cf.~Fig.~\ref{fig: Depth-Injection}), while the entanglement structure helps establish a suitable strategy for distributed computation.
    In circuit drawings, boxes with rounded (resp.~sharp) edges denote measurements (resp.~unitary gates).
    \textbf{(d)~Naive ahead-of-time (AOT) circuit compilation.} The NF enables AOT compilation with PBC using a simple compute-and-uncompute strategy. \textbf{(e)~Enhanced AOT circuit compilation.} By removing the uncompute blocks, we reduce the quantum memory size and CNOT gate counts (cf.~Fig.~\ref{fig: AOT compiler RQCs}).
    }
    \label{fig: summary fig}
\end{figure*}

In Pauli-based computation (PBC), a supply of high-fidelity magic states and logical, multiqubit Pauli measurements are combined to enable universal quantum computation. Since its proposal~\cite{Bravyi2016trading}, the interest in PBC has been growing and its relevance across different areas of quantum computing research is now well established. For instance, PBC has proved useful in pushing the boundaries of efficient classical simulation~\cite{Bravyi2016trading, Zurel2020hidden, Zurel2024simulating}, demonstrating quantum advantage for specific families of quantum circuits~\cite{Yoganathan2019quantum}, and investigating dynamical phase transitions~\cite{Bejan2024dynamical}. It has also been used successfully to reduce the quantum memory size, gate counts, and depth of many quantum computations~\cite{Peres2023quantum, Peres2025reducing}, including when doing blind computation~\cite{Lovsted2026blind}. These capabilities render PBC relevant for near- to intermediate-term applications of quantum technology. Moreover, since its only operational primitives are logical Pauli measurements, PBC constitutes a natural scheme for fault tolerance via quantum low-density parity-check codes~\cite{He2025extractors, Zhang2025time, Cowtan2026parallel, Webster2026pinnacle, Litinski2019GoS, Kan2025SPARO, Yoder2025tour}. This establishes its long-term viability for quantum computation.

Since PBC is driven exclusively by measurements, it fits within the broader framework of measurement-based quantum computation~(MBQC)~\cite{gottesman1999demonstrating,raussendorf2001one,nielsen2003quantum,Leung2004unify,miller2016hierarchy,bartolucci2023fusion}. Despite the probabilistic nature of quantum measurements, deterministic computation can be recovered in these models by choosing upcoming measurements based on the outcomes of previous ones. In one-way quantum computing (1WQC), the measurement calculus provides a systematic framework for studying these dependencies~\cite{danos2007measurement}, while flow and its generalizations relate the sequence of adaptive corrections to the graph specifying the entangled resource state~\cite{danos2006determinism,browne2007generalized}. Explicit correction schemes have also been developed for computation on hypergraph states~\cite{gachechiladze2019changing} and with fusion measurements~\cite{de2019pauli,de2024fusion}. In contrast, a comparable characterization of adaptivity and entanglement in PBC has been missing.

In this Letter, we fully close this gap by providing a normal form (NF) for PBC that identifies its adaptive corrections as local Clifford phase gates applied to a precomputed reference sequence of measurements.
This reveals a surprisingly regular structure in PBC, not only of adaptivity, but also of entanglement, since the local nature of the corrections preserves the entanglement of the reference measurement sequence. For a pictorial summary of our results, see Fig.~\ref{fig: summary fig}.

From a theoretical perspective, our NF establishes a structural connection between PBC and 1WQC: Up to Pauli factors, the adaptive corrections in PBC are local Clifford symmetries of the magic resource state. This parallels 1WQC~\cite{raussendorf2001one}, where stabilizer symmetries of the graph state control the corrections and constrain the adaptive structure~\cite{raussendorf2016symmetry}. Our NF also establishes an explicit duality between these two models through which the Pauli observables of PBC's reference measurement sequence specify the resource state of the dual one-way computation. We therefore provide a unified view of PBC and 1WQC, as done previously for other measurement-based models~\cite{Leung2004unify, Perdrix2005unify, ChildsLeung2005unify}.

One advantage of PBC is that all Pauli measurements commute and our results provide a clear prescription for its adaptive measurement sequence. We leverage this for fault-tolerant applications, discussing methods for optimizing the distribution of a large PBC over a set of small(er) quantum processing units~(QPUs) and determining the magic-state generation rate needed for time-optimal implementation of specific computations. We also show how to efficiently detect and remove unnecessary magic-state qubits in a PBC, allowing us to sample from the desired distribution using fewer magic states. Previous methods could only accomplish this in time exponential in the original number of magic states input to the PBC. In contrast, our results allow us to do it in polynomial time.

Given current hardware limitations, the optimization of quantum circuits is a crucial task. Our NF allows us to perform ahead-of-time~(AOT) circuit compilation with PBC, demonstrating reductions in CNOT count and quantum memory size of 40--75\,\% and 20--60\%, respectively, for random quantum circuits (RQCs) as compared to state-of-the-art results, and 20--70\,\% and 10--60\,\% for a suite of benchmark circuits. 

\emph{Normal form.}---We consider an $n$-qubit Clifford+$T$ circuit containing $t$ $T$ gates, where $T \coloneqq \operatorname{diag}(1,e^{i\pi/4})$, and ending with $w$ {computational basis} readout measurements. Each $T$ gate can be replaced by a measurement gadget that injects an auxiliary magic state $\ket{T} \coloneqq T\ket{+}$ and applies a Clifford correction $S^m$, with $S\coloneqq \operatorname{diag}(1,i)$ and $m\in \F{}$ the gadget's measurement outcome~\cite{bravyi2005universal}.
Gadget qubits and their corresponding measurements are indexed by $1,\dots,t\,,$ while readouts are indexed by $t+1,\dots,t+w$.
The outcome of the $j$th measurement is denoted $m_j$.

The corresponding PBC is obtained by back-propagating the $t+w$ measurements through the adaptive Clifford circuit and applying the reduction procedure of Ref.~\cite{Bravyi2016trading}. As a result, a PBC can be represented using a binary tree, where each node is labeled by a multiqubit Pauli observable and each edge by the corresponding binary measurement outcome~[Fig.~\ref{fig: summary fig}(a)]. Each branch of the tree, i.e., root-to-leaf path, consists of $\ell$ gadget measurements $Q_{j_1},\dots,Q_{j_\ell}\,,$
followed by $\ell'$ readout measurements $Q_{j_{\ell+1}},\dots,Q_{j_{\ell+\ell'}}$, where $\ell+\ell'\leq t$. All measurements pairwise commute and are performed in succession on the $t$-qubit magic state.
Their measurement outcomes form the binary string $\str b = m_{j_1}\cdots m_{j_{\ell+\ell'}}$, which labels the branch $\mathcal B_{\str b}$.
Any remaining outcomes are simulated or inferred classically.
Since gadget outcomes trigger the adaptive $\SGate$ corrections, we write $Q_{j_r}(\str b_{\PREC j_r})$, where $\str b_{{\PREC  j_r}} := m_{j_1}\cdots m_{\min(j_r-1,\ell)}$ denotes this causal dependence.
We now present our main result.

\begin{theorem}[PBC normal form]
\label{thm:pbc-normal-form}
The Pauli observables measured in a branch $\mathcal B_{\str b}$ of a PBC are related to those of the all-zero reference branch $\mathcal B_{\str 0}$ by
\[
Q_{j_{r}}(\str b_{\PREC j_r})
=
\tilde \SGate(\str b_{\PREC j_r})\,Q_{j_{r}}(\str 0_{\PREC j_r})\,\tilde \SGate^\dagger(\str b_{\PREC j_r})\,,
\]
where $r = 1,\dots, \ell+\ell'$ and $\tilde S(\str b_{\PREC j_r}):=\prod_{u=1}^{\min{r-1,\ell}}S_{j_u}^{m_{j_u}}$.
\end{theorem}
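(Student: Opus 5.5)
The plan is to track how the adaptive $\SGate$ corrections of the gadgetized circuit propagate through the two stages that produce the PBC: back-propagation of the measurements, and the reduction procedure of Ref.~\cite{Bravyi2016trading}.

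\emph{Step 1 (back-propagation).} Write the gadgetized circuit as $U = C_t S_{a_t}^{m_t} \cdots C_1 S_{a_1}^{m_1} C_0$. Here $C_k$ are the fixed Clifford layers and $a_k$ is the data qubit that receives the correction of gadget $k$. The gadget measurement $k$ is a $Z$ measurement on magic qubit $k$ after a CNOT. The key observation is that the correction $S^{m_k}$ on data qubit $a_k$ can be moved back through the gadget CNOT onto magic qubit $k$. This works because $\ket{T}$ and the $Z_k$ measurement both commute with $S_k$, and the CNOT maps $Z_{a_k}\otimes Z_k$ consistently. More precisely, after the $Z_k$ measurement with outcome $m_k$, the data qubit is in a state where applying $S_{a_k}$ is equivalent, up to the post-measurement projection, to applying $S_k$ before the CNOT.

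The cleaner route is to treat the injection gadget as a teleportation: applying $S_{a_k}$ after the gadget equals conjugating the whole later part of the computation by $S_k$ on the (already measured) magic qubit's input. Concretely, I would show that every observable back-propagated from the future of gadget $k$ through $S_{a_k}^{m_k}$ and the gadget equals the observable obtained with $m_k=0$, conjugated by $S_k^{m_k}$. The argument for this is that both pictures back-propagate to the same operator on the $n+t$ qubits, restricted to the support compatible with $Z_k$ having been measured.

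\emph{Step 2 (induction on $r$).} Assume the statement for earlier measurements. Since $S_{j_u}$ acts on distinct magic qubits, these factors commute with each other and with the Clifford layers acting before the gadget. The conjugations therefore accumulate into $\tilde\SGate(\str b_{\PREC j_r})=\prod_{u<r} S_{j_u}^{m_{j_u}}$. For $r>\ell$ (readouts), all $\ell$ gadget outcomes precede the readout, which explains the $\min(r-1,\ell)$.

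\emph{Step 3 (reduction procedure).} I must check that the Bravyi--Smith--Smolin reduction is covariant under the conjugation by $\tilde\SGate$. That reduction steps through the measurements in order. A measurement is dropped if it commutes with all previous ones and lies in the group they generate, with its outcome inferred. A measurement that anticommutes with an earlier one is replaced by a Clifford $V=(P+Q)/\sqrt2$ whose outcome is random. Otherwise the data-qubit $Z$ components are stripped, since data qubits start in $\ket{0}$.

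Conjugation by $\tilde\SGate$ preserves commutation relations, group membership and $Z$-support on data qubits, since $\tilde\SGate$ acts only on magic qubits. Every branching decision is therefore identical in the two branches, and each produced operator is the conjugate of the reference one. Here I use that $S_k$ fixes $Z_k$, so the conjugation stays consistent with the already-fixed measured qubits. I also use that $S_j$ maps $X_j\mapsto Y_j$, so Paulis remain Paulis, with signs absorbed.

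The main obstacle I expect is Step 3, and in particular signs. The reduction's substitution $V$ and the inferred outcomes depend on signs of Paulis, and $S$ conjugation can flip signs relative to the reference. To handle this, I would verify the step explicitly for the anticommuting case, $\tilde\SGate V\tilde\SGate^\dagger=(\tilde\SGate P\tilde\SGate^\dagger+\tilde\SGate Q\tilde\SGate^\dagger)/\sqrt2$. I would show that the resulting measured observable is exactly the conjugate, including its sign, by induction on the reduction steps.
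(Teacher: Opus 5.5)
Your overall architecture matches the paper's: turn each correction into a local $S_k$ on a magic wire, then show the reduction is covariant because $\tilde S$ fixes every computational $Z_c$, so the same cases, pivots and retained indices occur on every branch. However, the key claim of Step~1 is false as an operator identity, and Step~3 relies on it.

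Take one data qubit $a$, the gadget $\mathrm{CNOT}_{a\to k}$, the correction $S_a^{m}$, and a later observable $X_a$. Back-propagation gives $X_aX_k$ for $m=0$ and $-Y_aX_k$ for $m=1$, whereas $S_k(X_aX_k)S_k^\dagger=X_aY_k$. In fact $-Y_aX_k=-(X_aY_k)(Z_aZ_k)$, so the raw observable is the conjugate \emph{times the gadget observable} $P_k=Z_aZ_k$.

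In general, by the paper's single-/multi-flip lemmas, the raw branch observable equals the $\tilde S$-conjugated reference observable multiplied by a product of strictly earlier gadget observables of the same branch. Those factors reduce to signs only on the post-measurement eigenspace, which is presumably what ``restricted to the support compatible with $Z_k$ having been measured'' is reaching for. The teleportation identity of Fig.~\ref{fig:correction-teleportation} is a statement about the state for a fixed outcome: the correction becomes $(S^\dagger)^m$ on the input $\lvert T\rangle$. It is not an equality of Heisenberg-picture observables. If you feed the raw rows into the reduction, the output rows are the conjugated reference rows times earlier reduced rows, not the exact conjugates the theorem asserts. So ``each produced operator is the conjugate of the reference one'' does not follow.

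The missing step is a justified replacement of the raw observables by the exact representatives $\widetilde\Sigma(\mathbf g)P_i(\mathbf 0)\widetilde\Sigma^\dagger(\mathbf g)$. The paper gets these directly from the acausal picture: all corrections are moved to the input, so the branch is the correction-free circuit acting on $\widetilde\Sigma^\dagger(\mathbf g)\lvert T\rangle^{\otimes t}$. Causality then comes from the $i$th reference observable carrying only $I$ or $Z$ on gadget wires $k\geq i$. The Supplemental Material instead shows that raw and representative rows span the same causal prefixes, and that adding earlier processed rows (which lie in $K_C^\perp$) never changes whether a row needs computational elimination. Once you run the reduction on these representatives, $V(\mathbf b)=\tilde S V(\mathbf 0)\tilde S^\dagger$ holds exactly for matching simulated outcomes, because $\tilde S Z_c\tilde S^\dagger=Z_c$. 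The sign issue you single out as the main obstacle therefore resolves itself.

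You also silently drop corrections. Your back-propagation produces $S_k^{m_k}$ for every earlier gadget, including gadgets eliminated in the reduction whose outcomes are coin tosses, and later reduced observables can have $X$ or $Y$ support on those wires. The theorem's product over the retained indices $j_u$ only follows with the paper's convention that classically simulated gadget outcomes are fixed to the canonical value $0$.
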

To provide an intuitive proof sketch, each gadget correction $S^m$ can be acausally teleported onto the corresponding magic state, yielding $(S^{\dagger})^m\ket{T}$; see Fig.~\ref{fig:correction-teleportation}. Back-propagating the measurements through the resulting correction-free Clifford circuit gives the reference observables; transferring the phase corrections from the input state to the observables then gives the conjugation relation. Despite the acausal construction, each observable depends only on preceding outcomes. Crucially, the PBC reduction procedure preserves the conjugation relation and retains the same measurement indices $j_1,\dots,j_{\ell+\ell'}$ on every branch. The full proof is given in Appendix~B.
\begin{figure}[t]
\centering
\begin{tikzpicture} \node[scale=0.75]{
\begin{quantikz}[thin lines]
\lstick{$\ket{\psi}$}
    & \ctrl{1}
    & \qw
    & \gate{\SGate^{m}}
    & \rstick{$T\ket{\psi}$}
    \\
\lstick{$\ket{T}$}
    & \targ{}
    & \meter[label style={xshift=6mm,yshift=-3mm}]{m} 
    &\cwbend{-1}\setwiretype{n}
\end{quantikz}%
\quad $=$\quad
\begin{quantikz}[thin lines]
\lstick{$\ket{\psi}$}
    & \ctrl{1}
    & \qw
    & \rstick{$T\ket{\psi}$}\qw
    \\
\lstick{$(\SGate^\dagger)^m\ket{T}$}
    & \targ{}
    & \meter[label style={xshift=6mm,yshift=-3mm}]{m}
    & \setwiretype{n}\cw
\end{quantikz}
};
\end{tikzpicture}
\caption{\textbf{Acausal magic-state injection.} For a fixed outcome~$m$, the correction can be transferred to the magic state.}
\label{fig:correction-teleportation}
\end{figure}

We now show that symmetries of the magic state resource~\cite{heinrich2019robustness} can be used to simplify the binary PBC tree. Indeed, $SX\ket{T} = e^{i\pi/4} \ket{T}$, and $S$ and $SX$ have the same conjugation action on Pauli observables, up to an overall sign. 
The distinct measurement sequences therefore form the orbit of the reference measurement sequence under the group
\[
G = \langle S_{j_1},\dots,S_{j_{\ell}}\rangle/\langle Z_{j_1},\dots,Z_{j_{\ell}}\rangle \cong \mathbb F_2^\ell.
\]
As a simple example, if the reference branch is composed of diagonal Pauli observables, there is a single distinct measurement sequence in the orbit and the PBC is therefore nonadaptive. Furthermore, consecutive observables belong to the same adaptive layer precisely when the phase corrections generated by outcomes within the block leave its remaining observables unchanged. We thus introduce the NF PBC tree, a more structured graphical representation of a PBC where nodes are labeled by observables in the same adaptive layer and edges by their outcome strings. The resulting graph will generally not be a binary tree~[Fig.~\ref{fig: summary fig}(b)].

Since local Clifford unitaries generate the PBC tree, its entanglement structure is branch-invariant. This can be characterized precisely
when all qubits in the original circuit are measured. In that case, the corresponding PBC reference-branch projector has rank one (see SM~\ref{sec:pbc-reduction}) and it defines the pure stabilizer state resource of a dual, computation-specific 1WQC~\cite{raussendorf2003measurement,Vijayan+2024compilation,KaldenbachHeller2025mapping,Okay2024classical}.
\begin{proposition}
A $t$-qubit PBC with $t$ independent reference-branch generators admits a computation-specific 1WQC with resource state
\[
\lvert\Psi_0\rangle\langle\Psi_0\rvert
=
\prod_{r=1}^{t}
\frac{I+Q_{j_r}(\str0_{\PREC r})}{2}
\]
and adaptive single-qubit measurements in the $T/ST$ bases.
\label{pro:1wqc-dual}
\end{proposition}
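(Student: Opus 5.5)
The plan is to build the dual one-way computation directly from the branch projectors of the PBC. Then I would show, using Theorem~\ref{thm:pbc-normal-form}, that its outcome statistics coincide with those of the PBC up to a classical relabeling of outcomes.

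\emph{Step 1: the resource state.} The reference observables $Q_{j_1}(\str 0),\dots,Q_{j_t}(\str 0)$ pairwise commute, are independent, and none equals $-I$. They therefore generate a stabilizer group of full rank $t$ on $t$ qubits, and the product of the projectors $(I+Q_{j_r}(\str 0_{\PREC r}))/2$ is the rank-one projector $|\Psi_0\rangle\langle\Psi_0|$ onto a stabilizer state. By independence there is, for every sign string $\str s\in\F{t}$, a Pauli $P_{\str s}$ that anticommutes with exactly the generators marked by $\str s$. Hence
\[
\prod_r \frac{I+(-1)^{s_r}Q_{j_r}(\str 0)}{2}=P_{\str s}|\Psi_0\rangle\langle\Psi_0|P_{\str s}^\dagger .
\]

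\emph{Step 2: PBC branch probabilities.} In branch $\mathcal B_{\str b}$ all $t$ qubits are measured, so $\ell+\ell'=t$. The probability of $\str b$ is $\langle T|^{\otimes t}\Pi_{\str b}|T\rangle^{\otimes t}$, where $\Pi_{\str b}$ is the ordered product of the projectors $(I+(-1)^{m_{j_r}}Q_{j_r}(\str b_{\PREC j_r}))/2$. I would apply Theorem~\ref{thm:pbc-normal-form} to each factor, substituting $Q_{j_r}(\str b_{\PREC j_r})=\tilde S\,Q_{j_r}(\str 0)\,\tilde S^\dagger$. The cumulative corrections $\tilde S(\str b_{\PREC j_r})$ differ from factor to factor. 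However, all $S$ gates commute, and $S_{j_u}$ for $u\ge r$ does not alter measurements already made. So the causal order lets me pull the full correction $\tilde S(\str b)=\prod_u S_{j_u}^{m_{j_u}}$ out of the whole product. This is the step I expect to be the main obstacle: I need to check carefully that the conjugation by later $S$ factors leaves earlier projectors invariant on the relevant support. If that fails, I would instead argue sequentially, conditioning on outcomes one at a time exactly as in the proof of Theorem~\ref{thm:pbc-normal-form}. Granting the factorization, Step 1 gives
\[
p(\str b)=\bigl|\langle\Psi_0|P_{\str b}^\dagger\,\tilde S(\str b)^\dagger|T\rangle^{\otimes t}\bigr|^2 .
\]

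\emph{Step 3: reading this as a 1WQC.} This overlap is a product of single-qubit amplitudes $\langle\Psi_0|\bigotimes_r P_r (S^\dagger)^{m_{j_r}}|T\rangle$. It is therefore, up to normalization, the probability that single-qubit measurements on $|\Psi_0\rangle$ return the outcome string corresponding to $\str b$. On qubit $j_r$ the measurement basis is $\{|T\rangle, Z|T\rangle\}$ when the relevant previous outcome is $0$. When it is $1$, the basis is the $S^\dagger$-rotated one, which equals the $ST$ basis up to a Pauli $Z/X$ relabeling, using $SX|T\rangle\propto|T\rangle$. The Pauli byproducts $P_{\str b}$ map these basis states to each other and are absorbed as outcome flips.

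\emph{Step 4: adaptivity and normalization.} The basis choice on a given qubit depends only on previously obtained outcomes, so the 1WQC is adaptive in the same causal order as the PBC, with roles `time-reversed'. Summing over the $2^t$ sign patterns reproduces the normalization and accounts for the factor of $2^t$ between the PBC and 1WQC probabilities, so the two output distributions agree after classical post-processing.
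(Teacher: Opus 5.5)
\textbf{There is a genuine gap in Steps 3--4.}

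Your Steps 1--2 are sound, and the worry you flag in Step 2 is unfounded. Each reduced gadget generator $Q_{j_r}(\mathbf 0)$ acts as $Z$ on qubit $j_r$ and trivially on all later gadget qubits. The later factors $S_{j_u}$, $u\ge r$, therefore commute with it, and $\tilde S(\mathbf b)$ factors out of the whole product. This is the causal form the paper uses in Appendix~B.

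The gap is in how you read the overlap as a 1WQC.
\begin{itemize}
\item The exponent of $S^\dagger$ on qubit $j_r$ is $m_{j_r}$. That is the outcome of the very measurement you want to read off that qubit, not a previous one.
\item Only the $Z$-components of $P_{\mathbf b}$ act as outcome flips. An $X$ or $Y$ component exchanges the $T$ and $ST$ bases.
\item $P_{\mathbf b}$ is fixed only by its commutation pattern with all $t$ generators, so its local components generally depend on later outcomes too.
\end{itemize}
Nothing you wrote shows that the basis on a qubit is determined by earlier outcomes, or even that distinct $\mathbf b$ give distinct 1WQC records.

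With a careless choice the construction fails outright. Take $Q_{j_1}(\mathbf 0)=Z_1$, $Q_{j_2}(\mathbf 0)=Z_2$; this arises from a single-qubit circuit with two $T$ gates and a readout. Choose $P_{\mathbf b}=X_1^{b_1}X_2^{b_2}$. Because $XS^\dagger|T\rangle\propto|T\rangle$, all four $\mathbf b$ produce the same product vector $|T\rangle|T\rangle$. The overlaps still reproduce $p(\mathbf b)=1/4$, but no measurement pattern lies behind them.

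Matching overlaps is therefore not enough. You must show that the product vectors are the leaves of a causal adaptive measurement tree, and hence form an orthonormal basis. The ``factor of $2^t$'' in Step 4 is spurious: rank-one projectors give the probabilities directly.

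\textbf{The missing idea is a specific, structured choice of byproduct.} The paper takes $Y^{\otimes\mathbf e}$ on the magic register. Then $Y^{e}S^{-m}|T\rangle\simeq Z^{m}S^{e+m}|T\rangle$, so the 1WQC outcome is $o=m$ and the basis is $\alpha=e+m$. In the unreduced tableau, each gadget generator has $Z$ on its own qubit and only $X$'s on earlier magic qubits. The sign-flip conditions therefore read $\mathbf g=K\mathbf e+M\mathbf f$ with $K=I+L$ unit lower triangular. Hence $\boldsymbol{\alpha}=(K^{-1}+I)\mathbf o+K^{-1}M\mathbf f$, where $K^{-1}+I$ is strictly lower triangular; this is the causality statement.

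Your Step 3 also conflates measurement indices with qubits:
\begin{itemize}
\item For a readout, $j_r>t$ is not a qubit and carries no $S$ correction.
\item The $t-\ell$ magic qubits whose gadget outcomes were simulated are never addressed.
\end{itemize}
As a result you have no encoding of the readout outcomes.

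The paper resolves this in three steps:
\begin{enumerate}
\item It works on the unreduced $(t+n)$-qubit resource, where the 1WQC record is the full gadget record $\mathbf o=\mathbf g$ plus the computational $Z$ outcomes $\mathbf f$.
\item It chooses $Z_C^{H\mathbf f}$ so that $J=N+BK^{-1}M$ is invertible. This gives the invertible record map $\mathbf y=BK^{-1}\mathbf g+J\mathbf f$.
\item Only then does it eliminate the computational qubits classically, absorbing the resulting Pauli $F(\mathbf f)$ into the local $T/ST$ bases to arrive at $|\Psi_0\rangle$.
\end{enumerate}
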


Connections between PBC and 1WQC have previously been observed in Refs.~\cite{Peres2025reducing,Lovsted2026blind}.
Our result, however, goes further by constructing an efficiently computable and invertible, probability-preserving bijection between the PBC and 1WQC outcome records. See Appendix~C and SM~\ref{sec:sm-dual-one-way-qc} for further details. The central observation is that exchanging the roles of preparation and measurement gives the PBC reduction procedure a time-reversed interpretation: The stabilizer inputs removed by the PBC reduction become local Pauli measurements on a stabilizer resource state (Appendix~A).
Removing these inputs therefore corresponds to the removal of Pauli measurements in 1WQC~\cite{raussendorf2003measurement}, with the two procedures eliminating the same stabilizer part of the computation.
\begin{figure}
    \centering
    \includegraphics[width=0.85\linewidth]{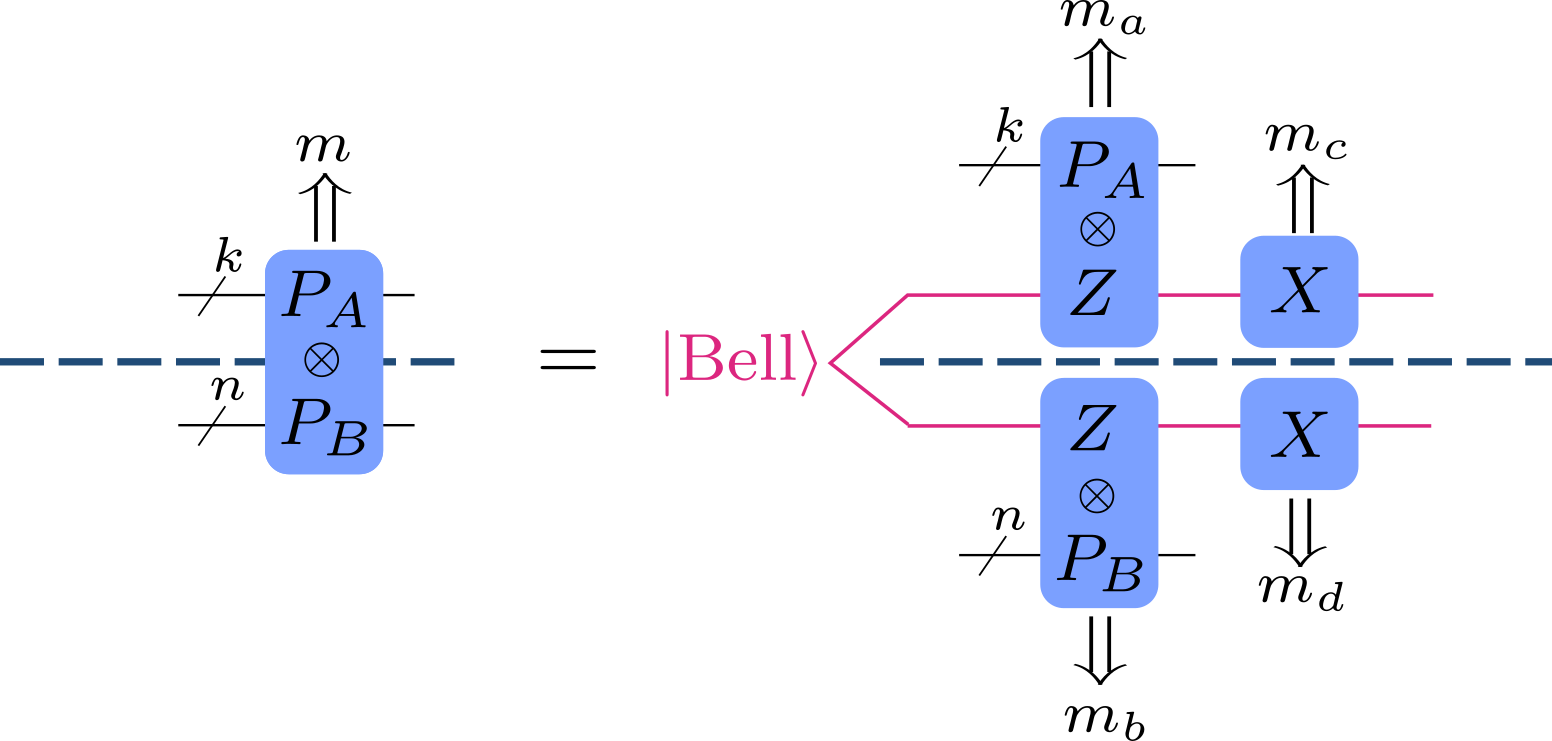}
    \caption{\textbf{Distributing a Pauli measurement across two processors.} Provided there is preshared entanglement between the two QPUs in the form of Bell pairs, each measurement in a PBC can be distributed, with the corresponding outcome obtained as $m = m_a \oplus m_b$. When $m_c \oplus m_d = 1\,,$ a Pauli correction needs to be applied (see SM~\ref{sm: distributed qc}).}
    \label{fig: Distributed QC}
\end{figure}

\emph{Fault-tolerant architecture design.---}The tensor-product structure of the reference branch identifies when the PBC can be naturally distributed among smaller QPUs requiring only \emph{classical} communication between them. Absent this, the future of fault-tolerant quantum computing will likely rely on distributing computations across such QPUs communicating via \emph{quantum} resources (see e.g.~\cite{He2025extractors}). In PBC, this involves splitting multiqubit Pauli measurements into smaller instances performed on separate QPUs. One possible strategy is depicted in Fig.~\ref{fig: Distributed QC}. Crucially, using just the observables of the reference branch, we can propose and optimize different distribution strategies for PBC. One option consists of first analyzing the support of all the Pauli measurements and using that information to minimize the number of measurements that are distributed across multiple QPUs. This strategy minimizes Bell-pair usage. One can also seek alternative distribution schemes that instead minimize the depth of the distributed computation.  Other space-time trade-offs arising when the QPUs share one or multiple Bell pairs at a time can also be investigated (SM~\ref{sm: distributed qc}).

Since, in PBC, all measurements commute, for architectures where Pauli measurements with overlapping support can be measured simultaneously~\cite{Cowtan2026parallel, Zhang2025time}, the PBC depth coincides exactly with the adaptive depth obtained with the NF.
In general, this implies that the PBC depth can be much smaller than the total number of Pauli measurements, contrary to prior expectations~\cite{Bravyi2016trading, Beverland2022surface, Chamberland2022universal}. This is illustrated in Fig.~\ref{fig: Depth-Injection} (left), where RQCs with hundreds of $T$ gates are transformed into PBCs with adaptive depth significantly smaller than $t$. Moreover, except for special instances (e.g.,~nonadaptive PBCs), not all magic states need to be available immediately [cf.~Fig.~\ref{fig: summary fig}(c)], with the rate of magic state injection from layer to layer being approximately constant~[Fig.~\ref{fig: Depth-Injection}(right)]. This rate can be lowered, if needed, at the expense of serializing the measurements. Thus, the NF can inform optimal space-time implementations of quantum algorithms.
\begin{figure}
    \centering
    \includegraphics[width=0.97\linewidth]{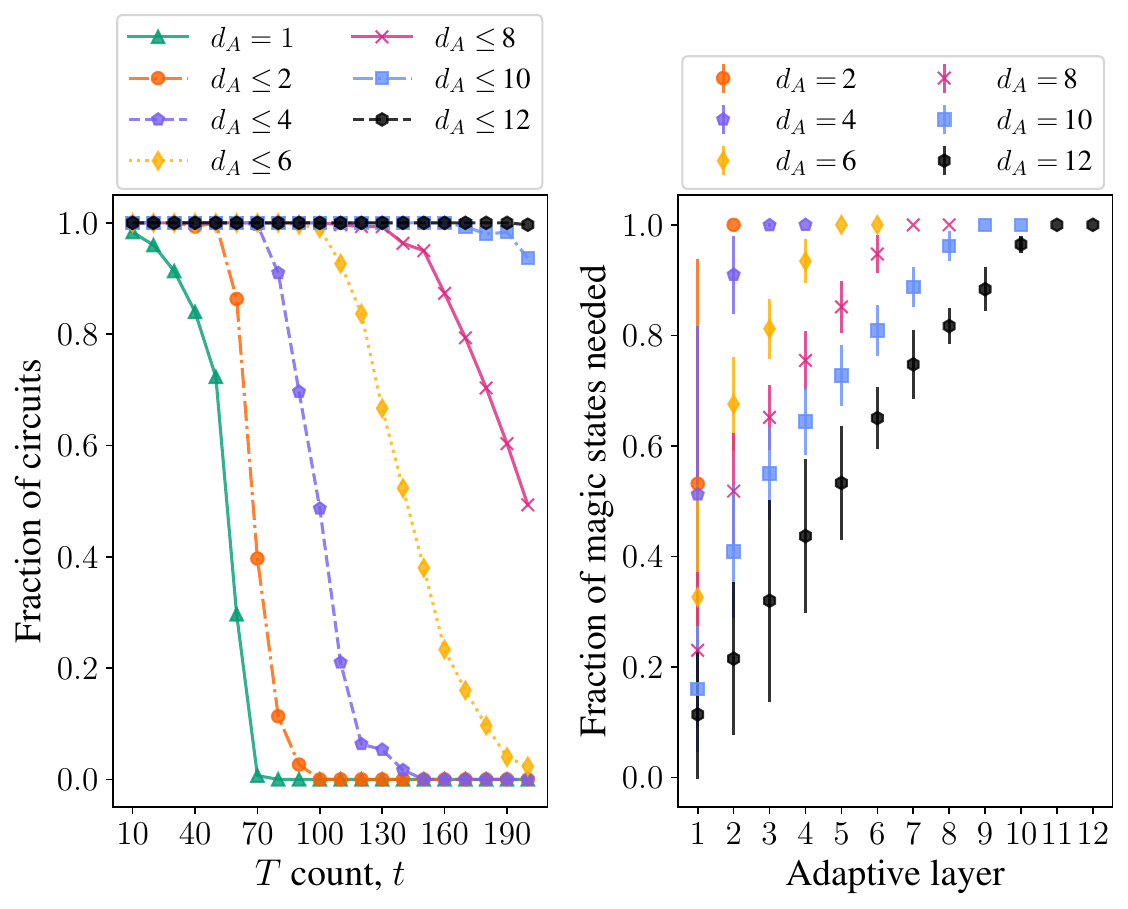}
    \caption{\textbf{PBC features associated with a suite of RQCs.} The circuits were generated as described in Ref.~\cite{Peres2023quantum}, acting on a $7\times 7$ qubit grid, with a varying $T$ count, $t=\{10,\dots,200\}$, and 100 circuits for each $t$. As $t$ increases, so does the average adaptive depth (left). For PBCs with larger adaptive depth, the fraction of magic qubits that need to be supplied at each layer is lower (right). Uncertainty bars correspond to 2 standard deviations.}
    \label{fig: Depth-Injection}
\end{figure}

\emph{Free virtual qubits.---}Our NF detects that \emph{some} qubits are strictly unnecessary to perform the desired computation. For a given PBC, a `\emph{free virtual qubit}' is a magic-state qubit that is only ever measured in the~$Z$ basis.

\begin{proposition}[Free virtual qubit detection]\label{prop: Free virtual qubits}
    Given a computation in the form of a Clifford+$T$ circuit, the free virtual qubits in the corresponding PBC can be efficiently identified.
\end{proposition}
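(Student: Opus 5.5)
The plan is to reduce detection of free virtual qubits to inspecting the reference branch $\mathcal B_{\str 0}$ alone, using Theorem~\ref{thm:pbc-normal-form}. A magic qubit $k$ is free virtual if, on every branch $\mathcal B_{\str b}$ and for every measured observable $Q_{j_r}(\str b_{\PREC j_r})$, the local Pauli factor on qubit $k$ is $I$ or $Z$. Enumerating all $2^\ell$ branches would be exponential. The theorem, however, gives $Q_{j_r}(\str b_{\PREC j_r}) = \tilde S(\str b_{\PREC j_r})\,Q_{j_r}(\str 0_{\PREC j_r})\,\tilde S^\dagger(\str b_{\PREC j_r})$, where $\tilde S$ is a tensor product of single-qubit phase gates $S$ or $I$. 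Conjugation by $S$ maps $I\mapsto I$, $Z\mapsto Z$, $X\mapsto \pm Y$ and $Y\mapsto \pm X$. Hence the restriction of each observable to qubit $k$ lies in $\{I,Z\}$ on an arbitrary branch exactly when it does on the reference branch. This gives the criterion: qubit $k$ is free virtual if and only if every reference-branch observable $Q_{j_r}(\str 0_{\PREC j_r})$, for $r=1,\dots,\ell+\ell'$, acts on qubit $k$ as $I$ or $Z$ (up to sign).

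The first step is to argue that the reference branch can be computed in polynomial time. Following the proof sketch of Theorem~\ref{thm:pbc-normal-form}, I would set every gadget outcome to $0$, so that no $S$ corrections are applied. I would then back-propagate the $t+w$ measurements through the resulting fixed Clifford circuit using the stabilizer tableau formalism, at cost $\mathrm{poly}(n,t,w)$. Next I would run the reduction procedure of Ref.~\cite{Bravyi2016trading} along the all-zero branch. This procedure consists of commutativity checks, Gaussian elimination over $\mathbb F_2$ to decide whether each new observable is independent of earlier ones, and Clifford rewrites of anticommuting observables. Each of these steps is polynomial in $t$ and $n$, so a single branch of the tree is obtained in polynomial time.

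The second step is to scan the $\ell+\ell'\le t$ reference observables and mark every qubit on which some observable has an $X$ or $Y$ component. This takes $O(t^2)$ time. The unmarked qubits are exactly the free virtual qubits, by the equivalence established above.

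The main obstacle is the "only if" direction together with the handling of the branch-dependent index sets. One must ensure that every branch measures the same indices $j_1,\dots,j_{\ell+\ell'}$, so that the reference branch covers every observable that appears anywhere in the tree. The paper states that the reduction preserves the conjugation relation and the index set, so I would invoke that explicitly. One must also address signs and the classically inferred outcomes, which are not measured on the quantum device and so do not affect the definition. Finally, I would note that the criterion does not depend on the particular reduction ordering, because the indices are fixed.
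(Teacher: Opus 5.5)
Your proposal is correct and follows essentially the same argument as the paper: compute the reference branch in polynomial time, check that every reference observable acts as $I$ or $Z$ on the qubit, and use Theorem~\ref{thm:pbc-normal-form} (conjugation by $S$ fixes $I$ and $Z$, and the index set is the same on every branch) to extend this to all branches. The ``only if'' direction you flag as the main obstacle is immediate, since the reference branch is itself one of the branches.
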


Previously, no method for this task was known besides enumerating the entire exponentially large PBC tree. Our NF implies that inspecting a single branch of the tree suffices, thus offering an exponential improvement.

Since~$\ket{T} \coloneqq T\ket{+}$ and~$T$ commutes with~$Z\,,$ free virtual qubits can be treated as stabilizer inputs to a PBC and thus removed using conventional techniques~\cite{Bravyi2016trading}.
\begin{proposition}[Free virtual qubit removal]\label{prop: free qubit removal}
    Given a $t$-qubit PBC with $k$ free virtual qubits identified, it is efficient to find a new PBC on $(t-k)$ qubits that samples from the same output distribution.
\end{proposition}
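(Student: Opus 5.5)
We need to prove Proposition on free virtual qubit removal: given $k$ free virtual qubits identified, efficiently construct a $(t-k)$-qubit PBC sampling from the same distribution.

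The plan is to reduce to the standard stabilizer-input elimination of Ref.~\cite{Bravyi2016trading}. Let $F\subseteq\{1,\dots,t\}$ be the set of $k$ free virtual qubits, identified via Proposition~\ref{prop: Free virtual qubits}. By definition, on every branch of the PBC tree each observable $Q_{j_r}(\str b_{\PREC j_r})$ acts on every qubit $f\in F$ as $I$ or $Z$. By Theorem~\ref{thm:pbc-normal-form}, it suffices to check this on the reference branch. The conjugating operator $\tilde S(\str b_{\PREC j_r})$ is a product of local phase gates, and $S$ maps $I$ and $Z$ to themselves. So the restriction of each observable to $F$ is branch-independent, and in particular all branches agree on it.

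Since $\ket{T}=T\ket{+}$ and $T_f$ commutes with every observable whose support on $f$ is in $\{I,Z\}$, the $T$ gates on $F$ can be commuted through the whole adaptive sequence. Concretely, we write the input state as $\bigotimes_{f\in F}T_f\ket{+}_f\otimes\ket{T}^{\otimes(t-k)}$. Each measurement projector $(I\pm Q)/2$ commutes with $T_F:=\prod_{f\in F}T_f$. The outcome probabilities along any branch are $\|\Pi_{\str b}\,T_F\ket{+}^{\otimes k}\ket{T}^{\otimes(t-k)}\|^2$, and $T_F$ is unitary, so these equal $\|\Pi_{\str b}\ket{+}^{\otimes k}\ket{T}^{\otimes(t-k)}\|^2$. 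Hence the free qubits may be replaced by stabilizer inputs $\ket{+}$ without changing the output distribution. This also holds conditionally at every node, so the adaptive structure is unaffected.

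Next, we eliminate the $\ket{+}_f$ inputs one at a time, using the reduction procedure of Ref.~\cite{Bravyi2016trading} (recalled in SM~\ref{sec:pbc-reduction}). For an observable $Q$ that anticommutes with $X_f$, the outcome is uniformly random. In that case we sample it classically and replace it by the Clifford $V=(X_f\pm Q)/\sqrt2$, which maps the post-measurement state back to one with $\ket{+}_f$. We then conjugate all later observables by $V$. If $Q$ commutes with $X_f$, then $Q$ acts on $f$ as $I$ or $X$, and we replace $X_f$ by $+1$. Measurements made dependent by this step have their outcomes inferred classically. Each step costs polynomial time, involving $O(t)$ Pauli conjugations on $O(t)$ qubits. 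After $k$ steps, the qubits in $F$ are traced out and we obtain a $(t-k)$-qubit PBC with the same distribution.

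The main obstacle is the adaptivity. The removal has to be valid on every branch, yet we cannot enumerate the tree. I would handle this by applying the elimination to the reference branch only and carrying the corrections symbolically. By Theorem~\ref{thm:pbc-normal-form}, other branches differ by conjugation by $S_{j_u}^{m_{j_u}}$ with $j_u\notin F$ acting nontrivially; for $j_u\in F$, the correction $S_{j_u}$ acts trivially on the relevant $I/Z$ parts, up to signs. These corrections commute with the eliminating Cliffords built from $X_f$, which act on $f\in F$ only through $X_f$. It then remains to show that the reduced branches are again related by local $S$ conjugations on the remaining qubits. The sign bookkeeping for the randomly sampled outcomes is a linear function of earlier outcomes and can be tracked classically.
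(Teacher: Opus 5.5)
Your argument is correct and follows the paper's own proof, in more detail: rewrite each free qubit's input as $T\ket{+}$, commute $T$ through all observables (they act on it only as $I$ or $Z$), then remove the resulting $\ket{+}$ inputs by the standard stabilizer-input elimination of Bravyi, Smith, and Smolin, which your $V=(X_f\pm Q)/\sqrt{2}$ step makes explicit. Your closing paragraph is unnecessary, because the elimination can simply be run along the realized branch in polynomial time, and its claim that the $S_{j_u}$ corrections commute with $V$ is false in general, since the pivot $Q$ can act as $X$ or $Y$ on non-free gadget qubits; what actually holds is the covariance $V(\str b)=\tilde S' V(\str 0)\tilde S'^{\dagger}$, where $\tilde S'$ is the part of the corrections acting on non-free qubits (which fixes $X_f$), mirroring the paper's branch-invariant reduction lemma.
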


Free virtual qubits can thus be removed at no cost beyond efficient classical processing. In contrast, conventional qubit virtualization~\cite{Bravyi2016trading, Peres2023quantum} and related techniques~\cite{PengHarrowOW2020simulating, Piveteau2024circuit} incur an exponential sampling overhead and produce expectation value estimates (rather than output samples). The proofs of Propositions~\ref{prop: Free virtual qubits} and~\ref{prop: free qubit removal} can be found in Appendix~D.

We applied our methods to three different sets of circuits to gain some intuition for the amount of free virtual qubits generally present in computations. For $49$-qubit RQCs with $t=10,\dots,200$, an average 9\% reduction in the number of magic states was obtained; for $25$-qubit RQCs with $t=4,\dots,22\,,$ the reduction is between 15--30\%; finally, for the benchmark circuits discussed in SM~\ref{sm: More on AOT compilation performance}, we see reductions between 40 and 80\%.

\emph{Circuit compilation.---}Mid-circuit measurements are very powerful primitives for several quantum computing tasks~\cite{CaoEisert2026mmtdriven, Grilo2026power, Lu2022mmt, Smith2023deterministic}. This motivates the development of compilation strategies that capitalize on this power and transform nonadaptive quantum circuits into \emph{adaptive} ones~\cite{DeCross+2023qubitreuse, Fang+2026, KaldenbachHeller2025mapping, Baumer+2024qft, Vijayan+2024compilation, Peres2023quantum, Peres2025reducing}. The PBC compiler was one of the first among these, demonstrating significant reductions in circuit gate counts~\cite{Peres2023quantum, Peres2025reducing}, but was limited to just-in-time~(JIT) compilation coupled to the computation itself. This required quantum hardware to remain coherent during the intricate classical side-processing~\cite{Peres2023quantum, Peres2025reducing}, constituting a challenge for current quantum technology.

\begin{figure}
    \centering
    \includegraphics[width=0.99\linewidth]{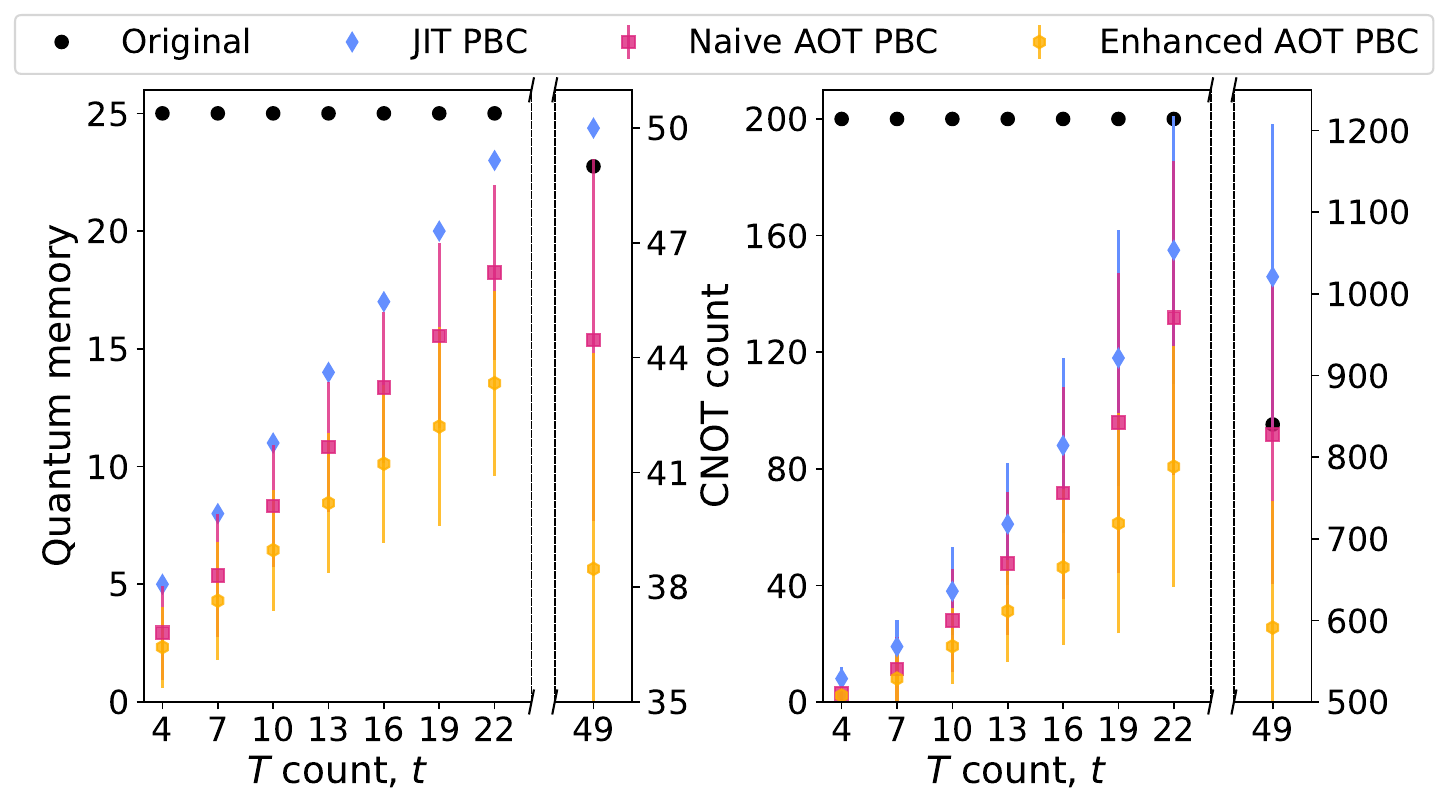}
    \caption{\textbf{Performance of our AOT PBC compiler for RQCs.} We take the RQCs analyzed in Refs.~\cite{Peres2023quantum,Peres2025reducing}. The $x$-axis reports the $T$ count of the original circuits input to the compilers. The naive and enhanced AOT PBC compilers consistently outperform the JIT compiler, leading to circuits with fewer qubits~(left) and CNOT gates~(right). Uncertainty bars correspond to 2 standard deviations.}
    \label{fig: AOT compiler RQCs}
\end{figure}

Our NF eliminates this drawback, allowing the PBC compiler to fully specify the compiled, \emph{adaptive} circuit AOT. Our techniques rely exclusively on the reference-branch Pauli observables to synthesize suitable Clifford subcircuits which are then interleaved with the local $S^{\dagger}$ corrections.
A simple `compute-uncompute' strategy, which we term \emph{naive} AOT PBC compilation, is depicted in Fig.~\ref{fig: summary fig}(d): Each block of parallelizable Pauli measurements is mapped to a sequence of computational basis measurements via conjugation by a Clifford subcircuit.

We now lay out a more sophisticated strategy that we call \emph{enhanced} AOT PBC compilation. Suppose that, for the $i$th PBC layer, we can find a Clifford subcircuit, $W_i$, so that $W_i^{\dagger}$ commutes with every ensuing layer of $S^{\dagger}$ gates. We can then push $W^{\dagger}_i$ to the end of the circuit---suitably updating the upcoming Pauli observables---and then drop it. This strategy effectively lowers the quantum memory because updated Pauli measurements have trivial support on qubits that have already been measured [Fig.~\ref{fig: summary fig}(e)]. Moreover, by avoiding the direct implementation of $W_i^\dagger$, it also achieves smaller CNOT gate counts and quantum depth than naive AOT compilation. Implementation details can be found in SM~\ref{sm: DJ} and~\ref{sm: Removing uncompute block}.

Fig.~\ref{fig: AOT compiler RQCs} shows that both the naive and enhanced AOT PBC strategies beat the JIT PBC compiler, which already outperformed competing state-of-the-art compilers~\cite{KissWet2020reduceT, Wetering2020zxcalculus, Staudacher2023} for two-qubit gate count~\cite{Peres2025reducing}.
In naive AOT PBC compilation, the observed reduction in quantum memory comes exclusively from free virtual qubit detection and removal, with further reductions achieved by the enhanced AOT strategy. Beyond RQCs, for several benchmark circuits, we also consistently outperform the state-of-the-art ZX-compiler~\cite{KissWet2020reduceT, Wetering2020zxcalculus, Staudacher2023} in terms of memory size, CNOT counts, and quantum depth (see SM~\ref{sm: More on AOT compilation performance}).

\emph{Outlook.---}In this Letter, we have characterized the adaptive and entanglement structure of PBC, resulting in applications in both the near and long term. Given PBC's past success in the context of classical simulation~\cite{Bravyi2016trading, Zurel2020hidden, Zurel2024simulating}, we anticipate that our NF may play a complementary role in exploring the boundaries of quantum advantage by identifying hidden structures in families of circuits that render them efficiently simulable classically (see SM~\ref{sm: DJ}).

Our methods for circuit optimization lead to reduced quantum computational resources and facilitate near-term quantum hardware implementation. This can likely be pushed even further by integration with other optimization strategies. For example, computation-specific 1WQC has recently been investigated as a means of reducing resource costs~\cite{Vijayan+2024compilation,kaldenbach2025mapping}. Our PBC/1WQC duality provides a possible route for utilizing these graph-based optimizations within PBC. Similarly, our NF allows the greedy algorithm from~\cite{Peres2025reducing} to be run AOT, suggesting additional performance gains.

PBC has several appealing features for the fault-tolerant regime, such as dispensing with stabilizer-state qubits and having only commuting Pauli measurements. Our results provide a clear prescription for the sequence of Pauli measurements, enabling a more precise account of the quantum resources needed for implementing quantum algorithms under specific fault-tolerance architectures.
For instance, our methods clarify the rate at which magic states are needed for time-optimal implementation, allowing us to adjust the number of magic state distillation factories per computational block. The NF therefore enables a systematic analysis of PBC, facilitating a more direct study of the domains in which it is advantageous over alternative schemes for fault-tolerant computing~\cite{Litinski2019GoS}.

\emph{Data availability statement.---}A code implementation of Theorem~\ref{thm:pbc-normal-form} and Proposition~\ref{prop: free qubit removal} is publicly available on GitHub~\cite{GitHubPACTS} and was used to produce all the numerical results presented in this Letter.

\emph{Acknowledgments.---}S.I. thanks R. Raussendorf for discussions. F.C.R.P. acknowledges support from Ayuda Consolidación CNS2023-145392 (MICIU\slash AEI\slash 10.13039\slash 501100011033, NextGenerationEU\slash PRTR) and the European Research Council (ERC) via the Starting grant q-shadows (101117138).
S.I. acknowledges support from Nieders{\"{a}}chsisches Ministerium f{\"{u}}r Wissenschaft und Kultur and the Alexander von Humboldt
Foundation. A.B., D.L., and J.N. acknowledge the support of the Natural Sciences and Engineering Research Council of Canada (NSERC) (ALLRP 569582-21), and of the Canada Research Chairs Program (CRC-2023-00173).
All authors acknowledge funding from EU HORIZON RIA FoQaCiA GA 101070558. AI tools were used to assist with code debugging and final manuscript review. All scientific contributions are original work by the authors.

\emph{Appendix A: PBC reduction procedure.---}For a Clifford+$T$ circuit $C$ with $t$ $T$ gates and $w$ computational-basis readout measurements, let $C'$ be the corresponding gadgetized circuit on $n+t$ qubits with adaptive $S$ corrections. We write a fixed complete outcome record as $\str m := (\str g,\str y)$ with gadget outcomes $\str g := m_1\cdots m_t$ and readout outcomes $\str y = m_{t+1}\cdots m_{t+w}$.
For the $i$th measurement, we denote its adaptive dependence on gadget outcomes by $\str g_{\PREC i}:=m_1\cdots m_{\min(i-1,t)}$; the first gadget being non-adaptive.
Back-propagation of gadget and readout observables $Z_1,\dots,Z_{t+w}$ through the adaptive circuit gives independent, commuting Pauli observables $P_1(\str g_{\PREC 1}),\ldots,P_{t+w}(\str g_{\PREC t+w})$. When the outcome dependence is clear, we suppress it and write simply $P_i$.

Organize the back-propagated Pauli observables into an initial stabilizer tableau $\mathcal T_0 := (P_1,\dots,P_{t+w})$
and let $\mathcal Z := \{Z_{t+i}\}_{i=1}^n$.
Write $\mathcal T_i = (P_1^{(i)},\dots,P_{t+w}^{(i)})$ for the tableau after processing its first $i$ rows with $P_h^{(0)} = P_h$.
The reduction procedure maintains two invariants: All rows commute and every processed row commutes with every element of $\mathcal Z$.
At stage $i=1,\dots,t+w$, process $P_i^{(i-1)}$ as follows:
\begin{itemize}
    \item[I.] $P_i^{(i-1)}$ commutes with all elements in $\mathcal Z$. No tableau update $\mathcal T_{i} = \mathcal T_{i-1}$.
    \item[II.] $P_i^{(i-1)}$ anticommutes with some computational $Z_{c_i} \in \mathcal{Z}$; if there are multiple, choose the one with the least index. Sample $m_i$ uniformly and update $\mathcal T_i  = V_i \mathcal T_{i-1} V_i^\dagger$ and $P_h^{(i)} := V_i P_h^{(i-1)} V_i^\dagger$.
\end{itemize}
Here, 
$V_i=[Z_{c_i}+(-1)^{m_i}P_i^{(i-1)}]/\sqrt2$
is a Clifford unitary satisfying
\[
P_h^{(i)}
=
\begin{cases}
P_h^{(i-1)}               
&:~     [P_h^{(i-1)},Z_{c_i}] = 0,\\
(-1)^{m_i} Z_{c_i}P_i^{(i-1)}P_h^{(i-1)}    
&:~    \{P_h^{(i-1)},Z_{c_i}\} = 0.
\end{cases}
\]
The sampled values in Case II match their entries in $\str m$.
For the normal-form proof below, we use the canonical value zero for classically simulated gadget outcomes.
Correctness of this procedure and comparison to Ref.~\cite{Bravyi2016trading} can be found in SM~\ref{sec:pbc-reduction}.

We can interpret the rows of the tableau $\mathcal T_{i-1}$ in dual fashion as stabilizers of a quantum resource state.
The tableau update in the anticommuting case above corresponds precisely to the stabilizer measurement update for a local computational $Z_{c_i}$ measurement.
To see this, we recall the tableau update of Ref.~\cite{AarosonGottesman2004improved}: All non-pivot rows, i.e., $h\neq i$, that anticommute with $Z_{c_i}$ are multiplied by $P_i^{(i-1)}\,,$ then the pivot row $P_i^{(i-1)}$ is replaced by $(-1)^{m_i}Z_{c_i}$.
The two resulting tableaus differ only by multiplication of the new generator $(-1)^{m_i}Z_{c_i}$ and thus generate the same stabilizer group.
This identifies the anticommuting PBC reduction steps with local computational-$Z$ measurements on the dual resource state.

Either way, this procedure performs Gaussian elimination on the $X$ block of the $n$-qubit register of the tableau.
Therefore, all updated observables have the final form $P_i'=Q_i\otimes\widehat Z_i$, where $Q_i$ acts on the magic-state register and $\widehat Z_i$ is a product of computational $Z$ operators.
Thus, in PBC, $\widehat Z_i$ acts trivially on $\ket{0}^{\otimes n}$, so we restrict to the auxiliary register.
The $Q_i$ commute but may become dependent under this restriction.
We extract an ordered subsequence of $\ell$ gadget observables $Q_{j_1},\ldots,Q_{j_\ell}$ and $\ell'$ readout observables $Q_{j_{\ell+1}},\ldots,Q_{j_{\ell+\ell'}}$ that give the quantum measurements on the $t$-qubit processor.
All other outcomes are simulated or inferred classically and retained to reconstruct the circuit readouts.

\emph{Appendix B: Proof of normal form.---}Fix a complete outcome record $\str m$. Figure~\ref{fig:correction-teleportation} allows all phase-gate corrections to be moved to the input magic states.
Defining $\widetilde \Sigma(\str g)=\bigotimes_{i=1}^tS_i^{m_i}$, this produces
$\widetilde \Sigma^\dagger(\str g)\ket{T}^{\otimes t}$ and leaves the correction-free Clifford circuit associated with the zero branch. Back-propagating each measurement $Z_i$, with $i=1,\dots,t+w$, yields the Pauli $P_i(\str 0_{\PREC i})$.
Moving the $\widetilde \Sigma(\str g)$ rotation from the input state onto the back-propagated observables then gives $\widetilde P_i(\str g)=\widetilde \Sigma(\str g)P_i(\str 0_{\PREC i})\widetilde \Sigma^\dagger(\str g)$.
This representation is causal: The $i$th zero-branch gadget observable has the form
\begin{equation}
P_i(\str0)
=
R_{i}\otimes Z\otimes I^{\otimes(t-i)}\otimes\widehat R_i ,
\label{eq:end-matter-Pauli}    
\end{equation}
where $R_{i}$ acts only on the preceding $i-1$ gadget qubits and $\widehat R_i$ on the computational register.
Since local phase gates commute with local $I$ and $Z$ Paulis, the actions of $\widetilde \Sigma(\str g)$ and its causal restriction $\widetilde \Sigma(\str g_{\PREC i}) := \prod_{j=1}^{\min(i-1,t)}S_j^{m_j}$ on $P_i(\str 0_{\PREC i})$ are identical.
Thus, we have $\widetilde P_i(\str g ) = \widetilde P_i(\str g_{\PREC i}) := \widetilde \Sigma(\str g_{\PREC i})P_i(\str 0_{\PREC i})\widetilde \Sigma^\dagger(\str g_{\PREC i})$, which depends only on earlier outcomes. The same result also holds for readout measurements.

We now show that the reduction procedure is branch-invariant.
Suppose the current tableau observables are related by
$\widetilde P_h^{(i-1)}(\str g_{\PREC h}) = \widetilde \Sigma(\str g)P_h^{(i-1)}(\str 0_{\PREC h})\widetilde \Sigma^\dagger(\str g)$,
with $h=1,\dots,t+w$.
Since $\widetilde \Sigma(\str g)$ fixes every computational $Z_{c}$, we have that $\widetilde P_i^{(i-1)}(\str g)$ anticommutes with $Z_c$ if and only if $P_i^{(i-1)}(\str 0)$ also does.
Thus, both branches enter the same reduction case. The commuting case is trivial.
In the anticommuting case, with matching simulated outcomes, the corresponding reduction Cliffords satisfy $V_i(\str g_{\PREC i})=\widetilde \Sigma(\str g)V_i(\str0_{\PREC i})\widetilde \Sigma^\dagger(\str g)$
so that
$\widetilde P_h^{(i)}(\str g_{\PREC h})
=
\widetilde \Sigma(\str g)
P_h^{(i)}(\str 0_{\PREC h})
\widetilde \Sigma^\dagger(\str g)$,
$h=1,\ldots,t+w$.

The final processed observables
$\widetilde P_h^{(t+w)}(\str g_{\PREC h}) = \widetilde \Sigma(\str g_{\PREC h})Q_h^{(t+w)}(\str 0_{\PREC h})\widetilde \Sigma^\dagger(\str g_{\PREC h})\otimes \widehat Z_{h}$ therefore preserve causality and act trivially on the circuit input $\ket{0}^{\otimes n}$. Clifford conjugation preserves Pauli dependence, so every branch retains the same indices after reduction to $t$ qubits. Fixing simulated gadget outcomes to zero identifies $\widetilde\Sigma(\str g_{\PREC j_r})$ with $\widetilde\SGate(\str b_{\PREC j_r})$ for every retained index $j_r$. This establishes the normal form.

\emph{Appendix C: Proof of Proposition~\ref{pro:1wqc-dual}.---}We assume that we measure all qubits in the original circuit, i.e., $w=n$. The reference branch projector is given by
\[
\Pi_{\str 0}
=
\prod_{i=1}^{t+n}
\frac{I+P_i(\str0_{\PREC i})}{2},
\qquad
\Pi_{\str0}=\ket{\mathcal R_{\str 0}}\!\bra{\mathcal R_{\str 0}},
\]
This is shown to be rank one in SM~\ref{sec:pbc-reduction}.
Introduce now the Pauli operators
\[
D(\str e(\str m), \str f(\str m)) \simeq Y_G^{\otimes \str e}\otimes X_C^{\otimes \str f}Z_C^{H\str f},
\]
where $G$ and $C$ distinguish the register of gadget and computational qubits, respectively.
Here $\simeq$ denotes equality up to phase, $\str e\in \mathbb{F}_2^t$, $\str f\in \mathbb{F}_2^n$, and $H\in \mathbb{F}_2^{n\times n}$ is a binary matrix.
The bit strings $\str e$ and $\str f$ are related to gadget and readout outcomes by an invertible linear map
\[
\str g = K\str e+M\str f,
\quad
\str y = B\str e+N\str f.
\]
The stabilizer tableau (see e.g.,~\cite{AarosonGottesman2004improved}) of the reference branch determines a unit lower-triangular matrix $K$ owing to the `triangular' form of the back-propagated Pauli observables in Eq.~\eqref{eq:end-matter-Pauli}.
The matrices $M$ and $N$ depend on the choice of $H$, which can be chosen so that $J=N+BK^{-1}M$ is invertible.
The resulting correspondence between frame coordinates and outcome records is therefore bijective.
The construction is given in SM~\ref{sec:sm-dual-one-way-qc}.

In particular, the operators corresponding to the record $\str m$ satisfy
\[
D(\str e,\str f)P_i(\str 0_{\PREC i})D^\dagger(\str e,\str f) = (-1)^{m_i}P_i(\str 0_{\PREC i})\,,
\]
for $i=1,\dots,n+t$.
Consequently, the projector
$D(\str e,\str f)\Pi_{\str 0}D^\dagger(\str e,\str f)$
is orthogonal to $\Pi_{\str 0}$, except when the outcome records are equal.

Let the initial PBC resource state be denoted $\ket{\Omega} := \ket{T}^{\otimes t}\otimes\ket{0}^{\otimes n}$.
Up to signs, we can choose the generators of the stabilizer group defining the branch projector to be
\[
\widetilde \Sigma(\str g)P_i(\str 0_{\PREC i})\widetilde \Sigma^\dagger(\str g),\quad i=1,\dots,t+n.
\]
The probability of obtaining outcome $\str m = (\str g,\str y)$ is
\[
\text{Pr}(\str g,\str y) = \lvert \bra{\Omega}\widetilde \Sigma(\str g)D(\str e,\str f)\ket{\mathcal R_{\str 0}} \rvert^2.
\]
The key idea is to show that we can reinterpret this probability as arising from a 1WQC with $\ket{\mathcal R_{\str 0}}$ as the resource. For each gadget qubit $i$, move the phase correction and Pauli onto the input $Y^{e_i}S^{-m_i}\ket T_i \simeq Z^{m_i}S^{e_i+m_i}\ket T_i$, with the exponent evaluated modulo two.
Thus, $\alpha_i=e_i+m_i$ selects the basis $B_{\alpha_i} := \{S^{\alpha_i}\ket T_i,ZS^{\alpha_i}\ket T_i\}$, and $o_i=m_i$ labels its outcome.

We measure the computational qubits in $Z$, obtaining the string $\str f$.
These measurements commute and have no dependence on prior measurement outcomes, but we can choose to measure them in the order reflecting the anticommuting PBC measurements.
By construction, the remaining computational measurements are deterministic, since we have a full tableau and all Pauli stabilizers are of the form $Q\otimes \widehat Z$.

The outcomes for the gadget qubits satisfy $\str o=\str g$ and we also have $\str g = K\str e+M\str f$, $\str y = B\str e+N\str f$.
Solving yields $\str e = K^{-1}(\str o + M\str f)$ and $\str y = BK^{-1}\str o + J\str f$.
The local basis choices are given by $\str \alpha = \str e + \str g$, which can now be explicitly written as
\[
\str \alpha = (K^{-1}+I)\str o+K^{-1}M\str f.
\]
Measure the gadget qubits using the gadget ordering. Since $K^{-1}+I$ is strictly lower triangular, the basis choice is causal.
The outcome map has inverse $\str o=\str g$, $\str f=J^{-1}(\str y+BK^{-1}\str g)$.
The overlap identity above equates the corresponding branch probabilities.
Since the complete circuit record is reversibly determined by the PBC outcomes and simulated bits, this establishes the desired bijection of complete records.

By Appendix~A, removing the computational sites leaves, up to known Pauli corrections, the stabilizer resource generated by the reduced reference observables:
\[
\ket{\Psi_0}\!\bra{\Psi_0}
=
\prod_{r=1}^{t}
\frac{I+Q_{j_r}(\str0_{\PREC  r})}{2}.
\]
The computational measurements can therefore be simulated classically, with the corresponding Pauli corrections incorporated into the remaining measurement bases and outcome labels.
Retaining $\str f$ preserves the probability-preserving bijection of complete records.

\emph{Appendix D: Proof of Propositions~\ref{prop: Free virtual qubits} and~\ref{prop: free qubit removal}.---}We now prove our results concerning free virtual qubits.

\begin{proof}[Proof of Proposition~\ref{prop: Free virtual qubits}]
    Given a computation in the form of a Clifford+$T$ quantum circuit, we can obtain the reference branch of the corresponding PBC. Detecting if a given magic-state qubit is a free virtual qubit reduces to checking that every Pauli observable in this reference branch acts with either $I$ or $Z$ on that qubit. Theorem~\ref{thm:pbc-normal-form} then guarantees that the same holds for every other branch of the PBC, since conjugation by the phase gate leaves these two Paulis unchanged. Finding the reference branch of the PBC and iterating over all qubits and Pauli measurements to detect free virtual qubits can be done in polynomial time.
\end{proof}

\begin{proof}[Proof of Proposition~\ref{prop: free qubit removal}]
    Suppose a free virtual qubit exists. Let us rewrite the initial state of that qubit as $\ket{T} \coloneqq T \ket{+}\,.$ Since, by definition, only $I$ or $Z$ act on it, the $T$ gate can effectively be conjugated past the whole PBC branch without modifying it. The qubit is then left in the stabilizer state $\ket{+}$. Since the conventional PBC reduction procedure naturally allows the efficient removal of any qubit initialized in a stabilizer state~\cite{Bravyi2016trading}, the proposition holds.
\end{proof}

\begingroup
\makeatletter
\let\savedbibsection\bibsection
\renewcommand{\bibsection}{%
  \begingroup
  \let\addcontentsline\@gobblethree
  \savedbibsection
  \endgroup
}
\endgroup

\newpage
\clearpage

\onecolumngrid
\begin{center}
    \noindent\textbf{Supplemental Material for: \emph{\papertitle}}
\end{center}

\setcounter{equation}{0}
\setcounter{figure}{0}
\setcounter{table}{0}
\setcounter{theorem}{0}
\makeatletter
\renewcommand{\thetable}{S\Roman{table}}
\counterwithout{equation}{section}
\renewcommand{\theequation}{S\arabic{equation}}
\renewcommand{\thefigure}{S\arabic{figure}}
\renewcommand{\bibnumfmt}[1]{[S#1]}
\renewcommand{\citenumfont}[1]{S#1}
\renewcommand{\thetheorem}{S\arabic{theorem}}
\renewcommand{\thelemma}{S\arabic{lemma}}
\renewcommand{\theproposition}{S\arabic{proposition}}
\renewcommand*{\thesection}{\Alph{section}}
\renewcommand*{\thesubsection}{\arabic{subsection}}
\renewcommand*{\thesubsubsection}{\alph{subsubsection}}
\titleformat{\section}[block]{\normalfont\bfseries}{Supplemental Material \thesection:}{0.5em}{\centering}

\twocolumngrid

\tableofcontents

\section{Stabilizer formalism}
\label{sec:stabilizer-formalism}

The $n$-qubit Hilbert space is the $n$-fold tensor product $\mathcal{H}_n := (\mathbb{C}^{2})^{\otimes n}$. We denote $n$-qubit Pauli operators by
\[
P_{u} := i^{u_{x}\cdot u_{z}}X^{u_{x,1}}Z^{u_{z,1}}\otimes \cdots \otimes X^{u_{x,n}}Z^{u_{z,n}} \,.
\]
Here $u = (u_{x},u_{z})$ is an element of the binary symplectic vector space $E_{n}:=\mathbb Z_{2}^{2n}$ with canonical basis $x_{1},\dots,x_{n},z_{1},\dots,z_{n}$. This space is equipped with a symplectic bilinear product $[u,v] = u_{x}\cdot v_{z}-u_{z}\cdot v_{x}$ evaluated over $\mathbb Z_{2}$. With this convention we have that $P_{u}P_{v}= (-1)^{[u,v]}P_{v}P_{u}$ so that $P_{u}$ and $P_{v}$ commute if and only if $[u,v]=0$ and anticommute otherwise.

The $n$-qubit Clifford group $\Cl{n}$ is the normalizer of $\mathcal P_{n}$ inside the unitary group, up to phases. The action of a Clifford unitary $U$ by conjugation on a Pauli operator therefore gives $UP_{u}U^{\dagger} = \pm P_{F_{U}(u)}$. Here $F_{U}:E_{n}\to E_{n}$ is the symplectic map induced by the Clifford $U$. Such maps are invertible, linear, and preserve the symplectic product, i.e., $[F_{U}(u),F_{U}(v)] = [u,v]$.
For completeness, we recall that a transvection $\tau_{v}$ is a symplectic map satisfying $\tau_{v}(u) = u+[u,v]v$. The symplectic version of the phase gate $\SGate_{i}$, denoted $\sigma_{i}$, acts as the transvection $\tau_{z_{i}}$ according to $\sigma_{i}(u) = u+[u,z_{i}]z_{i}$.

We say that a subspace $L \subset E_{n}$ is isotropic if all of its elements pairwise commute. An isotropic subspace $L $ of dimension $k$ corresponds, up to signs, to a stabilizer group $\mathcal S$ and we refer to the latter as also having dimension $k$. Maximal stabilizer groups, which define pure stabilizer states, therefore correspond to isotropic subspaces of dimension $k=n$. For an isotropic subspace $L\subset E_{n}$ of dimension $k$ its image $F(L)$ under a symplectic map $F$ is again an isotropic subspace of the same dimension.

\section{Pauli-based computation reduction procedure}
\label{sec:pbc-reduction}

\subsection{Back-propagation of measurements}
\label{sec:back-propagation}

Let us consider a Clifford+$T$ circuit $C$ and let us write its unitary part, before the readout measurements, as
\begin{equation*}
 U =\mathcal C_t\TGate_{\lambda_t}\mathcal C_{t-1}\cdots \mathcal C_1\TGate_{\lambda_1}\mathcal C_0 \,,
\end{equation*}
where each $\mathcal C_i$ is a fixed Clifford on the computational register and
$\lambda_{i}\in\{t+1,\ldots,t+n\}$ is the computational wire of the $i$th
$\TGate$ gate. We will refer to the $\mathcal C_{i}$ as computational Cliffords. In the adaptive Clifford circuit $C'$ obtained by magic-state injection, the $t$ magic-state qubits precede the $n$ computational qubits so that the $i$th gadget uses gadget wire $i$. We denote the gadget CNOTs as $\CNOT_{i} := \CNOT_{\lambda_{i},i}$ and write the corresponding adaptive $T$ gadget as
\begin{equation*}
A_i(m_i):=\SGate_{\lambda_{i}}^{m_i}\CNOT_i \,.
\end{equation*}
The layers $\mathcal C_i$ and the gadget CNOTs are nonadaptive, whereas the Clifford correction $\SGate_{\lambda_{i}}^{m_i}$ is applied after the measurement of gadget wire $i$.

Fix a provisional record $\str{m}=m_1\cdots m_{t+w}$. We write $\str g  = m_1\cdots m_t$ for the gadget outcomes and $\str y = m_{t+1}\cdots m_{t+w}$ for the readout outcomes. For fixed gadget outcomes $\str g$, we have that $C'$ is a fixed Clifford circuit.
The corresponding unitary layer is given by
\begin{equation*}
    U_{\textrm {out}}(\str g) :=\mathcal C_tA_t(m_t)\mathcal C_{t-1}\cdots \mathcal C_1A_1(m_1)\mathcal C_0 \,.
\end{equation*}
The unitary immediately after the $i$th gadget CNOT, and hence immediately before the corresponding $Z_i$ measurement and correction, is
\begin{equation*}
U_i(\str{g}) :={\CNOT}_i\mathcal C_{i-1}A_{i-1}(m_{i-1})\cdots  \mathcal C_1A_1(m_1)\mathcal C_0 \,,
\end{equation*}
for $i=1,\dots,t$, where dependence on the appropriate substring $\str g_{\PREC i} = m_{1}\cdots m_{\min(i-1,t)}$ is assumed. For convenience, we will drop the explicit dependence on the gadget record $\str g$ when it is evident from context. Additionally, for gadget indices $i < j$ we introduce the Clifford unitary $U_{j,i}$ defined by $U_{j,i} = U_{j}U_{i}^{\dagger}$. The symplectic maps induced by these Clifford unitaries are denoted $F_i:=F_{U_i}$ and $F_{\textrm {out}}:=F_{U_{\textrm {out}}}$ so that $G_{j,i} := F_{U_{j,i}} = F_{j}F_{i}^{-1}$.

For a complete outcome record $\str{m}\in\F{t+w}$ we have the Pauli observables
\begin{align*}
P_i & = U_i^\dagger Z_i U_i \,, \quad
         i=1,\ldots,t \,,\\
P_{t+r} & =U^{\dagger}_{\textrm {out}} Z_{t+r} U_{\textrm {out}} \,, \quad
         r=1,\ldots,w\,.
\end{align*}
Accordingly, their symplectic labels are given by
\begin{equation*}
\begin{aligned}
    p_i&=F_i^{-1}(z_i) \,,\\
    p_{t+r}&=F_{\mathrm{out}}^{-1}(z_{t+r}) \,.
\end{aligned}
\end{equation*}
The gadget observables are therefore not conjugated by one common Clifford. Independence and pairwise commutativity (proved below) therefore follow from the explicit structure of the gadgetized circuit.

\begin{lemma}[Pairwise commutativity]
\label{lem:sm-pairwise-commuting-t}
For a fixed gadget-outcome record, the labels
$p_1,\ldots,p_{t+w}$ generate an isotropic, i.e. are pairwise commuting.
\end{lemma}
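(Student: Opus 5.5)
The plan is to show that all the back-propagated observables, with the gadget record $\str g$ held fixed, are images of $Z$'s in one global Heisenberg picture, up to operators that commute with them. The key structural fact is that gadget wire $i$ is untouched after its measurement: $U_{j,i}$ for $j>i$ acts only on the computational register and on gadget wires $i+1,\dots,j$, never on wire $i$. That is because gadget $i$'s CNOT is $\CNOT_{\lambda_i,i}$ and every later Clifford acts either on computational wires or on later gadget wires. The correction $\SGate_{\lambda_i}^{m_i}$ also acts only on computational wires.

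First treat two gadget indices $i<j$. We have $P_j = U_j^\dagger Z_j U_j = U_i^\dagger (U_{j,i}^\dagger Z_j U_{j,i}) U_i$ and $P_i = U_i^\dagger Z_i U_i$. So $[P_i,P_j]=0$ if and only if $Z_i$ commutes with $U_{j,i}^\dagger Z_j U_{j,i}$. Since $U_{j,i}$ has trivial support on wire $i$, the Pauli $U_{j,i}^\dagger Z_j U_{j,i}$ acts as the identity on wire $i$, so it commutes with $Z_i$. In symplectic language, $[p_i,p_j] = [z_i, G_{j,i}^{-1}(z_j)] = 0$, because $G_{j,i}$ fixes $x_i,z_i$ and preserves the form. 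Here we use $[z_i,G^{-1}(z_j)] = [G(z_i),z_j] = [z_i,z_j]=0$.

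The case of a gadget index $i$ against a readout index $t+r$ is identical, with $U_{\mathrm{out}}U_i^\dagger$ in place of $U_{j,i}$. This operator consists of the correction on $\lambda_i$, the later computational Cliffords, and the later gadgets, none of which touch wire $i$. Readout against readout is immediate, since both are conjugated by the same $U_{\mathrm{out}}$ and $Z_{t+r},Z_{t+s}$ commute.

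The step needing the most care is the support claim itself. I would verify it by writing $U_{j,i}=\CNOT_j\,\mathcal C_{j-1}A_{j-1}\cdots \mathcal C_i\,\SGate_{\lambda_i}^{m_i}$, checking each factor's support explicitly, and noting that the first factor after $\CNOT_i$ is precisely the correction. One must also be careful that $U_{j,i}=U_jU_i^\dagger$ cancels the common prefix, including $\CNOT_i$. The outcome dependence plays no role, since the record is fixed throughout. Isotropy of the span then follows from pairwise commutativity together with bilinearity of the symplectic product.
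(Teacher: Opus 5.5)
Your proposal is correct and follows essentially the same argument as the paper. You factor $F_j = G_{j,i}F_i$ (and $F_{\mathrm{out}} = G_{\mathrm{out},i}F_i$), observe that the intervening circuit never acts on gadget wire $i$, reduce to $[z_i, G_{j,i}^{-1}(z_j)] = 0$, and treat readout pairs as immediate; the only cosmetic differences are that you close the last step via $G_{j,i}(z_i) = z_i$ and symplectic invariance rather than the paper's observation that $G_{j,i}^{-1}(z_j)$ has no $x_i$ component, and that you explicitly factor $U_{j,i}$ to verify the support claim, which the paper states only in words.
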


\begin{proof}
Fix two gadget indices $i<j$ and recall that $F_{j} = G_{j,i}F_{i}$. After gadget wire $i$ is measured, no subsequent gate acts on it since later phase corrections act on the computational register and later CNOTs target later gadget wires. Hence, $G_{j,i}$ acts trivially on wire $i$. Symplectic invariance gives
\begin{equation*}
\begin{aligned}
    [p_i,p_j]
    &=[F_i^{-1}(z_i),F_j^{-1}(z_j)]\\
    &=[z_i,F_iF_j^{-1}(z_j)]
      =[z_i,G_{j,i}^{-1}(z_j)]=0 \,.
\end{aligned}
\end{equation*}
The final equality holds because $G_{j,i}^{-1}(z_j)$ has no $x_i$ component. The same argument applies to a gadget label and a readout label.  Writing
$F_{\mathrm{out}}=G_{\mathrm{out},i}F_i$, the intervening circuit again does
not act on gadget wire $i$, and hence
\begin{equation*}
[p_i,p_{t+r}] =[z_i,G_{\mathrm{out},i}^{-1}(z_{t+r})]=0 \,.
\end{equation*}
Finally, all readout labels are back-propagated through the same fixed
Clifford circuit, so their pairwise commutativity is immediate. This proves all three types of pairings.
\end{proof}

\begin{lemma}[Independence]
\label{lem:sm-independence}
For every fixed gadget-outcome record, the labels
$p_1,\ldots,p_{t+w}$ are independent.  Consequently,
\begin{eqnarray}
L:= \langle p_1,\ldots,p_{t+w}\rangle  \subseteq E_{t+n}\notag
\end{eqnarray}
is isotropic and has dimension $t+w$.
\end{lemma}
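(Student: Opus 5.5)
Linear independence of $p_1,\ldots,p_{t+w}$ in $E_{t+n}$ will follow from a triangularity argument in a suitable coordinate system. Isotropy is already given by Lemma~\ref{lem:sm-pairwise-commuting-t}, so independence is the only thing to prove. Then $L$ is an isotropic subspace of dimension exactly $t+w$, which also gives $t+w\le t+n$, consistent with $w\le n$.

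I will first separate the readout labels from the gadget labels. The readout labels are $p_{t+r}=F_{\mathrm{out}}^{-1}(z_{t+r})$, where the $z_{t+r}$ are distinct basis vectors and $F_{\mathrm{out}}^{-1}$ is a linear bijection. Hence $p_{t+1},\dots,p_{t+w}$ are independent among themselves.

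For the gadget labels I will use the structure of $U_i$. The gadget wires $i,i+1,\dots,t$ are untouched before the $i$th CNOT, so the claim is that
\[
p_i=F_i^{-1}(z_i)=z_i+(\text{terms supported on gadget wires }<i\text{ and the computational register}).
\]
To see this, back-propagate $Z_i$ through $\CNOT_{\lambda_i,i}$, which gives $Z_iZ_{\lambda_i}$. Then back-propagate through the earlier gates. These act only on the computational register and on gadget wires $<i$; the earlier CNOTs target wires $<i$, and wire $i$ itself is never touched earlier. The $Z_i$ factor therefore survives unchanged, and the rest stays off wires $\ge i$. This is exactly the triangular form of Eq.~\eqref{eq:end-matter-Pauli}. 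Consequently, the $z$-components on the gadget wires form a unit lower-triangular matrix, and $p_1,\dots,p_t$ are independent.

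The mixed case, where a combination involves both gadget and readout labels, is the main obstacle. Suppose $\sum_{i\in A}p_i+\sum_{r\in B}p_{t+r}=0$ with $A$ nonempty, and let $i_{\max}$ be the largest index in $A$. Pair this relation with the vector $x_{i_{\max}}$ via the symplectic form. The gadget terms contribute $[p_{i_{\max}},x_{i_{\max}}]=1$ and zero for every other $i\in A$, since those have no $z_{i_{\max}}$ component. For the readout terms I will show $[p_{t+r},x_{i}]=0$ for every gadget wire $i$, using the argument of Lemma~\ref{lem:sm-pairwise-commuting-t}. Write
\[
[p_{t+r},x_i]=[z_{t+r},F_{\mathrm{out}}(x_i)].
\]
Any Pauli $X_i$ at the input propagates forward. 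It is untouched until $\CNOT_{\lambda_i,i}$, where wire $i$ is the target, so an $X$ on the target stays $X_i$ and is not copied onto the control. Afterwards wire $i$ is never acted on. Hence $F_{\mathrm{out}}(x_i)=x_i$, which has zero symplectic product with $z_{t+r}$ because $r$ labels a computational wire.

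The corrections $S^{m}_{\lambda_i}$ act on computational wires, so they do not disturb this. The pairing therefore yields $1=0$, a contradiction. So $A$ must be empty, and the readout case above then forces $B$ to be empty as well. This gives dimension $t+w$.
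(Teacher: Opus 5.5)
Your proof is correct and follows essentially the same route as the paper. The readout labels are independent because $F_{\mathrm{out}}^{-1}$ is invertible. The gadget labels are independent because each $p_i$ carries $z_i$ and has no other support on gadget wires $\geq i$. Mixing is ruled out because readout labels have no $z$-component on the gadget register. Your pairing with $x_{i_{\max}}$, together with $F_{\mathrm{out}}(x_i)=x_i$, is the forward (dual) form of the paper's observation that back-propagating a readout through a gadget CNOT can add an $x_i$ but never a $z_i$; it also treats arbitrary mixed combinations a bit more explicitly than the paper's wording does.
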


\begin{proof}
The proof is combinatorial. Begin by back-propagating an auxiliary $z_{i}$ associated with the $i$th gadget through the preceding CNOT controlled on some computational $c_{i}$, yielding the label $z_{i} + z_{c_{i}}$. Earlier computational Clifford gates and phase corrections act only on the computational register. Back-propagation through an earlier $h < i$ gadget CNOT can only add an
$x_{h}$ on that earlier gadget wire but cannot add an auxiliary $z_{h}$. Thus the gadget Pauli labels are independent. The readout Pauli labels $z_{t+1},\dots,z_{t+w}$ are transformed by the same symplectic map to $p_{t+1},\dots,p_{t+w}$ so that they remain independent. To see that they are each independent from the gadget Pauli labels, observe that when a readout is back-propagated through a gadget CNOT, it can add an auxiliary $x_{i}$, but never a $z_{i}$. Thus no $\mathbb Z_{2}$-linear combination of gadget labels can produce a readout label or vice versa.
\end{proof}

\subsection{Elimination of stabilizer qubits}
\label{sec:sm-reduction-t}

The resource savings of PBC arise from a reduction procedure that compresses the computation to the $t$-qubit magic-state register; see, e.g.,~\cite{Bravyi2016trading-sm, Yoganathan2019quantum-sm}. We describe this reduction using symplectic vector spaces.

For a fixed outcome record $\str m$, organize the ordered Pauli labels $p_1,\ldots,p_{t+w}$ into the rows of a stabilizer tableau.  We use the superscript $(i)$ to denote the tableau after $i$ rows have been processed. Thus
\[
    p_h^{(0)}:=p_h \,,
    \qquad h=1,\ldots,t+w \,,
\]
and define $K_{0} := \{0\}$. Immediately before stage $i+1$, where $i=0,\ldots,t+w-1$, let
$p_h^{(i)}$, $h=1,\ldots,t+w$, denote the current tableau rows and define
\begin{equation*}
K_i := \langle p_1^{(i)},\ldots,p_i^{(i)}\rangle \,.
\end{equation*}
The computational stabilizers, i.e., stabilizers of the $n$-qubit input state, span the isotropic subspace
\[
K_C := \left\langle z_{t+1} ,\ldots,z_{t+n}\right\rangle ,
\]
and the running isotropic space is $L_i:=K_C+K_i$.

The following are invariants of the reduction:
\begin{align}
&[p_h^{(i)},p_k^{(i)}]=0, &&h,k=1,\ldots,t+w,
\label{eq:tableau-orthogonality-invariant}\\
&K_i \subseteq K_C^\perp.
\label{eq:processed-space-in-KC-perp}
\end{align}
In particular, since $K_i$ is generated by the first $i$ (suitably updated) tableau rows, Eq.~\eqref{eq:tableau-orthogonality-invariant} implies
\begin{equation}
p_h^{(i)}\in K_i^\perp, \qquad h\geq i+1.
\label{eq:tail-orthogonal-to-processed}
\end{equation}
Thus, the current row $p_{i+1}^{(i)}$ commutes with every previously processed row.

These properties hold initially at $i=0$.  They also show why it suffices to consider anticommutation with the computational generators.  Suppose that the current row $p_{i+1}^{(i)}$ anticommutes with some $g\in L_i$. Writing $g = d+r$, where $d\in K_C$ and $r\in K_i$, we obtain
\[
1 =[p_{i+1}^{(i)},g]=[p_{i+1}^{(i)},d]+[p_{i+1}^{(i)},r]=[p_{i+1}^{(i)},d],
\]
where $[p_{i+1}^{(i)},r]=0$ owing to Eq.~\eqref{eq:tail-orthogonal-to-processed}. This implies that there exists a computational generator $z_c$ satisfying $[p_{i+1}^{(i)},z_c]=1$. The anticommuting case can therefore always be paired with a computational $z_c$.

At stage $i+1$, there are two cases.  If $p_{i+1}^{(i)}\in K_C^\perp$, no computational elimination is required.  Set $p_h^{(i+1)}:=p_h^{(i)}$, $h=1,\ldots,t+w$, so that
\begin{equation*}
K_{i+1} = K_i+\langle p_{i+1}^{(i)}\rangle\,.
\end{equation*}
If instead
\[
p_{i+1}^{(i)}\notin K_C^\perp \,,
\]
choose a computational generator $z_c$ satisfying $[p_{i+1}^{(i)},z_c]=1$ and define
\[
e_{i+1}:=p_{i+1}^{(i)}+z_c\,.
\]
Apply the corresponding transvection $\tau_{e_{i+1}}$ to the complete tableau. This is precisely the symplectic analogue of the random Clifford unitary used in the standard PBC reduction procedure~\cite{Bravyi2016trading-sm,Yoganathan2019quantum-sm}. The current row is mapped to
\[
p_{i+1}^{(i+1)} = \tau_{e_{i+1}}(p_{i+1}^{(i)}) = z_c \,,
\]
and therefore $K_{i+1} = K_i+\langle z_c\rangle $.

To verify that the invariants propagate, first note that the transvection
fixes every $r\in K_i$, since
\[
[r,e_{i+1}] =[r,p_{i+1}^{(i)}]+[r,z_c]=0+0 \,.
\]
The first term vanishes by Eq.~\eqref{eq:tail-orthogonal-to-processed}, while the second vanishes because $K_i\subseteq K_C^\perp$.  Moreover, the transvection is symplectic,
so it preserves pairwise orthogonality, and its new processed generator $z_c$ belongs to $K_C\subseteq K_C^\perp$.  Thus Eqs.~\eqref{eq:tableau-orthogonality-invariant} and
\eqref{eq:processed-space-in-KC-perp} hold at stage $i+1$.

After all $t+w$ stages, the final processed space is
\begin{equation*}
K_{t+w} = \langle p_1^{\prime},\ldots,p_{t+w}^{\prime}\rangle \subseteq K_C^\perp,
\end{equation*}
where have defined $p'_{i} := p_{i}^{(t+w)}$. Each corresponding row therefore admits the decomposition
\[
p_i'=q_i\oplus\widehat z_i,
\qquad q_i\in E_{t}, \quad \widehat z_i\in K_C.
\]
Although the $(t+n)$-qubit tableau rows $p_{i}'$ are all independent, the corresponding projected $t$-qubit Pauli elements $q_i$ need not be. In the quantum computation, the dependent Pauli observables are not measured and their outcomes are inferred via classical post-processing. We discuss this reduction further in the next section.

\subsection{Number of quantum measurements}

We now analyze the Pauli measurements performed on the $t$-qubit magic state register. Since
\[
K_{t+w}\subseteq K_C^\perp,
\]
every $p\in K_{t+w}$ admits a decomposition
\[
p=q\oplus\widehat z, \qquad q\in E_t,\quad \widehat z\in K_C.
\]
Define the linear projection
\[
\pi:K_{t+w}\longrightarrow E_t, \qquad \pi(q\oplus\widehat z):=q.
\]
Its kernel consists of those Pauli elements in $K_{t+w}$ supported only on the computational register, e.g., those involved in the anticommuting cases. Therefore
\begin{equation*}
\ker\pi=K_{t+w}\cap K_C.
\end{equation*}

The projected tableau rows span $\pi(K_{t+w})$, and one can extract an ordered basis
from these rows that gives the independent Pauli measurements performed on the
magic-state register. Without loss of generality, $\ell$ of these will correspond to gadget measurements and $\ell'$ to readout measurements.
 Using the rank-nullity theorem, we have
\begin{align}
\ell+\ell'
&:=
\dim\pi(K_{t+w}) \notag\\
&=
\dim K_{t+w}-\dim\ker\pi \notag\\
&=
t+w-\dim(K_{t+w}\cap K_C) \,.
\label{eq:sm-measurement-rank-t}
\end{align}

Below we bound the number of independent measurements performed in PBC. These bounds are equivalent to those established in~\cite{pashayan2022fast-sm}. Here, we recover them directly from the PBC reduction and identify these generators with the $\ell$ gadget and $\ell'$ readout measurements performed on the magic-state register.

\begin{proposition}[PBC measurement-count bounds]
\label{prop:sm-measurement-bound-t}
For an $n$-qubit Clifford+$\TGate$ circuit containing $t$ $\TGate$ gates and $w$ readout measurements, the corresponding PBC contains $\ell+\ell'$ independent measurements on the $t$ magic-state qubits, where
\begin{equation*}
\max\{0,t+w-n\} \leq \ell+\ell' \leq t \,.
\end{equation*}
If $w=n$, the PBC contains precisely $t$ independent measurements.
\end{proposition}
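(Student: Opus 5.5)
The plan rests on the rank–nullity identity Eq.~\eqref{eq:sm-measurement-rank-t}, $\ell+\ell'=t+w-\dim(K_{t+w}\cap K_C)$. Each bound then comes from estimating $\dim(K_{t+w}\cap K_C)$ or $\dim\pi(K_{t+w})$ directly.

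For the upper bound, note that $\pi(K_{t+w})$ is the image of an isotropic subspace under the projection onto the magic register. Since every $p\in K_{t+w}$ has the form $q\oplus\widehat z$ with $\widehat z\in K_C$, and elements of $K_C$ commute with everything in $K_C$, we get
\[
[p,p']=[q,q']+[\widehat z,\widehat z']=[q,q'] .
\]
Hence $\pi(K_{t+w})$ is an isotropic subspace of $E_t$. Isotropic subspaces of $E_t$ have dimension at most $t$, which gives $\ell+\ell'\le t$. Independence of the original labels (Lemma~\ref{lem:sm-independence}) is preserved by the transvections, so $\dim K_{t+w}=t+w$. This is needed for the identity to hold exactly.

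For the lower bound, $K_{t+w}\cap K_C\subseteq K_C$ and $\dim K_C=n$, so $\dim(K_{t+w}\cap K_C)\le n$. This gives $\ell+\ell'\ge t+w-n$. Nonnegativity is trivial.

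For the case $w=n$, the lower bound reads $t+n-n=t$ and matches the upper bound, so $\ell+\ell'=t$. The only step that needs care is the upper bound: one must check that projection preserves isotropy, which uses $K_{t+w}\subseteq K_C^\perp$ (invariant \eqref{eq:processed-space-in-KC-perp}). Once that invariant is in hand, every step is immediate.
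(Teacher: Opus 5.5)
Your proposal is correct and follows essentially the same route as the paper. The lower bound comes from $\dim(K_{t+w}\cap K_C)\le\dim K_C=n$ in the rank--nullity identity, and the upper bound from $\pi(K_{t+w})$ being an isotropic subspace of $E_t$. You also spell out points the paper leaves implicit: why the projection stays isotropic (via $K_{t+w}\subseteq K_C^\perp$ and the splitting of the symplectic form across registers), why $\dim K_{t+w}=t+w$ (the transvections are invertible), and the squeeze giving exactly $t$ measurements when $w=n$.
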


\begin{proof}
Since $K_{t+w}\cap K_C\subseteq K_C$ and $\dim K_C=n$, we have from Eq.~\eqref{eq:sm-measurement-rank-t} that
\[
\begin{aligned}
    \ell+\ell'
    &=
    t+w-\dim(K_{t+w}\cap K_C)\\
    &\geq t+w-n \,.
\end{aligned}
\]
Combining this with the trivial bound $\ell+\ell'\geq0$ proves
\[
    \ell+\ell'\geq\max\{0,t+w-n\} \,.
\]
For the upper bound, note that $\pi(K_{t+w})$ is an isotropic subspace in $E_{t}$ and an isotropic subspace of $E_t$ has dimension at most $t$, thus proving the result.
\end{proof}
Note that the count above is for the conventional PBC before any optional,
application-specific removal of free virtual qubits.

\section{Symplectic proof of the PBC normal form}\label{sec:sm-normal-form-t}

Back-propagating the $k$th gadget Pauli $z_{k}$ through the magic-state injection circuit $C'$ yields $p_{k} := F^{-1}_{k}(z_{k})$. This can be written as
\begin{equation*}
    p_k= r_k+z_k
\end{equation*}
where $r_k := F^{-1}_{k}(z_{c_{k}})$ and $c_{k}$ is the control qubit of the $k$th gadget. This follows since no gates after the gadget CNOT act on the $k$th gadget wire.

\begin{lemma}[Single-flip]\label{lem:sm-sibling-t}
Let two partial gadget records $\str g^{(0)}$ and $\str g^{(1)}$ agree except
at an earlier gadget index $k$, where their values are $0$ and $1$.  For any
later gadget or readout element $p_i$,
\begin{equation}\label{eq:sm-sibling-t}
p_i(\str g^{(1)}) =
\sigma_k\left(p_i(\str g^{(0)})\right ) +[p_i(\str g^{(0)}),z_k]p_k.
\end{equation}
\end{lemma}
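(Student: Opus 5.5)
The plan is a direct symplectic computation. We factor the two back-propagation maps at the moment of the $k$th gadget measurement, so that the two records differ by a single transvection $\tau_{z_{c_k}}$ on the computational wire $c_k=\lambda_k$. We then show that this transvection, pulled back through $F_k$, agrees with the combination $\sigma_k(\cdot)+[\cdot,z_k]p_k$ on every relevant label.

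\textbf{Step 1 (common factorization).} Since $\str g^{(0)}$ and $\str g^{(1)}$ agree everywhere except at index $k$, every gate preceding the $k$th gadget CNOT is the same in both circuits, and so is every gate following the correction $\SGate_{c_k}^{m_k}$. For a later gadget $i>k$ we write
\[
U_i(\str g^{(m_k)}) = W\,\SGate_{c_k}^{m_k}\,U_k ,
\]
where $U_k$ (which includes $\CNOT_k$) and $W$ do not depend on $m_k$. Readouts are handled identically with $U_{\rm out}$ in place of $U_i$. Define $q:=F_W^{-1}(z_i)$ (or $F_W^{-1}(z_{t+r})$ for a readout). We then have
\[
p_i(\str g^{(0)}) = F_k^{-1}(q),\qquad p_i(\str g^{(1)}) = F_k^{-1}\bigl(\tau_{z_{c_k}}(q)\bigr) = p_i(\str g^{(0)}) + [q,z_{c_k}]\,r_k ,
\]
where $r_k=F_k^{-1}(z_{c_k})$ as defined just before the lemma. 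This uses only linearity of $F_k^{-1}$ and the formula $\sigma_c=\tau_{z_c}$ for a phase gate. The sign convention, $\sigma$ versus $\sigma^{-1}$, is irrelevant over $\mathbb Z_2$.

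\textbf{Step 2 (support of $q$).} As in the proof of Lemma~\ref{lem:sm-pairwise-commuting-t}, no gate in $W$ acts on gadget wire $k$, and $z_i$ (respectively $z_{t+r}$) has no component on wire $k$. Hence $q$ has trivial $x_k$ and $z_k$ components.

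\textbf{Step 3 (matching the coefficient).} By symplectic invariance,
\[
[p_i(\str g^{(0)}),z_k]=[q,F_k(z_k)].
\]
To compute $F_k(z_k)$, note that the gates preceding $\CNOT_k$ never touch wire $k$, and $\CNOT_{c_k,k}$ maps $z_k\mapsto z_k+z_{c_k}$. Therefore $F_k(z_k)=z_k+z_{c_k}$, and by Step 2
\[
[p_i(\str g^{(0)}),z_k]=[q,z_{c_k}]=:c .
\]

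\textbf{Step 4 (assemble).} Expand the right-hand side of Eq.~\eqref{eq:sm-sibling-t} using $\sigma_k(u)=u+[u,z_k]z_k$ and $p_k=r_k+z_k$:
\[
p_i(\str g^{(0)}) + c\,z_k + c\,(r_k+z_k) = p_i(\str g^{(0)}) + c\,r_k ,
\]
which is exactly the expression for $p_i(\str g^{(1)})$ found in Step 1.

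The main obstacle is Step 1: pinning down precisely where the correction sits relative to $U_k$ and which wire it acts on, and making sure that $W$ is genuinely record-independent. This requires that the intermediate outcomes $m_{k+1},\dots$ coincide in both records, which the hypothesis guarantees, and that the correction acts after $\CNOT_k$. Once the factorization is fixed, Steps 2–4 are routine consequences of the wire-$k$ locality already exploited in Lemma~\ref{lem:sm-pairwise-commuting-t}.
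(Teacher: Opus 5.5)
Your proof is correct and follows the paper's argument. The same factorization at the $k$th gadget yields $p_i(\str g^{(1)}) = p_i(\str g^{(0)}) + [q,z_{c_k}]\,r_k$ (the paper's $v_i$ is your $q$), and the result then follows by substituting $r_k = p_k + z_k$. The only difference is cosmetic: you match the coefficient by computing $F_k(z_k)=z_k+z_{c_k}$ and using the wire-$k$ locality of $q$ directly, whereas the paper obtains $[p_i(\str g^{(0)}),r_k]=[p_i(\str g^{(0)}),z_k]$ from $[p_i,p_k]=0$ (Lemma~\ref{lem:sm-pairwise-commuting-t}), which rests on that same locality.
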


\begin{proof}
Let $\zeta_i$ denote the computational-basis Pauli measured at stage $i$ and define
\[
v_i:=G_{i,k}^{-1}(\zeta_i).
\]
The two records have the same circuit until the $k$th gadget.  Thus we have
\[
p_i(\str g^{(0)})=F_k^{-1}(v_i).
\]
Flipping the $k$th gadget outcome from zero to one inserts the phase correction on the computational wire $c_k$, whose symplectic action is
$\sigma_{c_{k}}:=\tau_{z_{c_k}}$.  Therefore
\begin{align}
p_i(\str g^{(1)}) 
&=F_k^{-1}\sigma_{c_{k}}(v_i) \notag\\
&=F_k^{-1}\!\left(v_i+[v_i,z_{c_k}]z_{c_k} \right) \notag\\
&=p_i(\str g^{(0)}) +[v_i,z_{c_k}]r_k,
\label{eq:sm-sibling-r-t}
\end{align}
where $r_k=F_k^{-1}(z_{c_k})$.

Symplectic invariance gives
\[
[v_i,z_{c_k}] = [p_i(\str g^{(0)}),r_k],
\]
and Lemma~\ref{lem:sm-pairwise-commuting-t} further implies
\[
[p_i(\str g^{(0)}),p_k]=0,
\]
where $p_k$ is the same in both branches.
Since $p_k=r_k+z_k$, it follows that
\[
[p_i(\str g^{(0)}),r_k] = [p_i(\str g^{(0)}),z_k].
\]
Substituting $r_k=p_k+z_k$ into Eq.~\eqref{eq:sm-sibling-r-t}, we obtain
\begin{eqnarray*}
p_i(\str g^{(1)})
&=&p_i(\str g^{(0)})+[p_i(\str g^{(0)}),z_k](p_k+z_k) \notag\\
&=&\sigma_k\!\left(p_i(\str g^{(0)})\right)+[p_i(\str g^{(0)}),z_k] p_k,
\end{eqnarray*}
where $\sigma_k=\tau_{z_k}$ is the symplectic action of the phase gate on gadget wire $k$. This is Eq.~\eqref{eq:sm-sibling-t}. Note that the same identity holds when the bit is flipped from one to zero.
\end{proof}

\begin{lemma}[Multi-flip]
\label{lem:sm-multiflip-t}
Let two gadget records $\str g$ and $\str g'$ differ exactly at the gadget
positions $k_1<\cdots<k_s < i$ for some later gadget or readout index $i$. Define $\str g^{(a)}$, $a=0,\dots,s$,
\begin{comment}
\begin{equation}
\str g^{(a)} := \str g+\sum_{v=1}^{a}e_{k_v}, \qquad a=0,\ldots,s
\label{eq:sm-intermediate-records-t}
\end{equation}
\end{comment}
as the records differing at positions $k_{1},\dots,k_{a}$ so that $\str g^{(0)}=\str g$ and $\str g^{(s)}=\str g'$. Introduce also
\begin{eqnarray*}
\alpha_a
&:=& [p_i(\str g^{(a-1)}),z_{k_a}],\\
v_a
&:=& p_{k_a}\left( \str g^{(a-1)}\right),\\
\end{eqnarray*}
Then
\begin{equation}
p_i(\str g')
= \sigma_{k_s}\cdots\sigma_{k_1} \left(p_i(\str g)\right) +\eta_i(\str g,\str g'),
\label{eq:sm-multiflip-t}
\end{equation}
where
\begin{equation}
\eta_i(\str g,\str g') = \sum_{a=1}^{s}\alpha_a v_a.
\label{eq:sm-multiflip-correction-t}
\end{equation}
In particular,
 $\eta_i(\str g,\str g')$ is a linear combination of gadget generators
preceding $p_i$, all evaluated on the final branch~$\str g'$.
\end{lemma}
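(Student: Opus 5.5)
We prove Lemma~\ref{lem:sm-multiflip-t} by induction on $s$, passing through the intermediate records $\str g^{(0)},\dots,\str g^{(s)}$ and applying Lemma~\ref{lem:sm-sibling-t} once per step. Consecutive records $\str g^{(a-1)}$ and $\str g^{(a)}$ differ only at the gadget index $k_a<i$, so the single-flip identity gives
\[
p_i(\str g^{(a)})=\sigma_{k_a}\bigl(p_i(\str g^{(a-1)})\bigr)+\alpha_a v_a,
\]
where $\alpha_a=[p_i(\str g^{(a-1)}),z_{k_a}]$ and $v_a=p_{k_a}(\str g^{(a-1)})$. These are exactly the quantities defined in the statement. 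The case $s=1$ is then Lemma~\ref{lem:sm-sibling-t} itself.

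For the inductive step, assume
\[
p_i(\str g^{(a-1)})=\sigma_{k_{a-1}}\cdots\sigma_{k_1}\bigl(p_i(\str g)\bigr)+\sum_{b<a}\alpha_b v_b .
\]
Substituting this into the single-flip identity and using linearity of $\sigma_{k_a}$ produces the leading term $\sigma_{k_a}\cdots\sigma_{k_1}(p_i(\str g))$. The error terms become $\sigma_{k_a}\bigl(\sum_{b<a}\alpha_b v_b\bigr)+\alpha_a v_a$.

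To reach the exact form~\eqref{eq:sm-multiflip-correction-t}, we must show $\sigma_{k_a}(v_b)=v_b$ for $b<a$. Since $\sigma_{k_a}=\tau_{z_{k_a}}$, this amounts to showing $[v_b,z_{k_a}]=0$, i.e.\ that $v_b$ has no $x_{k_a}$ component. We use the structure from the proof of Lemma~\ref{lem:sm-independence}. The observable $v_b=p_{k_b}(\str g^{(b-1)})$ is obtained by back-propagating $z_{k_b}$ through the circuit up to gadget $k_b$. That part of the circuit acts only on gadget wires with index at most $k_b<k_a$ and on the computational register. Hence $v_b$ has no support on wire $k_a$, and $\sigma_{k_a}$ fixes it. The induction closes and gives Eq.~\eqref{eq:sm-multiflip-t}.

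Finally, we check the claim that $\eta_i$ consists of gadget generators evaluated on the final branch $\str g'$. By Lemma~\ref{lem:sm-sibling-t}, $p_{k_a}$ depends only on the outcomes at indices below $k_a$. The records $\str g^{(a-1)}$ and $\str g'$ differ only at $k_a,\dots,k_s$, all of which are at least $k_a$. Therefore they agree on every index that $p_{k_a}$ depends on, so $p_{k_a}(\str g^{(a-1)})=p_{k_a}(\str g')$ and each $v_a$ is a generator of the final branch. Causality of the back-propagation is the key fact here; this is also the bookkeeping step most prone to error.

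The main obstacle is the commutation $\sigma_{k_a}(v_b)=v_b$. If the support argument is not transparent from the circuit, we would instead use the fact that $v_b$ is supported on wires $\le k_b$, which we would verify by back-propagation through the gadget CNOTs.
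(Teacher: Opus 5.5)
Your proposal is correct and follows the paper's proof essentially step for step. It runs an induction through the intermediate records $\str g^{(a)}$ via the single-flip lemma, notes that $v_b$ is supported only on gadget wires $\le k_b$ so that $\sigma_{k_a}(v_b)=v_b$, and argues by causality that $\str g^{(a-1)}$ agrees with $\str g'$ at every index below $k_a$. One minor correction: the fact that $p_{k_a}$ depends only on outcomes below $k_a$ comes from the definition of $U_{k_a}(\str g)$, which depends only on $\str g_{\PREC k_a}$, rather than from the single-flip lemma itself.
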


\begin{proof}
For measurement record $\str g^{(a)}$ we denote
\[
x_a:=p_i(\str g^{(a)}), \qquad a=0,\ldots,s.
\]
Consider the consecutive records $\str g^{(a-1)}$ and $\str g^{(a)}$ differing only at $k_a$. Applying the single-flip lemma gives
\begin{equation}
x_a = \sigma_{k_a}(x_{a-1})+\alpha_a v_a.
\label{eq:sm-one-flip-step-t}
\end{equation}
We prove by induction on $h=1,\ldots,s$ that
\begin{equation}
x_h
= \sigma_{k_h}\cdots\sigma_{k_1}(x_0) +\sum_{a=1}^{h}\alpha_a v_a.
\label{eq:sm-multiflip-induction-t}
\end{equation}
For $h=1$, this is Eq.~\eqref{eq:sm-one-flip-step-t}.  Suppose that Eq.~\eqref{eq:sm-multiflip-induction-t} holds for $h-1$.  Applying Eq.~\eqref{eq:sm-one-flip-step-t} and using linearity gives
\begin{eqnarray}
x_h
&=& \sigma_{k_h}\cdots\sigma_{k_1}(x_0)\notag\\
&+& \sum_{a=1}^{h-1} \alpha_a\sigma_{k_h}(v_a) +\alpha_h v_h.
\label{eq:sm-multiflip-induction-step-t}
\end{eqnarray}
For every $a<h$, we have $k_a<k_h$.  The gadget Pauli label $v_a$ has trivial support on gadget wires with index greater than $k_a$ so that $[v_a,z_{k_h}]=0$. This implies that
\[
\sigma_{k_h}(v_a) = v_a+[v_a,z_{k_h}]z_{k_h} = v_a.
\]
Substitution into Eq.~\eqref{eq:sm-multiflip-induction-step-t} proves Eq.~\eqref{eq:sm-multiflip-induction-t} for $h$.  Taking $h=s$, together with $x_0=p_i(\str g)$ and $x_s=p_i(\str g')$, proves Eqs.~\eqref{eq:sm-multiflip-t} and \eqref{eq:sm-multiflip-correction-t}.

It remains to establish that $\eta_i(\str g,\str g')$ is a linear combination of gadget generators preceding $p_{i}$, all evaluated on branch $\str g'$. For this, observe that we evaluate $v_a$ with the positions $k_1,\ldots,k_{a-1}$ having already been changed to their values in $\str g'$.  These are precisely the positions at which $\str g$ and $\str g'$ differ before $k_a$.  Hence $v_{a}$ depends only on this preceding record that agrees with~$\str g'$. The additional terms $\eta_i(\str g,\str g')$ therefore consist of a strictly lower-triangular stabilizer row operation on the final branch.
\end{proof}

For a branch record $\str g$, we call the original back-propagated labels
$p_i(\str g)$ the raw labels.
 Define the corresponding normal-form labels by
\begin{equation}
    \widetilde p_i(\str g)
    :=
    \sigma_{\str g}\bigl(p_i(\str 0)\bigr),
    \qquad
    \sigma_{\str g}
    :=
    \prod_{k=1}^t\sigma_k^{m_k}.
    \label{eq:sm-phase-I-representatives-t}
\end{equation}
The factors in $\sigma_{\str g}$ commute because they are transvections along the mutually orthogonal vectors $z_1,\ldots,z_t$.

\begin{lemma}[Causal generator replacement]
\label{lem:sm-generator-replacement-t}
For every branch record $\str g$ and every $a=1,\ldots,t+w$, the raw and normal-form Pauli elements span the same subspace:
\begin{equation}
\begin{aligned}
    \left\langle
        p_1(\str g),\ldots,p_a(\str g)
    \right\rangle
    =
    \left\langle
        \widetilde p_1(\str g),\ldots,
        \widetilde p_a(\str g)
    \right\rangle .
\end{aligned}
    \label{eq:sm-prefix-span-t}
\end{equation}
Equivalently, the two ordered lists are related by stabilizer row operations preserving causality.
\end{lemma}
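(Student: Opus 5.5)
The plan is to prove Eq.~\eqref{eq:sm-prefix-span-t} by induction on $a$, using the multi-flip lemma (Lemma~\ref{lem:sm-multiflip-t}) with $\str g=\str 0$ and $\str g'$ equal to the given branch record truncated to positions before $a$. For fixed $a$, let $k_1<\cdots<k_s<a$ be the positions where $\str g_{\PREC a}$ has a one. Lemma~\ref{lem:sm-multiflip-t} then gives
\[
p_a(\str g)=\sigma_{k_s}\cdots\sigma_{k_1}\bigl(p_a(\str 0)\bigr)+\eta_a(\str 0,\str g),
\]
where $\eta_a$ is a linear combination of the gadget labels $v_b=p_{k_b}(\str g^{(b-1)})$ with $k_b<a$. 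The first term should be identified with $\widetilde p_a(\str g)=\sigma_{\str g}(p_a(\str 0))$. To do this, observe that the extra factors $\sigma_k^{m_k}$ with $k\ge a$ act trivially. By Eq.~\eqref{eq:end-matter-Pauli} (equivalently, by $p_a=r_a+z_a$, where $r_a$ is supported on earlier gadget wires and the computational register), $p_a(\str 0)$ has no $x_k$ component for $k\ge a$. Hence $[p_a(\str 0),z_k]=0$ and $\sigma_k$ fixes it. For readout labels, all gadget indices precede $a$, so nothing is lost.

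The next step is to show that each $v_b$ lies in $\langle p_1(\str g),\ldots,p_{a-1}(\str g)\rangle$. The lemma states that $v_b$ is evaluated on a record agreeing with $\str g'$ at every position before $k_b$. Since $p_{k_b}$ depends only on outcomes strictly before $k_b$ (causality, $\str g_{\PREC k_b}$), we have $v_b=p_{k_b}(\str g)$ exactly. This yields
\[
p_a(\str g)\in \widetilde p_a(\str g)+\langle p_1(\str g),\ldots,p_{a-1}(\str g)\rangle.
\]
By the induction hypothesis, the latter span equals $\langle \widetilde p_1(\str g),\ldots,\widetilde p_{a-1}(\str g)\rangle$. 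Therefore $p_a(\str g)$ and $\widetilde p_a(\str g)$ differ by an element of the common prefix span, and the spans at level $a$ coincide. The base case $a=1$ is immediate: $p_1$ is nonadaptive and $\sigma_{\str g}$ fixes $p_1(\str 0)$ by the support argument. Because the correction involves only strictly earlier rows, the change of generators is unit lower triangular. This gives the ``causality-preserving row operations'' form of the statement.

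The step most in need of care is the claim that the intermediate records $\str g^{(b-1)}$ yield exactly the labels of the final branch. One must check that the flip order in Lemma~\ref{lem:sm-multiflip-t} (increasing positions) ensures that every position before $k_b$ already takes its $\str g$-value. One must also check that positions of $\str g$ that are zero do not matter, since they agree with $\str 0$ throughout. A related subtlety is ensuring that the support argument for $\sigma_k$ with $k\ge a$ applies to the raw zero-branch labels before any reduction. This holds because back-propagation through a gadget CNOT never introduces $x_k$ on gadget wires later than the measured one, as noted in the proof of Lemma~\ref{lem:sm-independence}.
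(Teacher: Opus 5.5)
Your proposal is correct and follows essentially the same route as the paper. Both arguments apply Lemma~\ref{lem:sm-multiflip-t} from the zero record, use the causal support of $p_a(\str 0)$ to extend the product of transvections to $\sigma_{\str g}$, note that the correction $\eta_a(\str 0,\str g)$ lies in the span of strictly earlier raw labels on the branch $\str g$, and induct on $a$. Your explicit checks (truncating $\str g$ to positions before $a$, and $v_b=p_{k_b}(\str g)$ by causality) fill in steps the paper states only briefly.
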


\begin{proof}
Fix $i\leq a$ and apply Lemma~\ref{lem:sm-multiflip-t} to the zero record and $\str g$. Only the positions of the gadget preceding $i$ can affect $p_i(\str g)$. The transvections associated with later gadget positions are fixed $p_i(\str 0)$ by causality, so the product of phase transvections can be extended to all $k=1,\ldots,t$.  The multi-flip lemma therefore gives
\[
p_i(\str g)
=
\widetilde p_i(\str g) +\eta_i(\str 0,\str g),
\]
where
 $\eta_i(\str 0,\str g) \in \left\langle p_1(\str g),\ldots,p_{i-1}(\str g) \right\rangle$.
Thus the $i$th raw label is the $i$th normal-form label plus a combination of strictly earlier raw labels. Induction on $i$ proves Eq.~\eqref{eq:sm-prefix-span-t}.
\end{proof}

Lemma~\ref{lem:sm-generator-replacement-t} permits us to run the PBC
reduction on the normal-form generators.  Indeed, immediately before any reduction
stage, replacing the current row by itself plus a combination of earlier
processed rows does not change whether computational elimination is required or not since the preceding rows are contained in $K_C^\perp$. Indeed, run the reduction of SM~\ref{sec:sm-reduction-t} on the reference branch and extract a causal basis on the final projected rows. Denote the retained indices by
\[
j_1<\cdots<j_\ell <j_{\ell+1}<\cdots<j_{\ell+\ell'},
\]
where $j_1,\ldots,j_\ell\leq t$ are gadget indices and $j_{\ell+1},\ldots,j_{\ell+\ell'}>t$ are readout indices.  Their genuine quantum outcomes form the record
\[
\str b = m_{j_1}\cdots m_{j_{\ell+\ell'}} \in\F{\ell+\ell'}.
\]

A gadget row not retained on the magic-state register is processed in the anticommuting case using a computational $Z$ pivot.  Its outcome is a bit classically simulated using a fair coin toss.  At the symplectic level, its two values induce the same transvection and differ only in the sign, so we choose the canonical value zero. Hence, the gadget record associated with  $\str b$ is defined by
\begin{equation*}
g_k(\str b)
:=
\begin{cases}
 m_{j_u}, 	& k=j_u,\quad u=1,\ldots,\ell,\\
 0,			& k\notin\{j_1,\ldots,j_\ell\}.
\end{cases}
\end{equation*}
We use the shorthand
\[
p_i(\str b):=p_i(\str g(\str b))
\]
and define
\begin{equation*}
\widetilde S(\str b) := \prod_{u=1}^{\ell} \SGate_{j_u}^{m_{j_u}},
\qquad
\sigma_{\str b} := \prod_{u=1}^{\ell} \sigma_{j_u}^{m_{j_u}}.
\end{equation*}
The phase corrections act only on the surviving gadget lines.

We now show that the reduction procedure is invariant across branches. For every symplectic map $F$ and every $e\in E_{t+n}$, we have
\[
F\tau_e=\tau_{F(e)}F
\]
since
\[
F\tau_e(v) = F(v)+[F(v),F(e)]F(e) = \tau_{F(e)}F(v).
\]
For each branch $\str b$, let $\widetilde p_h^{(i)}(\str b)$ denote the tableau rows obtained after processing the first $i$ rows of the normal-form list, with
\[
\widetilde p_h^{(0)}(\str b) := \widetilde p_h(\str b).
\]
Define also the processed space
\begin{equation*}
\widetilde K_i(\str b)
:=
\left\langle \widetilde p_1^{(i)}(\str b),\ldots, \widetilde p_i^{(i)}(\str b )\right\rangle .
\end{equation*}

\begin{lemma}[Branch-invariant reduction]
\label{lem:sm-branch-invariant-pivots-t}
Fix a deterministic rule for choosing a computational $Z$ pivot.  For every
branch $\str b$ and every reduction stage $i=0,\ldots,t+w$,
\begin{align}
\widetilde p_h^{(i)}(\str b)
&=
\sigma_{\str b} \bigl(\widetilde p_h^{(i)}(\str 0)\bigr), &&h=1,\ldots,t+w,
\label{eq:sm-stage-invariant-t}\\
\widetilde K_i(\str b)
&=
\sigma_{\str b} \bigl(\widetilde K_i(\str 0)\bigr).
\label{eq:sm-processed-space-invariant-t}
\end{align}
Consequently, computational elimination occurs at the same processed row on every branch, and the same computational pivot is selected.
\end{lemma}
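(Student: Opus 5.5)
Induction on the stage $i$, proving Eq.~\eqref{eq:sm-stage-invariant-t} and Eq.~\eqref{eq:sm-processed-space-invariant-t} together with the claim that the same case and the same pivot occur at stage $i+1$ on every branch.

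\emph{Base case ($i=0$).} By definition, $\widetilde p_h^{(0)}(\str b)=\widetilde p_h(\str b)=\sigma_{\str g(\str b)}(p_h(\str 0))$. Since the non-retained gadget bits of $\str g(\str b)$ are set to zero, $\sigma_{\str g(\str b)}=\sigma_{\str b}$. The reference rows $\widetilde p_h^{(0)}(\str 0)=p_h(\str 0)$ then give the claim, and $\widetilde K_0=\{0\}$ holds trivially.

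\emph{Case detection is branch-invariant.} Assume Eq.~\eqref{eq:sm-stage-invariant-t} holds at stage $i$. Each $\sigma_k=\tau_{z_k}$ with $k\le t$ fixes every computational $z_c$, because $[z_c,z_k]=0$. Hence $\sigma_{\str b}(z_c)=z_c$, and symplectic invariance gives
\[
[\widetilde p_{i+1}^{(i)}(\str b),z_c]=[\sigma_{\str b}(\widetilde p_{i+1}^{(i)}(\str 0)),\sigma_{\str b}(z_c)]=[\widetilde p_{i+1}^{(i)}(\str 0),z_c]
\]
for every computational $c$. So the set of anticommuting computational generators is the same on both branches. The deterministic pivot rule (least index) therefore picks the same $z_c$, and both branches fall into the same case.

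\emph{Inductive step.} In the commuting case nothing changes, so both identities carry over to stage $i+1$; the processed space gains $\sigma_{\str b}$ of the reference row. In the anticommuting case, the transvection vector on branch $\str b$ is
\[
e_{i+1}(\str b)=\widetilde p_{i+1}^{(i)}(\str b)+z_c=\sigma_{\str b}(e_{i+1}(\str 0)),
\]
using again $\sigma_{\str b}(z_c)=z_c$. The intertwining identity $F\tau_e=\tau_{F(e)}F$ with $F=\sigma_{\str b}$ gives $\tau_{e_{i+1}(\str b)}=\sigma_{\str b}\tau_{e_{i+1}(\str 0)}\sigma_{\str b}^{-1}$. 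Applying this to $\widetilde p_h^{(i)}(\str b)=\sigma_{\str b}(\widetilde p_h^{(i)}(\str 0))$ yields Eq.~\eqref{eq:sm-stage-invariant-t} at stage $i+1$ for all $h$. The processed space statement Eq.~\eqref{eq:sm-processed-space-invariant-t} follows by linearity of $\sigma_{\str b}$, since $\widetilde K_{i+1}$ is spanned by the first $i+1$ rows. In particular the new pivot row $z_c$ is the same on both branches, consistent with $\sigma_{\str b}(z_c)=z_c$.

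\emph{Main obstacle.} The delicate point is not the algebra but the bookkeeping of which outcomes enter $\sigma_{\str b}$. Simulated gadget outcomes affect only the sign of $V_i$, and these are fixed to zero. Retained outcomes are genuine, and each retained index $j_u$ may precede rows that are transformed. I would argue that applying the full $\sigma_{\str b}$ uniformly to all rows, rather than causal restrictions, is legitimate because the rows are normal-form labels built from the full $\sigma_{\str g}$ in Eq.~\eqref{eq:sm-phase-I-representatives-t}. This keeps a single conjugating map across the induction, and Lemma~\ref{lem:sm-generator-replacement-t} ensures that running the reduction on these labels is equivalent to running it on the raw ones. Signs are ignored throughout, since they live outside the symplectic labels.
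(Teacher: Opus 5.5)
Your proposal is correct and follows essentially the same route as the paper. Both argue by induction on the stage: the base case comes from the definition $\widetilde p_h(\str b)=\sigma_{\str b}(p_h(\str 0))$; the case and pivot are branch-invariant because $\sigma_{\str b}(z_c)=z_c$ and symplectic products are preserved; the identities propagate through $\tau_{\sigma_{\str b}(e)}=\sigma_{\str b}\tau_e\sigma_{\str b}^{-1}$; and the processed-space statement follows by linearity. Two of your details are, if anything, slightly cleaner than the paper's version: you apply $\sigma_{\str b}$ to both arguments of the commutator, and you explicitly treat every row $h$, including the pivot row that becomes $z_c$.
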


\begin{proof}
At $i=0$, Eq.~\eqref{eq:sm-stage-invariant-t} follows from
Eq.~\eqref{eq:sm-phase-I-representatives-t}, while both processed spaces are
zero.  Suppose the two identities hold after processing the first $i$ rows.
Because $\sigma_{\str b}$ acts only on the magic-state register, it fixes
$K_C$ pointwise and therefore
\[
    \sigma_{\str b}(K_C^\perp)=K_C^\perp.
\]
We will show that
\[
    \widetilde p_{i+1}^{(i)}(\str b)\in K_C^\perp
    \quad\Longleftrightarrow\quad
    \widetilde p_{i+1}^{(i)}(\str 0)\in K_C^\perp.
\]
on every branch.
In the commuting case, no transvection is applied, so both identities
propagate immediately. In the anticommuting case, for every computational
generator $z_c$,
\begin{eqnarray*}
[\widetilde p_{i+1}^{(i)}(\str b),z_c]
&=&
[\sigma_{\str b} (\widetilde p_{i+1}^{(i)}(\str 0)), z_{c}] \notag\\
&=&
[\widetilde p_{i+1}^{(i)}(\str 0), \sigma_{\str b} (z_{c})] \notag\\
&=&
[\widetilde p_{i+1}^{(i)}(\str 0),z_c].
\end{eqnarray*}
The same computational pivots are therefore available, and the deterministic
rule selects the same $z_{c_{i+1}}$. Letting
\[
e_{i+1}(\str b)
:=
\widetilde p_{i+1}^{(i)}(\str b)+z_{c_{i+1}}.
\]
and since $\sigma_{\str b}$ fixes $z_{c_{i+1}}$ we have that
\[
e_{i+1}(\str b) = \sigma_{\str b}\bigl(e_{i+1}(\str 0)\bigr).
\]
This further implies that
\[
\tau_{e_{i+1}(\str b)} = \sigma_{\str b}\tau_{e_{i+1}(\str 0)}\sigma_{\str b}^{-1}.
\]
Applying the transvection along $e_{i+1}(\str b)$ to all subsequent rows with index $h > i+1$ then gives
\begin{eqnarray*}
\widetilde p_h^{(i+1)}(\str b)
&=&
\tau_{e_{i+1}(\str b)} \bigl(\widetilde p_h^{(i)}(\str b)\bigr)\\
&=&
\sigma_{\str b} \tau_{e_{i+1}(\str 0)} \bigl(\widetilde p_h^{(i)}(\str 0)\bigr)\\
&=&
\sigma_{\str b} \bigl(\widetilde p_h^{(i+1)}(\str 0)\bigr).
\end{eqnarray*}
Eq.~\eqref{eq:sm-processed-space-invariant-t} follows since all generators have the same form $\widetilde p_{i}(\str b) = \sigma_{\str b}(\widetilde p_{i}(\str 0))$. This completes the induction.
\end{proof}

After all $t+w$ rows have been processed, define the projected auxiliary
labels
\begin{equation*}
    q_h(\str b)
    :=
    \pi\bigl(\widetilde p_h^{(t+w)}(\str b)\bigr),
    \qquad h=1,\ldots,t+w.
\end{equation*}
Because $\sigma_{\str b}$ acts only on the magic-state register, it commutes with the projection:
\[
\pi\sigma_{\str b} = \sigma_{\str b}\pi.
\]
Lemma~\ref{lem:sm-branch-invariant-pivots-t} therefore gives
\begin{equation}
q_h(\str b)
=
\sigma_{\str b}\bigl(q_h(\str 0)\bigr), \qquad h=1,\ldots,t+w.
\label{eq:sm-projected-row-covariance-t}
\end{equation}
Since $\sigma_{\str b}$ is symplectic, it also preserves the dimension of the isotropic:
\begin{eqnarray*}
\dim\left\langle q_1(\str b),\ldots,q_h(\str b) \right\rangle
=
\dim\left\langle q_1(\str 0),\ldots,q_h(\str 0) \right\rangle .
\end{eqnarray*}
Consequently, when extracting an ordered basis, we retain the same indices $j_1,\ldots,j_{\ell+\ell'}$ on every branch.
 Dependent projected readout rows occur at the same positions and their outcomes are obtained by applying the corresponding Gaussian elimination row operations to the preceding quantum outcomes.

\begin{theorem}[Symplectic PBC NF]
\label{thm:sm-normal-form-t}
The genuine-measurement indices
$j_1,\ldots,j_{\ell+\ell'}$ are the same on every branch, and their reduced
auxiliary labels satisfy
\begin{equation}
q_{j_u}(\str b)
=
\sigma_{\str b}\bigl(q_{j_u}(\str 0)\bigr),
\qquad u=1,\ldots,\ell+\ell'.
\label{eq:sm-reduced-normal-form-t}
\end{equation}
\end{theorem}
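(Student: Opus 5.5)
The statement is essentially the assembly of the lemmas of this section, so I will prove it by chaining them. I begin from the raw back-propagated labels $p_i(\str g(\str b))$ of branch $\str b$, with non-retained gadget outcomes fixed to the canonical value zero. By Lemma~\ref{lem:sm-generator-replacement-t}, the raw list and the normal-form list $\widetilde p_i(\str b)=\sigma_{\str b}(p_i(\str 0))$ are related by causal (strictly lower-triangular) row operations. Every prefix therefore spans the same subspace. For $\str g(\str b)$ the product $\sigma_{\str g}$ reduces to $\sigma_{\str b}$, because the simulated gadget bits are zero.

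The first thing to check is that the reduction of SM~\ref{sec:sm-reduction-t} may be run on the normal-form list instead of the raw list without changing the measured observables or their outcomes. I will argue by induction on the stage. Adding earlier processed rows to the current row does not change whether it lies in $K_C^\perp$, since processed rows lie in $K_C^\perp$ by Eq.~\eqref{eq:processed-space-in-KC-perp}. It therefore does not change the case, and it changes the transvection vector only by an element of the processed space. One must also verify that the transvection still maps the current row into $K_{i+1}$ and preserves the prefix spans. This holds because symplectic maps commute with row operations. I expect this to be the main obstacle. The pivot vector $e_{i+1}$ genuinely differs between the two lists, so the tableaux agree only up to causal row operations and not row by row. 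The inductive claim must accordingly be formulated as equality of prefix spans $\langle p^{(i)}_1,\dots,p^{(i)}_a\rangle$ for every $a$, and not as equality of rows. The signs and outcome relabelings are then absorbed into classical post-processing of earlier outcomes.

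Once the normal-form list is justified, I invoke Lemma~\ref{lem:sm-branch-invariant-pivots-t}. For a deterministic pivot rule, every stage satisfies $\widetilde p_h^{(i)}(\str b)=\sigma_{\str b}(\widetilde p_h^{(i)}(\str 0))$. The projection $\pi$ commutes with $\sigma_{\str b}$ because $\sigma_{\str b}$ acts only on the magic register, which gives Eq.~\eqref{eq:sm-projected-row-covariance-t}. Because $\sigma_{\str b}$ is an invertible linear map, the ranks of all prefixes $\langle q_1(\str b),\dots,q_h(\str b)\rangle$ coincide with those of the reference branch. The greedy causal basis extraction therefore selects the same indices $j_1<\dots<j_{\ell+\ell'}$ on every branch. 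Restricting Eq.~\eqref{eq:sm-projected-row-covariance-t} to these indices yields Eq.~\eqref{eq:sm-reduced-normal-form-t}.

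Finally, I will reconcile the result with the causal form of Theorem~\ref{thm:pbc-normal-form}. The reference label $q_{j_u}(\str 0)$ acts on a gadget wire $k\ge j_u$ only by $I$ or $Z$. This comes from the triangular structure $p_k=r_k+z_k$ of Eq.~\eqref{eq:end-matter-Pauli}, and that structure is preserved because the transvections act only on processed directions and on the computational register. Transvections $\sigma_k$ with $k\ge j_u$ therefore act trivially on $q_{j_u}(\str 0)$, so $\sigma_{\str b}$ may be replaced by its causal restriction. This matches $\tilde S(\str b_{\PREC j_u})$.
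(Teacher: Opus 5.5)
Your proposal is correct and follows essentially the same route as the paper: the theorem is obtained by chaining Lemma~\ref{lem:sm-branch-invariant-pivots-t}, the commutation $\pi\sigma_{\str b}=\sigma_{\str b}\pi$ that gives Eq.~\eqref{eq:sm-projected-row-covariance-t}, and equality of prefix ranks under the invertible map $\sigma_{\str b}$, which fixes the retained indices. Your prefix-span induction for running the reduction on the normal-form list spells out what the paper asserts in one sentence before that lemma; the pivot vectors differ by an element $k$ of the processed space, and $\tau_{e+k}(v)=\tau_e(v)+[v,e]k$ for every row $v$ commuting with $k$. Your closing causal-restriction remark belongs to Theorem~\ref{thm:pbc-normal-form} rather than to this statement.
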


\begin{proof}
Lemma~\ref{lem:sm-branch-invariant-pivots-t} proves covariance of the complete tableau reduction, while Eq.~\eqref{eq:sm-projected-row-covariance-t} proves covariance after restriction to the magic-state register.  Since ordered basis extraction retains the same indices on every branch, Eq.~\eqref{eq:sm-reduced-normal-form-t} follows.
\end{proof}

\section{Construction of the dual one-way quantum computation}
\label{sec:sm-dual-one-way-qc}

We now make the connection to one-way quantum computing (1WQC) by establishing Proposition~\ref{pro:1wqc-dual}.
The central idea is to exchange the roles of preparation and measurement in a manner preserving the desired outcome probabilities.
In particular, the reference Pauli measurements define a stabilizer resource, while the magic-state inputs become local measurement effects.
The main issue is to identify a \emph{causal} 1WQC measurement pattern with an invertible outcome map.
The binary matrices that we introduce below ensure that this is possible.

\subsection{Reference resource and branch probabilities}

We consider the situation where all qubits in the original circuit are measured, i.e. $w=n$.
Before any stabilizer reduction, the Pauli observables on the zero-branch are given by $P_i(\str 0)$ so that the corresponding zero-branch projector is
\[
\Pi_{\str 0}
:= \prod_{i=1}^{t+n}\frac{I+P_i(\str 0)}{2}
 = \ket{\mathcal R_{\str 0}}\bra{\mathcal R_{\str 0}}.
\]
That this projector is rank one follows from Lemma~\ref{lem:sm-independence}.
Moreover, there exists a set of outcome-generating Pauli observables $D(\str m)$, labeled by $\str m$, described more explicitly below, satisfying
\[
D(\str m)P_i(\str 0)D^\dagger(\str m) = (-1)^{m_i}P_i(\str 0),
\quad
i=1,\dots,t+n.
\]
Such Pauli operators always exist and are often chosen as conjugate elements, such as $X_i$ and $Z_i$, but our specific $D(\str m)$ need not be of this form.

Recall that the input state to the gadgetized circuit is
\[
\ket{\Omega} := \ket{T}^{\otimes t}\otimes \ket{0}^{\otimes n}.
\]
For an outcome record $\str m$, we have that the corresponding rank-one PBC projector is given by
\[
\Pi_{\str m} := \widetilde \Sigma(\str g)D(\str m)\Pi_{\str 0}D^\dagger(\str m)\widetilde \Sigma^\dagger(\str g),
\]
where $\widetilde \Sigma(\str g) := \prod_{i=1}^t S_i^{m_i}$. The probability in PBC of this outcome record to occur is given by
\begin{equation}
\text{Pr}(\str m) := \bra{\Omega} \widetilde \Sigma(\str g)D(\str m)\Pi_{\str 0}D^\dagger(\str m)\widetilde \Sigma^\dagger(\str g)\ket{\Omega}.
\label{eq:sm-probability-pbc}
\end{equation}
Here we have used from SM~\ref{sec:sm-normal-form-t} that, up to signs, the branch observables corresponding to $\str m$ can be written in canonical form as
\[
\widetilde P_i(\str m) = \widetilde \Sigma(\str g) P_i(\str 0) \widetilde \Sigma^\dagger(\str g).
\]

\subsection{Outcome-generating Pauli operators}

We consider the Pauli labels of the reference branch generators.
Let $\mathcal T_X$ and $\mathcal T_Z$ be the $X$-block and $Z$-block, respectively, of the corresponding stabilizer tableau $\mathcal T$.
More explicitly, writing rows as (gadgets, readouts) and columns as (magic, stabilizer), we have
\begin{eqnarray*}
\mathcal T_X=\begin{pmatrix}L&A\\B&C\end{pmatrix},
 \qquad
 \mathcal T_Z=\begin{pmatrix}I_t&D\\0&E\end{pmatrix}.
\end{eqnarray*}
Here $L\in\mathbb F_2^{t\times t}$ is strictly lower triangular,
$A,D\in\mathbb F_2^{t\times n}$,
$B\in\mathbb F_2^{n\times t}$, and
$C,E\in\mathbb F_2^{n\times n}$.\footnote{The binary matrix $C$ is distinct from the Clifford circuit $C$.}
This tableau is highly structured due to the causal ordering of the magic state gadgets.
In the $i$th gadget the measurement $Z_i$ is back-propagated through $CX_i$ yielding $Z_iZ_{c_i}$.
Further back-propagation can pick up $X$-factors due to earlier gadget $CX_h$, $h < i$, but it maintains the $Z_i$-factor throughout.
This explains the presence of $I_t$ in $\mathcal T_Z$ as well as the strictly lower triangular $L$ in $\mathcal T_X$.
Likewise, a readout measurement can pick up $X$-factors in the magic register, but not $Z$-factors, hence the zero block in $\mathcal T_Z$.

We seek a collection of Pauli operators that flips the signs of the reference branch observables to the desired outcome record $\str m$.
First, introduce the operator
\[
D(\str e, \str u, \str v) \simeq Y_G^{\otimes \str e}\otimes X_C^{\otimes \str u}Z_C^{\otimes \str v},
\]
where $G$ and $C$ represent the register of gadget and computational qubits, respectively, and
where $\simeq$ is equality up to phase. Additionally, we have $\str e \in \mathbb F_2^t$, $\str u,\str v \in \mathbb F_2^{n}$.
The corresponding Pauli label $d(\str e,\str u,\str v) \in E_{t+n}$ has $x$-coordinate $(\str e,\str u)$ and $z$-coordinate $(\str e,\str v)$.

To induce the desired sign flip on our reference branch, we must have
\[
D(\str e,\str u,\str v)\widetilde P_i(\str 0) = (-1)^{m_i}\widetilde P_i(\str 0)D(\str e,\str u,\str v),
\]
or, equivalently, that the symplectic product satisfies
\[
[\tilde p_i(\str 0),d(\str e,\str u,\str v)] = m_i.
\]
More explicitly, we have
\begin{eqnarray}
    \str g
    &=& (I_t + L)\str e + D\str u + A\str v,\notag \\
    \str y
    &=& B\str e + E\str u + C\str v.\notag
\end{eqnarray}
The binary matrix $K := I_t + L$ is unit lower triangular, and thus invertible over $\mathbb F_2$, making it central in constructing a causal measurement sequence; see e.g.,~\cite{Raussendorf2013contextuality-sm}.

\subsection{Measurement order and outcome map}

We wish to establish a linear mapping from the outcome record $\str m$ to the Pauli coordinates $(\str e,\str u,\str v)$.
For fixed $\str u$ and $\str v$ we can solve for
\[
\str e = K^{-1}(\str g +D\str u +A\str v),
\]
since $K$ is invertible.
Consequently, we also have
\[
\str y = BK^{-1}\str g + \overline E \str u + \overline C \str v,
\]
where
\[
\overline E := E + BK^{-1}D,\quad \overline C := C + BK^{-1}A.
\]

We now establish a useful simplification.
From the dual perspective, each computational qubit, labeled $t+1,\dots,t+n$, will be measured in the $Z$ basis.
Thus, we have $X^uZ^v\ket{0} \simeq \ket{u}$, independently of $v$.
We can therefore make the following convenient choice
\[
\str u = \str f,
\quad \text{and}\quad
\str v = H\str f.
\]
Furthermore, we write
\[
D(\mathbf e,\mathbf f):=D(\mathbf e,\mathbf f,H\mathbf f),
\qquad
D(\mathbf m):=D(\mathbf e(\mathbf m),\mathbf f(\mathbf m)).
\]

The computational $Z$ factors are invisible to the local $Z$ measurements, but they still change the reference generator signs, as desired.
We use this freedom to encode the remaining readout information through the linear relationship
\[
\str y = BK^{-1}\str g + J\str f,
\quad \text{where} \quad
J := \overline E+\overline C H.
\]
Crucially, as we show next, we can choose $H$ so that the binary matrix $J$ is invertible.

\begin{lemma}
\label{sm:1wqc-completion}
The matrix $(\overline C\mid\overline E)$ has row rank $n$.
There exists $H\in\mathbb F_2^{n\times n}$ such that
\begin{equation*}
 J:=\overline E+\overline C H
\end{equation*}
is invertible.
Such an $H$ can be found by Gaussian elimination and basis extension.
\end{lemma}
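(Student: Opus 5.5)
We first show that $(\overline C\mid\overline E)$ has full row rank $n$, and then extract $H$ by a basis-completion argument.

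\textbf{Full row rank.} Do the Gaussian step on the reference tableau. Left-multiply the rows by the invertible block matrix
\[
\begin{pmatrix} I_t & 0\\ BK^{-1} & I_n\end{pmatrix},
\]
which is a row operation and so preserves the independence of the reference labels (Lemma~\ref{lem:sm-independence}). The gadget rows are unchanged. Each readout row becomes a row whose $X$-block restricted to the magic register is $B+BK^{-1}K=0$ (using $K=I_t+L$ on the diagonal block). Its $Z$-block on the magic register is $0+BK^{-1}I_t\cdot$ (diagonal block).

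Here the convention matters. The equations $\str g=K\str e+D\str u+A\str v$ and $\str y=B\str e+E\str u+C\str v$ come from pairing with $d(\str e,\str u,\str v)$, whose magic part has equal $x$- and $z$-coordinates $\str e$. So the effective coefficient of $\str e$ is $\mathcal T_X+\mathcal T_Z$ on the magic columns. That gives $K=I_t+L$ for gadget rows and $B+0=B$ for readout rows. The computational columns contribute $(D,A)$ and $(E,C)$ respectively.

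In the transformed system, the readout equations read $\str y+BK^{-1}\str g=\overline E\str u+\overline C\str v$. Now suppose a nonzero $\str a\in\mathbb F_2^n$ satisfied $\str a^T(\overline C\mid\overline E)=0$. Then the combination $\sum_r a_r\,\tilde p_{t+r}+\str a^TBK^{-1}(\text{gadget rows})$ would pair trivially with every $d(\str e,\str u,\str v)$. The labels $d(\str e,\str u,\str v)$ are all Paulis whose magic part is a product of $Y$'s together with arbitrary computational $X^{\str u}Z^{\str v}$. A label orthogonal to all of them must have zero computational part, and magic part with $x_i+z_i=0$ on each wire, i.e.\ lie in $\langle y_1,\dots,y_t\rangle$.

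So the main obstacle is to exclude a nonzero element of the isotropic reference space $L$ lying in $\langle y_i\rangle$. I would use the block structure for this. In the combination, the $Z$-block on the magic columns is $\str a^TBK^{-1}$ times $I_t$ (readout rows contribute $0$), and the $X$-block is $\str a^TB+\str a^TBK^{-1}L=\str a^TBK^{-1}(L+K)=\str a^TBK^{-1}$. These are equal automatically, so the constraint reduces to the vanishing of the computational part. That part is exactly $\str a^T(\overline C\mid\overline E)$ after reordering, which is zero by assumption.

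So I need a different argument. The combination is a nonzero element of $L$, by independence. It is supported only on the magic register, with equal $x$ and $z$ parts $\str c=K^{-T}B^T\str a$. If $\str c=0$, the element is zero, a contradiction with independence. If $\str c\neq0$, the element is $\prod_i Y_i^{c_i}$ up to sign. I would then use that $w=n$ makes $L$ maximal isotropic of dimension $t+n$, so $L^\perp=L$. The element lies in $L$ and anticommutes with no element of $L$. Pairing it with the gadget rows gives $K\str c$ on the $x$-side and $\str c$ on the $z$-side, hence $(K+I)\str c=L\str c=0$. Pairing with the readout rows gives $B\str c=0$.

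I expect the contradiction to come from a causality/independence argument. Back-propagating a pure-$Y$ element forward through the gadget circuit should yield a computational-basis observable that has $X$ support on the measured gadget wires. This is impossible, since forward images of $L$ are spanned by the $z_i$. This is the step I would check most carefully.

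\textbf{Choice of $H$.} Given rank $n$, the construction is standard. Row-reduce $\overline E$ to find its column space $V$, of dimension $k$. Extend a basis of $V$ to $\mathbb F_2^n$ using columns of $\overline C$; this is possible because $\operatorname{col}(\overline C)+\operatorname{col}(\overline E)=\mathbb F_2^n$. Identify $k$ pivot columns $j$ of $\overline E$ that form a basis of $V$. For each of the $n-k$ remaining indices $j$, set the column $H_{\cdot j}$ to select a chosen $\overline C$-column $c_{s(j)}$, plus whatever cancels the dependent part of $\overline E_{\cdot j}$. Concretely, if $\overline E_{\cdot j}=\overline E\,\str\beta_j$ with $\str\beta_j$ supported on the pivots, replace $\overline E_{\cdot j}$ by $\overline E_{\cdot j}+\overline C H_{\cdot j}$, where $H_{\cdot j}=\str\epsilon_{s(j)}$. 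The resulting $J$ then has columns spanning $V$ plus the extension vectors modulo $V$, so it is invertible. Everything here is Gaussian elimination, and hence efficient.
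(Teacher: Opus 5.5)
Your construction of $H$ is correct. Put $H_{\cdot j}=0$ on $k$ pivot columns of $\overline E$ that span $V=\operatorname{col}(\overline E)$, and $H_{\cdot j}=\str{\epsilon}_{s(j)}$ on the remaining $n-k$ columns. Then $J$ has columns spanning $V$ together with $n-k$ vectors that are independent modulo $V$, so $J$ is invertible; no ``cancellation'' is needed.

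\textbf{The gap is in the rank statement.} You correctly reduce it to showing that $\tau$ vanishes, where $\tau$ is $\str{a}^T$ times the transformed readout rows. This $\tau$ equals $Y^{\str{c}}$ on the magic register, with $\str{c}=K^{-T}B^T\str{a}$, and is trivial on the computational register. This is exactly the paper's route. However, you mis-compute the pairing of $\tau$ with the gadget rows. The $i$th reference gadget row has magic $x$-part equal to row $i$ of the strictly lower triangular $L$, and magic $z$-part equal to $\str{\epsilon}_i$. The matrix $K=I_t+L$ is only the effective coefficient $\mathcal T_X+\mathcal T_Z$ that appears when pairing with $d(\str{e},\str{u},\str{v})$, whose magic part has equal $x$- and $z$-coordinates. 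Hence $[\tilde p_i(\str 0),\tau]=(L\str{c})_i+c_i=(K\str{c})_i$, and isotropy gives $K\str{c}=0$, not $L\str{c}=0$. The same conflation underlies your claim that the transformed readout $X$-block is $B+BK^{-1}K=0$. That block is actually $BK^{-1}$; only the sum of the magic $X$- and $Z$-blocks vanishes.

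Because of this slip you end up with the constraints $L\str{c}=0$ and $B\str{c}=0$. These genuinely do not force $\str{c}=0$ on their own; for example, $L\str{\epsilon}_t=0$ always. You then defer the contradiction to a forward-propagation heuristic that you never carry out, so as written the full-rank claim is unproved.

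The missing step is one line. $K$ is unit lower triangular and hence invertible, so $K\str{c}=0$ gives $\str{c}=0$, i.e.\ $\tau=0$. But $\tau$ is the combination of the original reference rows with coefficient vector $(\str{a}^TBK^{-1},\str{a}^T)\neq 0$. By Lemma~\ref{lem:sm-independence} it cannot vanish, which is the contradiction.

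Your heuristic can also be made rigorous. In the reference branch nothing acts on gadget wire $i$ after its CNOT, so every label equals $F_{\mathrm{out}}^{-1}(z_h)$ and the forward image of the reference space is $Z$-type. Meanwhile, for the smallest $i$ with $c_i=1$, the forward image of $Y^{\str{c}}$ restricts to $Y_i$ on wire $i$. That argument would need this causality reasoning written out, and it is unnecessary once the pairing is computed correctly.
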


\begin{proof}
We first prove the rank statement.
Define the invertible matrix
\[
W
=
\begin{pmatrix}
I_t     &   0\\
BK^{-1} &   I_n\\
\end{pmatrix}.
\]
Applying $W$ to the reference stabilizer tableau $\mathcal T$, this map leaves the gadget measurements fixed while ensuring that all readout measurements have $I$ or $Y$ on the magic-state register. Thus, readouts commute with the magic-state part of $D(\str e, \str f)$.
More explicitly, the new tableau has the form
\begin{equation*}
 \begin{aligned}
 W\mathcal T_X
 &=
 \begin{pmatrix}
 L&A\\
 BK^{-1}&\overline C
 \end{pmatrix},\\
 W\mathcal T_Z
 &=
 \begin{pmatrix}
 I_t&D\\
 BK^{-1}&\overline E
 \end{pmatrix}.
 \end{aligned}
\end{equation*}

The matrix $(\overline C, \overline E)$ encodes the computational part of the readout measurements and we wish to show that this $n\times 2n$ binary matrix has row rank $n$.
To show this, we assume the opposite and prove a contradiction.
In particular, assume there is a nonzero vector $\mu \in \mathbb F_2^n$ such that
\[
\mu^T \overline C = \mu^T\overline E = 0.
\]
We use this to define the Pauli label
\[
\tau := \mu^T \left (BK^{-1}~\overline C~|~BK^{-1}~\overline E\right) = (( a^T,0),( a^T,0)),
\]
where $ a^T := \mu^TBK^{-1}$, which corresponds to a Pauli with trivial support on the computational register.
Because the transformed tableau remains isotropic, this label commutes with each gadget row.
The $i$th gadget row has auxiliary $x$ coordinates given by the strictly lower triangular $L$ and auxiliary $z$ coordinates given by row $i$ of $I_t$, so
\[
0=[\tilde p_i(\str0),\tau]
=\sum_{h=1}^t L_{ih}a_h+a_i
=(K a)_i.
\]
Thus $K a=0$, and invertibility of $K$ implies $ a=0$. Consequently $\tau=0$ on both registers.
This contradicts our assumption of full row rank.

Since $(\overline C\mid\overline E)$ has full row rank, there exists $H\in\mathbb F_2^{n\times n}$ such that $J=\overline E+\overline C H$ is invertible.
Indeed, $\overline E$ already supplies all directions modulo $\operatorname{im}\overline C$;
the term $\overline C H$ can be chosen to complete these to a basis of $\mathbb F_2^n$.
Such an $H$ can be constructed by Gaussian elimination and basis extension over $\mathbb F_2$, which is efficient in the number of qubits.
\end{proof}

For a suitable choice of $H$ as described above, we have that the readout outcomes are given by
\begin{eqnarray}
\str y
&=&
BK^{-1}\str g + \overline E\str f + \overline C H \str f\notag\\
&=&
BK^{-1}\str g + \left(E+CH + BK^{-1}(D+AH)\right)\str f.\notag
\end{eqnarray}
This motivates the definitions
\begin{eqnarray}
M
&=&
D+AH,
\quad
N
=
E+CH,
\quad
J
=
N+BK^{-1}M.\notag
\end{eqnarray}
We then have that the outcome-generating Pauli operators are given by
\[
D(\str e,\str f)
\simeq
Y^{\otimes\str e}\otimes X^{\otimes\str f}Z^{H\str f},
\]
and we have the linear map
\[
\begin{pmatrix}\str g\\\str y\end{pmatrix}
=
\begin{pmatrix}K&M\\B&N\end{pmatrix}
\begin{pmatrix}\str e\\\str f\end{pmatrix}.
\]
This block matrix is invertible since $K$ and $J$ are invertible.
Namely, observe that
\[
\begin{pmatrix}I_t&0\\BK^{-1}&I_n\end{pmatrix}
\begin{pmatrix}K&M\\B&N\end{pmatrix}
=
\begin{pmatrix}K&M\\0&J\end{pmatrix}.
\]
The left matrix on the left-hand side is invertible, as is the matrix on the right-hand side, since both $K$ and $J$ are invertible, thus implying the result.

\subsection{Causal adaptive local measurements}

We now use the constructions above to establish the 1WQC measurement sequence that will eventually yield the promised duality.
Define the orthonormal basis
\[
B_\alpha := \{S^\alpha \ket{T}, ZS^\alpha \ket{T}\},\quad \alpha \in \mathbb F_2,
\]
where we interpret the first element as outcome zero and the second as outcome one. Recalling Eq.~\eqref{eq:sm-probability-pbc}, for $e,m\in \mathbb F_2$, we have
\[
Y^e S^{-m}\ket{T}\simeq Z^m S^{e+m}\ket{T},
\]
where $e+m$ is evaluated modulo two.
This suggests that $\alpha := e+m$ determines the measurement angle while $m$ is interpreted as the outcome.
At first glance, this appears to be acausal since $m$ in PBC is a measurement outcome and it is determining the measurement angle $\alpha$, however, we now demonstrate that the whole construction is causal.
Define the separable state vector
\[
\ket{\phi(\str o, \str f)}
:=
\left (\bigotimes_{i=1}^t Z_i^{o_i} S_i^{\alpha_i} \ket{T}_i\right )\otimes \ket{f_1}\otimes \cdots\otimes \ket{f_n},
\]
so that the probability of outcome record $\str m$ is equivalently given by
\[
\text{Pr}(\str g,\str y) = \lvert \bra{\phi(\str o,\str f)} \mathcal R_{\str 0} \rangle \rvert^2
\]
From the above analysis, we identify the outcomes of local measurements on the gadget qubits, $i=1,\dots,t$, with $\str o = \str g$, while the measurement bases are determined by $\str \alpha = \str e + \str g$.
Eliminating $\str e$ for the observed outcomes $\str o$, we have
\[
\str \alpha = (K^{-1}+I_t)\str o + K^{-1}M\str f.
\]
The dual computation proceeds as follows:
\begin{enumerate}
    \item Measure all computational qubits, in no particular order, in $Z$ basis, obtaining $\str f$,
    \item Measure all gadget qubits, in injection order, in basis $B_{\alpha_i}$, according to $\str \alpha$, obtaining $\str o$.
\end{enumerate}
Since $K^{-1}+I_t$ is strictly lower triangular, this procedure is causal.
Thus, the 1WQC outcomes determine the PBC outcomes according to
\begin{equation}
\str g = \str o,
\quad
\str y = BK^{-1}\str o + J\str f,
 \label{eq:sm-1wqc-decoder}   
\end{equation}
and the inverse is given by
\begin{equation}
\str o = \str g,
\quad
\str f = J^{-1}(\str y + BK^{-1}\str g).
\label{eq:sm-1wqc-decoder-inverse}  
\end{equation}
The encoding/decoding between these outcomes is efficient due to Lemma~\ref{sm:1wqc-completion}, owing to the efficiency of constructing $H$ and $J$.

At a high level, these mappings allow us to construct the tree of 1WQC measurement patterns from data available from the PBC.
The vectors $\ket{\phi(\str o, \str f)}$ are the leaves of this tree and two distinct leaves either have different computational-register outcomes or differ at a first auxiliary outcome where they use opposite vectors in the same basis. They are therefore orthogonal and form a complete basis.

\subsection{Elimination of computational register}

The reduction of SM~\ref{sec:sm-reduction-t} has a direct interpretation as local $Z$ measurements on the computational qubits of the dual resource.
To see this, let $P$ be a current reference generator, with eigenvalue $\epsilon\in\{+1,-1\}$, whose Pauli label $p$ satisfies $[p,z_c]=1$.
Measuring $Z_c$ gives either eigenvalue $\mu\in\{+1,-1\}$ with probability $1/2$ and replaces the signed stabilizer $\epsilon P$ by $\mu Z_c$.
Every other generator $R$, with eigenvalue $\eta$, is updated as
\[
(R,\eta)\longmapsto
\begin{cases}
(R,\eta),&[R,Z_c]=0,\\
(RP,\eta\epsilon),&\{R,Z_c\}=0.
\end{cases}
\]
Here $R$ commutes with $P$, so $RP$ is Hermitian.
The updated generators commute with $Z_c$ and can therefore be restricted to its measured eigenspace.

This is precisely the PBC reduction step if we also remove the measured qubit.
Indeed, since the Pauli labels $[r,p]=0$, the transvection used in the reduction acts as
\[
\tau_{p+z_c}(r)
=r+[r,z_c](p+z_c).
\]
For a row anticommuting with $Z_c$, this gives $r+p+z_c$.
After deleting site $c$, the additional $z_c$ has no remaining support, so the resulting label agrees with that of the measurement update $RP$. The pivot itself becomes $z_c$ and disappears upon deleting the measured site.
Outcome-dependent signs are retained in the classical record.

Applying this argument successively identifies the computational elimination with local $Z$-measurement preprocessing of the resource.
For the complete pure resource tableau, a computational $Z_c$ commuting with every remaining generator has a deterministic outcome; otherwise the anticommuting-pivot rule above applies.
Since these computational measurements commute with $\widetilde\Sigma(\str g)$, they may be performed before the auxiliary measurements.

Crucially, owing to the symmetry of the 1WQC and PBC reduction steps, it is tempting to identify the simulated PBC pivot outcomes with the dual computational outcomes --- these need not coincide.
Rather, the outcome records are related by Eqs.~\eqref{eq:sm-1wqc-decoder} and~\eqref{eq:sm-1wqc-decoder-inverse}, together with the recorded row operations.
Whether such an identification can be made, while maintaining causality, has not been established.

\subsection{Reduced resource state}

We now show that the computational qubit measurements yield the same underlying resource state, up to efficient classical side processing. Since the $n+t$ qubit resource $\ket{\mathcal R_{\str0}}$ is a stabilizer state, its computational $Z$ measurements can be simulated efficiently: each successive outcome is either deterministic or uniformly random, as determined by the current stabilizer tableau.
As noted above, if there are $k$ anticommuting cases in the PBC procedure, then there will be $k$ computational $Z$ measurements that are random; the remaining are determined by this random sample.
Let $\str f$ collectively denote the resulting record, including both random and deterministic outcomes, and let $\pi_C(\str f)$ be its probability.

Conditioned on any supported record $\str f$, the remaining $t$ qubits are in a pure stabilizer state.
In particular, the unsigned tableau updates are independent of the measurement outcomes and only the generator signs change~\cite{AarosonGottesman2004improved-sm}.
These sign changes can, in turn, be implemented by an efficiently computable $t$-qubit Pauli observable $F(\str f)$.
By the correspondence with the PBC reduction established above,
we may therefore write
\begin{equation*}
 \left(
  I \otimes \bigotimes_{i=1}^n \bra{f_i}
\right)
\lvert \mathcal R_{\str0}\rangle
 \simeq
 \sqrt{\pi_C(\str f)} \, F(\str f)\ket{\Psi_0},
\end{equation*}
where $\ket{\Psi_0}$ is the fixed $t$-qubit stabilizer resource defined by the independent reduced reference generators $Q_{j_1}(\str 0),\dots,Q_{j_{\ell+\ell'}}(\str 0)$.

Thus, the reduction procedure samples $\str f$ classically, prepares $\ket{\Psi_0}$, and absorbs $F(\str f)$ into the auxiliary measurement bases. Explicitly, the local measurement vectors become
\[
 F_i(\str f)^\dagger Z_i^{o_i}\SGate_i^{\alpha_i}\ket T_i,
 \qquad
 F(\str f)=\bigotimes_{i=1}^t F_i(\str f).
\]
A single-qubit Pauli merely exchanges the two $T/ST$ bases and/or swaps their outcome labels.
These changes are known before the auxiliary measurements begin, so locality and causality are preserved.

\section{Details on distributed quantum computation}\label{sm: distributed qc}

Here, we show that any Pauli measurement can be split into two QPUs using the strategy depicted in Fig.~\ref{fig: Distributed QC} of our Letter, provided there is preshared entanglement between them in the form of Bell pairs. This is in the spirit of recent architectural proposals, such as seen in Ref.~\cite{He2025extractors-sm}. 

\begin{proposition}[Measurement distribution]
    Let $P= P_A \otimes P_B\,,$ with $P_A \in \P{k}$ and $P_B \in \P{n}$, be a generic $(n+k)$-qubit Pauli observable that we want to measure across two different QPUs with, respectively, (at least) $k + 1$ and $n + 1$ qubits. Assume that the two QPUs share a Bell pair $\ket{\Phi^{+}}_{12} \coloneqq (\ket{00}_{12} + \ket{11}_{12})/\sqrt{2}$. Then, it is possible to distribute the measurement of $P$ across the two QPUs using the following procedure:
    \begin{enumerate}
        \item Measure commuting observables $Q_{a} = P_A \otimes I_B \otimes Z_1\otimes I_2$ and $Q_{b} = I_A \otimes P_B \otimes I_1 \otimes Z_2$, and record the corresponding outcomes $m_a,m_b \in \F{}\,.$
        \item Measure commuting observables $Q_{c} = I_A \otimes I_B \otimes X_1 \otimes I_2$ and $Q_{d} = I_A \otimes I_B \otimes I_1 \otimes X_2$, and record the corresponding outcomes $m_c,m_d \in \F{}\,.$
        \item Apply a Pauli correction when $m_c \oplus m_d = 1 \,.$
    \end{enumerate}
\end{proposition}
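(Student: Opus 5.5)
The plan is to verify directly, by tracking stabilizers, that the four-observable protocol reproduces a projective measurement of $P=P_A\otimes P_B$ on the data register, with outcome $m=m_a\oplus m_b$ and with the Bell pair consumed. I would write the joint input as $\ket{\psi}_{AB}\otimes\ket{\Phi^+}_{12}$, whose stabilizer includes $Z_1Z_2$ and $X_1X_2$. I would first check the hypotheses of the procedure. The observables $Q_a$ and $Q_b$ act on disjoint tensor factors, so they commute. The same holds for $Q_c$ and $Q_d$. The qubit counts match, since each QPU holds its data qubits plus one Bell qubit.

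\textbf{Step 1.} I would show that measuring $Q_a$ and then $Q_b$ implements $\frac{I+(-1)^{m}P}{2}$ on $AB$, up to a residual state on $12$. The key observation is that
\[
Q_aQ_b=P\otimes Z_1Z_2 .
\]
Because $Z_1Z_2$ stabilizes $\ket{\Phi^+}$, the product of the two outcomes is the outcome of $P$, so $m_a\oplus m_b=m$.

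I would then verify the Born rule. $P$ commutes with $Q_a$, $Q_b$, $Q_c$, $Q_d$, so its statistics are unchanged. It remains to show that the individual bit $m_a$ (given $m$) is uniformly random and carries no information about $\ket{\psi}$. For this I would note that $X_1X_2$ anticommutes with both $Q_a$ and $Q_b$ but commutes with $Q_aQ_b$. Hence the first measurement is a uniformly random flip, and the second is then determined up to $m$.

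\textbf{Step 2.} I would compute the post-measurement stabilizer. After step 1 the generators are $(-1)^{m_a}P_A Z_1$ and $(-1)^{m_b}P_B Z_2$, together with whatever stabilized $\ket{\psi}$ and commuted. Measuring $X_1$ and $X_2$ (that is, $Q_c$ and $Q_d$) disentangles qubits 1 and 2. On $AB$ this leaves a state proportional to
\[
\left(I+(-1)^{m_a}P_A\right)\left(I+(-1)^{m_b}P_B\right)\ket{\psi}
\]
only if no byproduct appears. Otherwise it leaves a state proportional to
\[
\left(P_A^{?}\right)\cdot\frac{I+(-1)^{m}P}{2}\ket{\psi} .
\]

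I would make this explicit by expanding
\[
\bra{\pm}_1\bra{\pm}_2 \cdot \frac{I+(-1)^{m_a}P_AZ_1}{2}\cdot\frac{I+(-1)^{m_b}P_BZ_2}{2}\,\ket{\psi}\ket{\Phi^+} .
\]
Projecting $X_1,X_2$ onto outcomes $m_c,m_d$ sends $Z_1\mapsto$ a flip operator and produces a residual Kraus operator $\propto \frac{I+(-1)^{m}P}{2}+(-1)^{m_c\oplus m_d}\cdot(\ldots)$. The expected outcome is that this equals
\[
C^{m_c\oplus m_d}\,\frac{I+(-1)^{m}P}{2},
\]
with byproduct $C=P_A\otimes I$ (equivalently $I\otimes P_B$ on the post-measurement state, since $PC$ acts as a sign). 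The correction is then $C$ itself, applied locally on one QPU; this matches step~3.

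\textbf{Main obstacle.} The main obstacle is the careful bookkeeping in step 2. One must get the correct byproduct and phases, and confirm that the residual operator is exactly the projector (not some mixture) for all four $(m_c,m_d)$ outcomes. I would handle this by writing $\ket{\Phi^+}=\frac1{\sqrt2}\sum_{x}\ket{xx}$ and using
\[
\langle\pm|x\rangle=\frac{(\pm1)^x}{\sqrt2} .
\]
The sum over $x$ then collapses to
\[
\frac{1}{2\sqrt2}\sum_x (-1)^{x(m_c\oplus m_d)}\,\Pi_A^{(m_a\oplus x)}\Pi_B^{(m_b\oplus x)},
\]
where $\Pi^{(s)}=\frac{I+(-1)^sP}{2}$ denotes the projector onto the corresponding eigenspace of $P_A$ or $P_B$. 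The $x$-sum of products of these local projectors yields the $P$-eigenspace projector, with relative sign $(-1)^{m_c\oplus m_d}$. That sign is undone by conjugating with $P_A$, which flips $\Pi_A^{(s)}\leftrightarrow$ sign on the odd term.
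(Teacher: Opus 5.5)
Your final computation, the $x$-sum in your ``Main obstacle'' paragraph, is correct and is essentially the paper's proof. The paper expands $\Pi^{(m_b)}_{Q_b}\Pi^{(m_a)}_{Q_a}\ket{\psi}\ket{\Phi^+}$ as $\tfrac12\Pi_P^{(m)}\ket{\psi}\ket{\Phi^+}+\tfrac{(-1)^{m_a}}{2}(P_A\otimes I_B)\Pi_P^{(m)}\ket{\psi}\ket{\Phi^-}$ and lets $m_c\oplus m_d$ distinguish $\Phi^\pm$. That is exactly your $x$-sum, which evaluates to $\Pi_P^{(m)}$ when $m_c\oplus m_d=0$ and to $(-1)^{m_a}P_A\Pi_P^{(m)}$ when $m_c\oplus m_d=1$. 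The byproduct is then removed by \emph{applying} $P_A$, not by conjugating with it: $P_A$ commutes with $\Pi_P^{(m)}$, so conjugation leaves that operator unchanged.

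You should, however, delete the first display of your Step 2. In your notation, $(I+(-1)^{m_a}P_A)(I+(-1)^{m_b}P_B)\ket{\psi}$ projects onto a joint eigenspace of $P_A$ and $P_B$ separately, which would over-collapse the state. Your own $x$-sum shows that the no-byproduct state is instead the coherent sum $\bigl(\Pi_A^{(m_a)}\Pi_B^{(m_b)}+\Pi_A^{(m_a\oplus1)}\Pi_B^{(m_b\oplus1)}\bigr)\ket{\psi}=\Pi_P^{(m)}\ket{\psi}$, and preserving that coherence is precisely what the Bell pair accomplishes.
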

\begin{proof}
    For the correctness of the protocol, two conditions must be guaranteed:
    \begin{itemize}
        \item[(i)] The final state of the $(n+k)$ computational qubits after the quantum measurement must be the same in the distributed and non-distributed cases.
        \item[(ii)] The outcome probability distribution of the distributed computation must be the same as in the non-distributed case.
    \end{itemize}
Let $\ket{\Psi_{\text{in}}}$ denote the generic $(n+k)$-qubit quantum state before the measurement takes place. Trivially, in the non-distributed case, we have
    \begin{equation}\label{eq: Psi_out non-dist}
        \ket{\Psi_{\text{out}}^{(m)}} = \frac{1}{\sqrt{p(m)}} \Pi_{P}^{(m)} \ket{\Psi_{\text{in}}}_{AB}\,,
    \end{equation}
    where
    \begin{equation*}
        p(m) = \frac{1}{2}
        + \frac{(-1)^m}{2}\,
        \left\langle \Psi_{\text{in}}
        \middle| P \middle|
        \Psi_{\text{in}} \right\rangle
\end{equation*}
and $\Pi_A^{(a)} \coloneqq [ I + (-1)^a A]/2\,$.

We now show that the distributed procedure returns the same output state for the computational qubits with the same output probability distribution.
We start with the first two measurements, after which we obtain
    \begin{align*}
        & \ket{\Psi^{(m_a,m_b)}} \propto \Pi^{(m_b)}_{Q_b} \Pi^{(m_a)}_{Q_a} \ket{\Psi_{\text{in}}}_{AB}\ket{\Phi^{+}}_{12} \\
        & = \frac{1}{2} \left[ \Pi_{P}^{(m_{a}\oplus m_{b})}\left|\Psi_{\text{in}}\right\rangle_{AB} \right] \otimes \left|\Phi^{+}\right\rangle_{12} + \\ 
        & \frac{(-1)^{m_{a}}}{2} \left[ \left( P_{A}\otimes I_{B} \right)\Pi_{P}^{(m_{a}\oplus m_{b})}\left|\Psi_{\text{in}}\right\rangle_{AB} \right] \otimes \left|\Phi^{-}\right\rangle_{12} \,.
    \end{align*}
Thus, after these two measurements, the auxiliary qubits are entangled with the computational ones. To disentangle `$AB$' from `$12$', we measure $Q_c = I_{AB}\otimes I_1\otimes X_2$ and $Q_d = I_{AB}\otimes X_1\otimes I_2$. Doing so projects the computational qubits onto the desired state if $m_c \oplus m_d = 0\,.$ On the other hand, if the $m_c \oplus m_d = 1\,,$ an additional Pauli unitary $(-1)^{m_a} P_A\otimes I_B$ is present and needs to be corrected for. Mathematically, we write:
    \begin{align*}
        \ket{\Psi_{\text{out}}^{(m_a,m_b,m_c,m_d)}} \propto \Pi^{(m_d)}_{Q_d} \Pi^{(m_c)}_{Q_c} \ket{\Psi^{(m_a,m_b)}}\hspace{20mm} \\
        = \begin{cases}
            m' = 0: \, \frac{1}{2} \left( \Pi_{P}^{(m_{a}\oplus m_{b})} \left|\Psi_{\text{in}}\right\rangle_{AB} \right) \otimes \ket{\eta^{(m_c,m_d)}}_{12}\\
            m' = 1: \, \frac{1}{2} \left( U \Pi_{P}^{(m_{a}\oplus m_{b})}\left|\Psi_{\text{in}}\right\rangle_{AB} \right)  \otimes \ket{\eta^{(m_c,m_d)}}_{12} 
        \end{cases}\hspace{-2mm},
    \end{align*}
    where $m' =  m_c \oplus m_d$, $U = P_{A}\otimes I_{B}$ is a Pauli unitary acting only on the computational qubits on the first QPU, and $\ket{\eta^{(m_c,m_d)}}_{12} = Z_1^{m_c} \otimes Z_2^{m_d}\left|++\right\rangle_{12}$.
   
    Finally, applying the Pauli correction $U^{\dagger}$ (if needed), tracing out the auxiliary system, and assigning $m \coloneqq m_a \oplus m_b$ leads exactly to the result in Eq.~\eqref{eq: Psi_out non-dist} with the same output probability distribution. This concludes the proof.
\end{proof}
After the distribution procedure has been executed, the auxiliary qubits are disentangled from the main computational qubits, so they can be reinitialized and reused. We note that if only a single Bell pair is shared between two QPUs at any given time, then only one Pauli measurement can be distributed at a time, leading to a serialization of the PBC: Even measurements belonging to the same adaptive layer must be performed sequentially, one after another. Alternatively, if more Bell pairs are preshared between the QPUs, parallelization of same-layer measurements remains possible provided the distributed measurements still commute. The idea is illustrated via the simple example in Fig.~\ref{fig: multiple Bell pairs}.
Consider two commuting measurements $P=P_A\otimes P_B$ and $Q=Q_A\otimes Q_B$. 
Distributing them leads to two possibilities: either (I)~$[P_A,Q_A] = [P_B,Q_B] = 0\,,$ or (II)~$\{P_A,Q_A\} = \{P_B,Q_B\} = 0\,.$
In the first case, the distributed measurements can be performed simultaneously, followed by the single-qubit $X$ measurements. Consequently, a single layer of the non-distributed PBC leads to two layers in the distributed scenario.
In the second case, anticommutativity implies that the distributed measurements must be performed sequentially. Hence, the advantage of having more preshared Bell pairs is only that the $X$ measurements are parallelizable, leading to a linear improvement in the depth of the distributed computation as compared to the case where only one Bell pair is shared between the QPUs.
\begin{figure}
    \centering
    \includegraphics[width=0.9\linewidth]{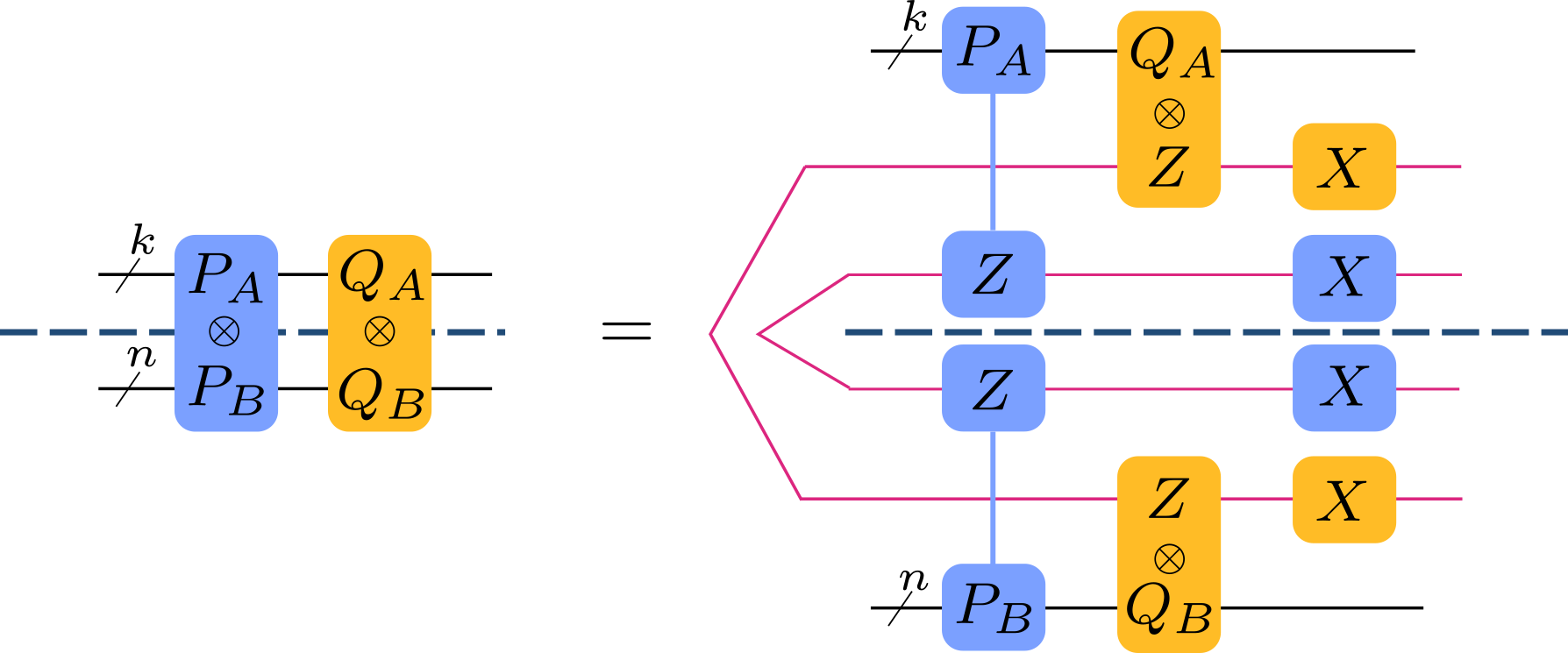}
    \caption{\textbf{Measurement distribution with multiple Bell pairs.} If two Bell pairs are available rather than only one, measurement serialization is avoided provided $[P_A,Q_A] = [P_B,Q_B] = 0\,.$ In this case, the construction depicted still allows the distributed measurements to be performed simultaneously.}
    \label{fig: multiple Bell pairs}
\end{figure}

More generally, for a layer with more than two measurements, cases (I) and (II) still hold. If the distributed measurements all pairwise commute, then the layer of the non-distributed PBC turns into two layers, provided a suitable number of Bell pairs is available; if the distributed measurements all anticommute, they need to be performed sequentially, and having the appropriate number of Bell pairs contributes only to parallelizing the otherwise sequential $X$ measurements. Furthermore, an intermediate regime exists between (I) and (II) where the distributed Pauli measurements are divided into sets of pairwise commuting elements that can still be simultaneously measured, but are followed sequentially by the measurement of subsequent sets whose elements anticommute with previously performed measurements.

\begin{figure*}[t]
    \centering
    \subfloat[]{\begin{tikzpicture} \node[scale=0.6]{ \begin{quantikz}[thin lines] \lstick[wires=3]{$\ket{0}^{\otimes 3}$} &\gate{H} &\qw\gategroup[wires=4, steps=2, style={draw=IBMpurple, rounded corners}, label style={color=IBMpurple}]{Oracle, $O_f$} &\ctrl{3} &\gate{H} &\meter{y_1}\\ &\gate{H} &\ctrl{2} &\qw &\gate{H} &\meter{y_2} \\ &\gate{H} &\ctrl{1} &\qw &\gate{H} &\meter{y_3}\\ \lstick{$\ket{1}$} &\gate{H} &\targ{} &\targ{} &\qw &\qw \end{quantikz} }; \end{tikzpicture}
    }\hspace{15mm}
    \subfloat[]{\begin{tikzpicture} \definecolor{IBMorange}{RGB}{254,97,0} \node[scale=0.6]{ \begin{quantikz}[thin lines] \lstick[wires=4]{$\ket{0}^{\otimes 4}$} &\gate{H} &\qw &\qw\gategroup[wires=4, steps=11, style={draw=IBMorange, rounded corners}, label style={color=IBMorange}]{Toffoli} &\qw &\qw &\qw &\qw &\qw &\qw &\qw &\qw &\qw &\qw &\ctrl{3} &\gate{H} &\meter{y_1}\\ &\gate{H} &\qw &\qw &\qw &\qw &\ctrl{2} &\qw &\qw &\qw &\ctrl{2} &\ctrl{1} &\gate{T} &\ctrl{1} &\qw &\gate{H} &\meter{y_2} \\ &\gate{H} &\qw &\qw &\ctrl{1} &\qw &\qw &\qw &\ctrl{1} &\gate{T} &\qw &\targ{} &\gate{T^{\dagger}} &\targ{} &\qw &\gate{H} &\meter{y_2}\\ &\gate{X} &\gate{H} &\gate{H} &\targ{} &\gate{T^{\dagger}} &\targ{} &\gate{T} &\targ{} &\gate{T^{\dagger}} &\targ{} &\gate{T} &\gate{H} &\qw &\targ{} &\qw &\qw \end{quantikz} }; \end{tikzpicture}
    }
    \caption{\textbf{Circuit for solving an instance of the Deutsch-Josza problem.} \textbf{(a)~}~The circuit written using a Toffoli gate. \textbf{(b)~}~The Toffoli gate can be rewritten in terms of the Clifford+$T$ set, using $7$ $T$ gates.}
    \label{fig: Deutsch-Josza example}
\end{figure*}
\begin{figure*}[t]
    \centering
    \includegraphics[width=0.95\linewidth]{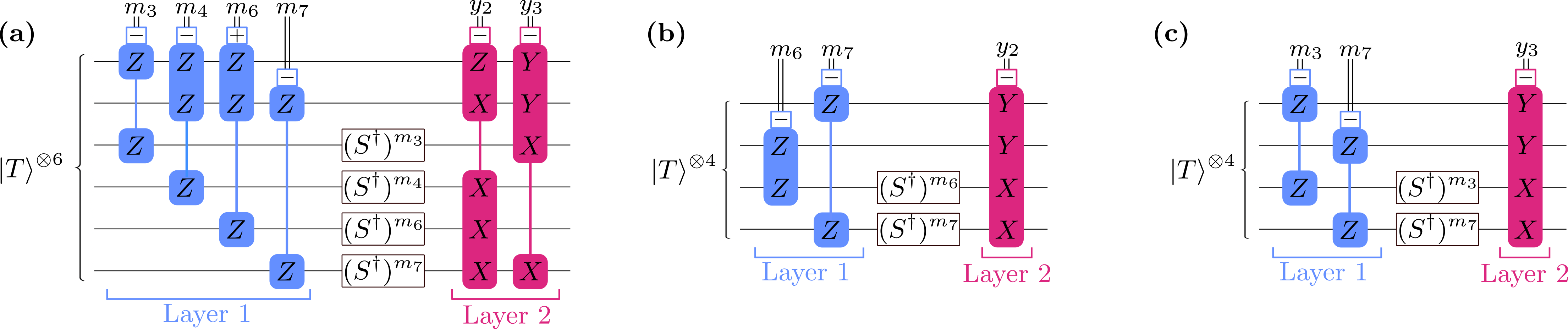}
    \caption{\textbf{PBCs associated with the Deutsch-Josza example under consideration.} \textbf{(a)} The PBC obtained when considering exactly the circuit in Fig.~\ref{fig: Deutsch-Josza example}(b). Alternatively, one can consider variants of the input circuit that include only a single readout measurement. This leads to the PBCs in \textbf{(b)} and \textbf{(c)} corresponding, respectively, to an input circuit measuring only the second and third qubits. Intriguingly, if we are only interested in sampling from the output distribution of the first readout, we can do so in a fully classical manner.}
    \label{fig: PBCs associated with the DJ example}
\end{figure*}

Thus, using the NF to understand the properties of the distributed measurements allows us to decide on the appropriate number of preshared Bell pairs either for time-, space-, or space-time optimal implementations.

\section{Compiling a small Deutsch-Josza circuit}\label{sm: DJ}

For pedagogical purposes, we provide a fully worked-out example illustrating how our NF enables ahead-of-time~(AOT) quantum circuit compilation with PBC. Importantly, we explicitly sketch the procedure for doing \emph{enhanced} AOT PBC compilation, effectively removing the uncompute blocks needed in the naive approach (compare Figs.~\ref{fig: summary fig}(d) and (e) of our Letter). We end with a discussion of how this procedure is key to achieving better gate counts and depth than those of the input circuit. 

Our starting point is the 4-qubit quantum circuit depicted in Fig.~\ref{fig: Deutsch-Josza example}(a), which solves a particular instance of the Deutsch-Josza problem. We defer details concerning this problem to other references, e.g.,~\cite[Secs.~1.4 and 6.1]{NielsenChuang-sm}. For the present discussion, it suffices to know that the goal is to determine whether a given function $f: \F{n} \mapsto \F{}$ is constant or balanced and that the Deutsch-Josza algorithm is constructed in such a way that if the outcome of the quantum circuit is $\str{y} = \str{0}$, $f$ is constant, whereas if $\str{y} \neq \str{0}$, $f$ is balanced. This solves the problem in a single query, while classically, exponentially many queries are generally necessary for a deterministic solution. Remarkably, we will see below that, for the example considered here, our NF enables immediate, fully classical identification of the function as balanced.

The first step in our AOT PBC compilation method consists of converting the circuit in Fig.~\ref{fig: Deutsch-Josza example}(a) to a new circuit written in terms of the Clifford+$T$ gate set. We do this by decomposing the Toffoli gate using the optimal, nonadaptive decomposition by Shende and Markov~\cite{ShendeMarkov08-sm}, which requires 7 $T$ gates [Fig.~\ref{fig: Deutsch-Josza example}(b)]. We then input this quantum circuit to our code~\cite{GitHubPACTS-sm}, which leverages Theorem~\ref{thm:pbc-normal-form} and Proposition~\ref{prop: free qubit removal} of our Letter, to efficiently obtain all the information we need to describe the NF PBC tree [Fig.~\ref{fig: summary fig}(b) of the main text]. We represent the NF PBC in circuit form in Fig.~\ref{fig: PBCs associated with the DJ example}(a).

\begin{figure*}
    \includegraphics[width=0.95\linewidth]{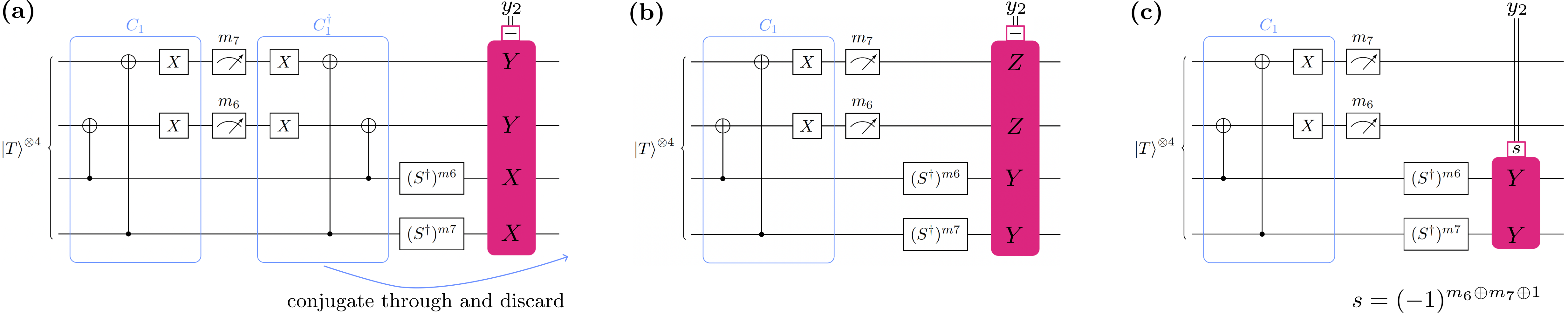}
    \caption{\textbf{Step-by-step construction the adaptive Clifford circuit.} \textbf{(a)~}~The first layer of Pauli measurements $\{-Z_1Z_4,\, -Z_2Z_3\}$ is recast as a set of two computational basis measurements conjugated by a Clifford unitary $C_1$. \textbf{(b)}~$C^{\dagger}_1$ commutes with $(S^{\dagger}_3)^{m_6} (S^{\dagger}_4)^{m_7}\,,$ is used to update the Pauli measurement $-Y_1Y_2X_3X_4$ under conjugation ($-Y_1Y_2X_3X_4 \longmapsto -Z_1Z_2Y_3Y_4$), and is then dropped from the circuit. \textbf{(c)}~Because the first and second qubits are in a computational basis state, performing the measurement $-Z_1Z_2Y_3Y_4$ is equivalent to performing the measurement $(-1)^{m_6\oplus m_7 \oplus 1} Y_3Y_4$. This observation is extremely profitable for reducing the quantum memory size for many circuit instances.}
    \label{fig: step-by-step}
\end{figure*}

\begin{figure}
    \centering
    \begin{tikzpicture} \definecolor{IBMblue}{RGB}{100,143,255} \definecolor{IBMpink}{RGB}{220,38,127} \node[scale=0.6]{ \begin{quantikz}[thin lines] \lstick[wires=4]{$\ket{T}^{\otimes 4}$} &\qw\gategroup[wires=4, steps=3, style={draw=IBMblue, rounded corners}, label style={color=IBMblue}]{$C_1$} &\targ{} &\gate{X} &\meter{m_7} &\ctrl[vertical wire=c]{1}\setwiretype{c} &\ctrl[vertical wire=c]{1} \\ &\targ{} &\qw &\gate{X} &\meter{m_6} &\ctrl[vertical wire=c]{2}\setwiretype{c} &\ctrl[vertical wire=c]{1} \\ &\ctrl{-1} &\qw &\qw &\qw &\gate{(S^{\dagger})^{m_6}} &\gate{X^{m_6 \oplus m_7 \oplus 1}}\gategroup[wires=2, steps=4, style={draw=IBMpink, rounded corners}, label style={color=IBMpink}]{$C_2$} &\gate{S^{\dagger}} &\targ{} &\qw \\ &\qw &\ctrl{-3} &\qw &\qw &\gate{(S^{\dagger})^{m_7}} &\qw &\gate{S^{\dagger}} &\ctrl{-1} &\gate{H} &\meter{y_2} \end{quantikz} }; \end{tikzpicture}
    \caption{\textbf{Final AOT PBC-compiled circuit.} This circuit can be used to sample the output distribution of the second readout qubit of the original Deutsch-Josza circuit in Fig.~\ref{fig: Deutsch-Josza example}, requiring fewer gates and depth.}
    \label{fig: final Deutsch-Josza AOT compiled circuit}
\end{figure}

Here, several features are noteworthy. First, the PBC has 6 qubits rather than 7; this highlights that one of the magic-state qubits has been identified as a free virtual qubit (Proposition~\ref{prop: free qubit removal}) and removed from the PBC. Second, the sequence of 6 Pauli measurements is divided into only 2 adaptive layers: The first layer is comprised of Pauli measurements stemming from magic-state injection gadgets, while the second layer holds two Pauli measurements corresponding to the readouts of the second and third qubits in the quantum circuit. This takes us to the third (and significant) observation: The readout measurement of the first qubit is identified by our NF procedure as a measurement that can be decided classically and deterministically returns the outcome 1. This is remarkable because it means that the PBC NF immediately allows us to solve the problem classically, without needing any quantum resources: If $y_1=1$ deterministically, then $\str{y} \neq \str{0} \iff f\text{ is balanced}\,.$ This observation suggests that our NF will likely become an intriguing new framework to investigate the dequantization of quantum algorithms~\cite{Tang2019quantuminspired-sm} and identify novel classes of quantum circuits amenable to efficient classical simulation.

The observations in the previous paragraph motivate the following question: What if, rather than sampling from the full output distribution of the circuits in Figs.~\ref{fig: Deutsch-Josza example}(a) and \ref{fig: Deutsch-Josza example}(b), $p(\str{y})$, we are content with sampling from the output distribution of each of the readout qubits $p(y_i)$? Put differently, what if the input circuit to our NF PBC procedure has only one readout measurement? From the perspective of solving the Deutsch-Josza problem, measuring only the first qubit would suffice, as it deterministically yields $1$. In that case, as mentioned above, Theorem~\ref{thm:pbc-normal-form} and Proposition~\ref{prop: free qubit removal} combine to inform us that no quantum computation is needed and that the problem can be solved fully classically. In general, this would conclude the algorithm and we would move on. However, for the purposes of illustrating how AOT PBC compilation works via a small example, we will now consider what happens if we are only interested in sampling from the output distribution of either the second or the third qubit of the input circuit. In these cases, our code returns the NF PBC trees depicted pictorially in Figs.~\ref{fig: PBCs associated with the DJ example}(b) and  \ref{fig: PBCs associated with the DJ example}(c). Importantly, we note that two more free virtual qubits were identified, effectively reducing the size of the quantum memory from 6 to 4. This is another intriguing feature enabled by our NF: The number of readout qubits in a given quantum circuit may considerably influence the size of the quantum memory needed to implement the corresponding sampling task.

Since the circuits in Figs.~\ref{fig: PBCs associated with the DJ example}(b) and \ref{fig: PBCs associated with the DJ example}(c) are essentially the same up to a permutation of the top two qubits, we will now focus on the former to describe step-by-step how the NF can be used to produce an adaptive PBC-compiled Clifford circuit with magic state inputs. In this case, because there are two adaptive layers in the PBC, the compiled circuit will be comprised of two Clifford subcircuits, which we dub $C_1$ and $C_2$. For the two Pauli measurements in the first layer $\{-Z_1Z_4,\, -Z_2Z_3 \}$, we find the Clifford circuit $C_1$ that maps them to computational basis measurements so that: $C_1 \{-Z_1 Z_4 ,\, -Z_2 Z_3 \}  C_1^{\dagger} = \{Z_1,\,Z_2\}\,.$ Using standard algorithms for Clifford synthesis~\cite{AarosonGottesman2004improved-sm, Bravyi2021clifford-sm, Bravyi2021hadamardfree-sm}, it is not hard to show that $C_1 = X_1 X_2 CX_{4,1} CX_{3,2}\,.$ We explicitly redraw the circuit in Fig.~\ref{fig: PBCs associated with the DJ example}(b) as seen in Fig.~\ref{fig: step-by-step}(a).

The next step consists of propagating $C_1^{\dagger}$ through the rest of the circuit and discarding it at the end. Because qubits 1 and 2 have been measured in the computational basis, they will be in one of the computational basis states. It is easy to see that $C_1^{\dagger}$ commutes with the layer of $S^{\dagger}$ gates. This is a crucial condition for removing the uncompute blocks. It can also be easily shown that updating the Pauli observable $-Y_1Y_2X_3X_4$ in the second layer leads to a new observable that acts trivially on these two qubits, as seen in Fig.~\ref{fig: step-by-step}(b), where $C_1^{\dagger}$ has been discarded after reaching the end of the quantum circuit. This result holds beyond this example (see SM~\ref{sm: Removing uncompute block}) and is central to enabling enhanced AOT PBC compilation with smaller quantum memories: Since $-Z_1Z_2Y_3Y_4$ acts trivially on qubits 1 and 2, it can be compressed to act only on qubits 3 and 4, provided we classically track its sign~[Fig.~\ref{fig: step-by-step}(c)]. 

Finally, we again use standard Clifford synthesis algorithms to find the second subcircuit $C_2$ so that: $C_2 \{ (-1)^{m_6\oplus m_7 \oplus 1} Y_3Y_4 \} C_2^{\dagger} = \{Z_4\}\,.$ This leads to the final AOT PBC-compiled quantum circuit depicted in Fig.~\ref{fig: final Deutsch-Josza AOT compiled circuit}.

Compared to the original Clifford+$T$ input circuit from Fig.~\ref{fig: Deutsch-Josza example}(b), this circuit uses 3 CNOT gates rather than~7, an average total number of gates of~17.5 rather than~21, and an average quantum depth of~10.25 rather than~12. (For a fair comparison, the total number of gates and the quantum depth take into consideration the transformation $TH$ that is needed to create the $\ket{T}$ state from the $\ket{0}$ state). This shows the power of the midcircuit measurements, leveraged via our NF.

\section{Ahead-of-time compilation without the `uncompute' block}\label{sm: Removing uncompute block}

SM~\ref{sm: DJ} provided a pedagogical illustration of how one might perform AOT PBC compilation using our NF result. Crucially, in that example, we saw that $C^{\dagger}_1$ was constructed in such a way as to commute with the subsequent layer of adaptively-chosen $S^{\dagger}$ corrections. This is an essential feature that these unitaries must possess in order to allow them to be used to update ensuing reference branch Pauli measurements and afterwards be discarded from the circuit.

We will now demonstrate that it is always possible to construct Clifford unitaries that have this property. This is at the core of the \emph{enhanced} AOT PBC compilation discussed in our Letter. 

Consider the reference branch of a generic $t$-qubit PBC, $\mathcal{B}_{\str{0}}\,,$ is comprised of Pauli observables $\{Q_{j_1},\dots,Q_{j_{\ell + \ell^{\prime}}}\}\,,$ with $1 \leq \ell + \ell^{\prime} \leq t\,,$ where we have omitted the explicit dependence on the all-zero outcome string to alleviate notation.
\begin{lemma}[Trivial support]\label{lemma: trivial support}
     Consider the $j_a$th Pauli measurement of $\mathcal{B}_{\str{0}}$, $Q_{j_a}$, and denote its support $\mathcal{S}_{j_a}\,.$ Let $W_a \in \Cl{t}$ be the Clifford unitary so that:
     \begin{equation*}
        W_a Q_{j_a} W^{\dagger}_a = Z_{s_{j_a}}\,,
    \end{equation*}
     for some $s_{j_a}\in \mathcal{S}_{j_a}\,.$ Then, for every upcoming Pauli measurement $Q_{j_{b}}$ with $a < b \leq \ell + \ell^{\prime}\,,$ it holds that:
     \begin{equation}\label{eq: trivial support}
         W_a Q_{j_b} W_a^{\dagger} = Z_{s_{j_a}}^c \otimes R_{j_b} \quad,\forall b\,,
     \end{equation}
     where $c\in\F{}$ and $R_{j_b}$ is a $(t-1)$-qubit Pauli operator not supported in $s_{j_a}\,.$
\end{lemma}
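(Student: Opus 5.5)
The proof will rest entirely on commutativity. All observables in the reference branch pairwise commute, since they are images of commuting tableau rows (Lemma~\ref{lem:sm-pairwise-commuting-t} together with the invariant in Eq.~\eqref{eq:tableau-orthogonality-invariant}, which is preserved by projection onto the magic-state register). Conjugation by a Clifford preserves commutation relations. Hence for every $b>a$, the operator $W_a Q_{j_b} W_a^\dagger$ commutes with $W_a Q_{j_a} W_a^\dagger = Z_{s_{j_a}}$.

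The plan is then a single-qubit factorization argument. First, write the conjugated observable as a tensor product
\[
W_a Q_{j_b} W_a^\dagger = \pm\, \sigma_{s_{j_a}} \otimes R_{j_b},
\]
where $\sigma_{s_{j_a}}\in\{I,X,Y,Z\}$ acts on qubit $s_{j_a}$ and $R_{j_b}$ acts on the remaining $t-1$ qubits. Second, observe that commutation with $Z_{s_{j_a}}$ is decided entirely on qubit $s_{j_a}$, because $Z_{s_{j_a}}$ acts as the identity elsewhere. This forces $\sigma_{s_{j_a}}\in\{I,Z\}$, that is, $\sigma_{s_{j_a}} = Z^c_{s_{j_a}}$ with $c\in\F{}$. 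Symplectically, this is the statement that $[F_{W_a}(q_{j_b}),z_{s_{j_a}}]=0$ means the $x_{s_{j_a}}$ coordinate vanishes. The overall sign can be absorbed into $R_{j_b}$, or tracked classically as in SM~\ref{sm: DJ}. This gives Eq.~\eqref{eq: trivial support}.

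I would also note two supporting facts. A Clifford $W_a$ with $W_a Q_{j_a}W_a^\dagger = Z_{s}$ exists for any $s$ in the support, since every non-identity Hermitian Pauli is Clifford-equivalent to any single-qubit $Z$; it can be obtained by standard synthesis. The choice $s\in\mathcal S_{j_a}$ is only a convenience, for example to keep the gate count low, and is not needed for the argument.

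There is no real obstacle in this proof. The one point requiring care is justifying that the \emph{reduced} $t$-qubit observables $Q_{j_a}$ and $Q_{j_b}$ still commute after the computational part $\widehat Z$ is dropped. This holds because the full rows commute, and the dropped factors are products of $Z$'s on the computational register, which commute with each other; so the commutator of the full rows equals the commutator of their magic-register projections.
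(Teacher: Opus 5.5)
Your proposal is correct and follows essentially the same argument as the paper. Commutation of the reference-branch observables is preserved under conjugation by $W_a$, so $W_aQ_{j_b}W_a^\dagger$ commutes with $Z_{s_{j_a}}$ and can therefore carry only $I$ or $Z$ on qubit $s_{j_a}$. You also check that the reduced $t$-qubit observables still commute after the computational $\widehat Z$ factors are dropped; the paper leaves this implicit, and your justification is correct.
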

\begin{proof}
    Since $[Q_{j_a},\, Q_{j_b}] = 0 \,\forall b$ and because commutativity is preserved under conjugation, it must hold that
    \begin{align*}
         [W_aQ_{j_a}W_a^{\dagger},\, W_aQ_{j_b}W_a^{\dagger}] & = 0 \quad,\forall b \\
        \iff [Z_{s_{j_a}},\, W_aQ_{j_b}W_a^{\dagger}] & = 0 \quad,\forall b \,.
    \end{align*}
    The last condition necessarily implies the shape of the updated Paulis as written in Eq.~\eqref{eq: trivial support}. 
\end{proof}

In other words, if a Pauli observable from the reference branch is mapped, under conjugation by a Clifford unitary, to a computational basis measurement, then updating ensuing Pauli observables of that same branch under conjugation by that same unitary will lead to a set of updated observables that have trivial support on the previously measured qubit. This is exactly what we saw in Fig.~\ref{fig: step-by-step}. 

\begin{figure*}
    \centering
    \includegraphics[width=0.99\linewidth]{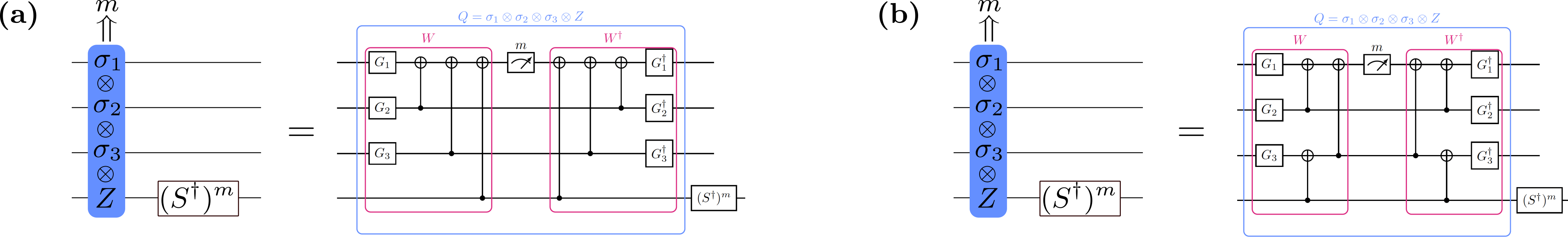}
    \caption{\textbf{Strategies for creating $W$ unitaries.} The Pauli measurement can be performed using a compute-and-uncompute strategy with \textbf{(a)} a linear-depth CNOT cascade structure or \textbf{(b)} a logarithmic-depth CNOT tree structure. The single-qubit Clifford gates $G_i$ are such that $G_i \sigma_i G_i^{\dagger} = Z_i\,,$ where $\sigma_i = \{X, Y,Z\}$ and, correspondingly, $G_i = \{H, HS^{\dagger}, I\}\,.$}
    \label{fig: Construction of W unitaries}
\end{figure*}

\begin{lemma}[Commutativity with phase corrections]\label{lemma: commutativity}
    Consider the $j_a$th Pauli measurement of $\mathcal{B}_{\str{0}}\,,$ $Q_{j_a}\,.$ Let $\mathcal{S}_{j_a}$ denote its support and $s_{j_a}$ label the smallest integer in $\mathcal{S}_{j_a}$. Then, there exists a Clifford unitary $W_a \in \Cl{|\mathcal{S}_{j_a}|}$ so that
    \begin{equation*}
        W_a Q_{j_a} W^{\dagger}_a = Z_{s_{j_a}} \text{ and } [ W_a,\, S_{j_u}^{\dagger} ] = 0 \,,\forall  a \leq u \leq \ell \,.
    \end{equation*}
\end{lemma}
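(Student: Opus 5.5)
The plan is to build $W_a$ explicitly as the linear-depth CNOT cascade of Fig.~\ref{fig: Construction of W unitaries}(a): single-qubit Cliffords $G_i$ first, then CNOTs all targeting the distinguished qubit $s_{j_a}$. The commutation requirement then reduces to a statement about which gadget qubits $j_u$ with $u\geq a$ lie in $\mathcal S_{j_a}$, and with which Pauli. Three facts about CNOT cascades will be used. An $S^\dagger$ on a qubit outside the support of $W_a$ trivially commutes with $W_a$. An $S^\dagger$ on a CNOT control commutes with that CNOT. An $S^\dagger$ on a qubit where $G_i=I$ commutes with $G_i$.

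\textbf{Step 1 (support structure of the reference observables).} I would first show that, for $a\leq\ell$, the reduced reference observable $Q_{j_a}(\str 0)$ acts as $Z$ on gadget qubit $j_a$ and as $I$ on every gadget qubit $j>j_a$. Before reduction this is Eq.~\eqref{eq:end-matter-Pauli}. During the reduction of Appendix~A and SM~\ref{sec:sm-reduction-t}, row $j_a$ is modified only at earlier stages $i<j_a$, when the transvection $\tau_{p_i^{(i-1)}+z_c}$ possibly adds $p_i^{(i-1)}+z_c$. Once row $j_a$ has been processed, transvections at later stages fix it, since they fix $K_i$. I would prove by induction that each earlier row $p_i^{(i-1)}$, $i<j_a$, has auxiliary support only on gadget qubits $\leq i$, and in particular carries no $z_{j_a}$ and no $x_{j_a}$. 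Adding such rows therefore preserves both the $Z$ on $j_a$ and the trivial action on later gadget qubits. The projection $\pi$ does not touch the magic register, so the claim passes to $Q_{j_a}$. The only relevant phase-correction qubits $j_u$ with $u\geq a$ are thus $j_a$ itself, which carries $Z$, and later qubits outside $\mathcal S_{j_a}$. For readout indices $a>\ell$ the condition over $a\leq u\leq\ell$ is vacuous, so any Clifford synthesis works.

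\textbf{Step 2 (construction).} On each $i\in\mathcal S_{j_a}$ apply the local Clifford $G_i$ with $G_i\sigma_iG_i^\dagger=Z_i$. On qubit $j_a$ this is $G_{j_a}=I$ by Step 1. Next apply $\CNOT_{i,s_{j_a}}$, controlled on $i$ and targeting $s_{j_a}$, for every $i\in\mathcal S_{j_a}\setminus\{s_{j_a}\}$. Since $\CNOT_{i,s}$ conjugates $Z_iZ_s$ to $Z_s$, the result is $W_aQ_{j_a}W_a^\dagger=\pm Z_{s_{j_a}}$. I would absorb the overall sign into the classical outcome label rather than insert an $X$ gate, because an $X$ would fail to commute with $S^\dagger$. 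This $W_a$ acts on $|\mathcal S_{j_a}|$ qubits, as required.

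\textbf{Step 3 (commutation).} For $u>a$, Step 1 gives $j_u\notin\mathcal S_{j_a}$, so $W_a$ acts trivially there. For $u=a$, the qubit $j_a$ sees only $G_{j_a}=I$. It is either a CNOT control, in which case $S^\dagger$ commutes with that CNOT, or it equals $s_{j_a}$. In the latter case, $s_{j_a}$ being the smallest index in the support and Step 1 together force $\mathcal S_{j_a}=\{j_a\}$, so there are no CNOTs and $W_a=I$. The main obstacle is Step 1: one must carefully verify that the reduction's row additions never introduce support on later gadget qubits or alter the $Z$ on qubit $j_a$. I would carry this out in the symplectic language of SM~\ref{sec:sm-normal-form-t}, tracking the $x_j$ and $z_j$ coordinates for $j\geq j_a$.
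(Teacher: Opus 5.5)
Your proposal is correct and takes essentially the paper's route: the same CNOT cascade targeting $s_{j_a}$ after local Cliffords $G_i$. Commutation for $u>a$ follows because $W_a$ has no support beyond qubit $j_a$, and for $u=a$ because qubit $j_a$ carries only $G_{j_a}=I$ and CNOT controls. Your Step~1 also justifies what the paper simply assumes ``wlog'': rows added to a retained gadget row before stage $j_a$ have magic support only on earlier qubits, and the row is fixed from stage $j_a$ on, so the triangular form survives the reduction. Your sign caveat is apt: for $s_{j_a}<j_a$ a trailing $X_{s_{j_a}}$ restores the exact sign while still commuting, whereas for $Q_{j_a}=-Z_{j_a}$ no exact-sign $W_a$ can exist, since a unitary commuting with $S^\dagger$ must be diagonal.
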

\begin{proof}
    The existence of a $W_a$ meeting the first condition is a trivially known fact. Therefore, proving the lemma reduces to demonstrating that the second condition holds. We prove this constructively. 

    Since readout measurements have no ensuing adaptive $S^{\dagger}$ corrections, we can focus only on gadget measurements. Suppose, wlog, that $Q_{j_a} = R_{j_a} \otimes Z \otimes I^{\otimes (t - j_a - 1)} \,,$ meaning that it corresponds to the $a$th gadget measurement of the quantum circuit. Then, a simple $W_a$ that meets both conditions consists of a sequence of CNOT gates targeting qubit $s_{j_a}$ and controlled by each of the other qubits in the support of $Q_{j_a}\,,$ preceded by suitable single-qubit Cliffords. The construction is depicted in Fig.~\ref{fig: Construction of W unitaries}(a). It trivially guarantees that $[ W_a,\, S_{j_u}^{\dagger} ] = 0\,,$ for $u > a$, since $W_a$ is not supported in qubits beyond qubit $j_a\,.$ That $W_a$ commutes with $S^{\dagger}_{j_a}$ is guaranteed by the fact that only CNOT controls act on the $j_a$th qubit. 

    An alternative construction for $W_a$ exists that uses logarithmic, rather than linear, depth in the support. This construction is depicted in Fig.~\ref{fig: Construction of W unitaries}(b) and similarly guarantees the commutativity condition for the same reasons as the previous construction.
\end{proof}

Lemmas~\ref{lemma: trivial support} and \ref{lemma: commutativity} are at the heart of \emph{enhanced} AOT circuit compilation with PBC. The latter tells us how to construct suitable $W$ unitaries for each gadget measurement and the former that the updated ensuing Pauli operators will be trivially supported in whichever qubits have already been measured. This typically lowers the support of the measurements as the computation progresses (recall the simple Deutsch-Josza example explained previously).

Unlike gadget measurements, readout measurements are not hostage to limitations in the shape of the associated Clifford unitaries, because there are no more layers of adaptive $S^{\dagger}$ corrections. This gives us a bit more freedom in the way we treat them. Thus, using the ingredients explained here, we follow three different strategies for \emph{enhanced} AOT circuit compilation with PBC:

\vspace{2mm}\noindent \underline{\textbf{Strategy 1.}}
\begin{enumerate}
    \item Gadget Pauli measurements are individually mapped to a computational basis measurement using the linear-depth construction in Fig.~\ref{fig: Construction of W unitaries}(a).
    \item Readout measurements are similarly taken individually and sequentially, using the same linear-depth construction.
\end{enumerate}

\noindent \underline{\textbf{Strategy 2.}}
\begin{enumerate}
    \item Gadget Pauli measurements are individually mapped to a computational basis measurement using the linear-depth construction in Fig.~\ref{fig: Construction of W unitaries}(a).
    \item Readout measurements are taken jointly. We synthesize the associated Clifford subcircuit using different well-known algorithms~\cite{AarosonGottesman2004improved-sm, Bravyi2021clifford-sm, Bravyi2021hadamardfree-sm}, and choose the one that performs the best.
\end{enumerate}

\noindent \underline{\textbf{Strategy 3.}}
\begin{enumerate}
    \item Gadget Pauli measurements are individually mapped to a computational basis measurement using the logarithmic-depth construction in Fig.~\ref{fig: Construction of W unitaries}(b).
    \item Readout measurements are similarly taken individually and sequentially, using the same logarithmic-depth construction.
\end{enumerate}

The main drawback of these strategies is that they do \emph{not} take advantage of the adaptive structure of the PBC, in the sense that all gadget measurements are serialized, even if they belong to the same adaptive layer. This comes from the limitation of having to construct unitaries that commute with ensuing $S^{\dagger}$ gates, a condition that is easy to guarantee in general for a single measurement but harder for sets of multiple measurements. A trivial extension of the single-measurement case happens when two measurements have disjoint support (as seen, for instance, in the small Deutsch-Josza example), in which case, they can be mapped by a single Clifford (with a tensor-product form) to computational-basis measurements occurring in parallel. Beyond that, the development of more generic Clifford synthesis algorithms that operate under the commutativity condition might lead to interesting trade-offs in the performance of enhanced AOT PBC compilation.

To gain a better understanding of the advantages and disadvantages between handling measurements one at a time or jointly, we use the readouts. In Strategy 1, we handle them individually and sequentially, while in Strategy 2, we take them all together. We comment on these trade-offs in SM~\ref{sm: More on AOT compilation performance}, supported by extensive numerical results obtained with random quantum circuits~(RQCs).

\begin{figure*}[t]
    \centering
    \includegraphics[width=0.95\linewidth]{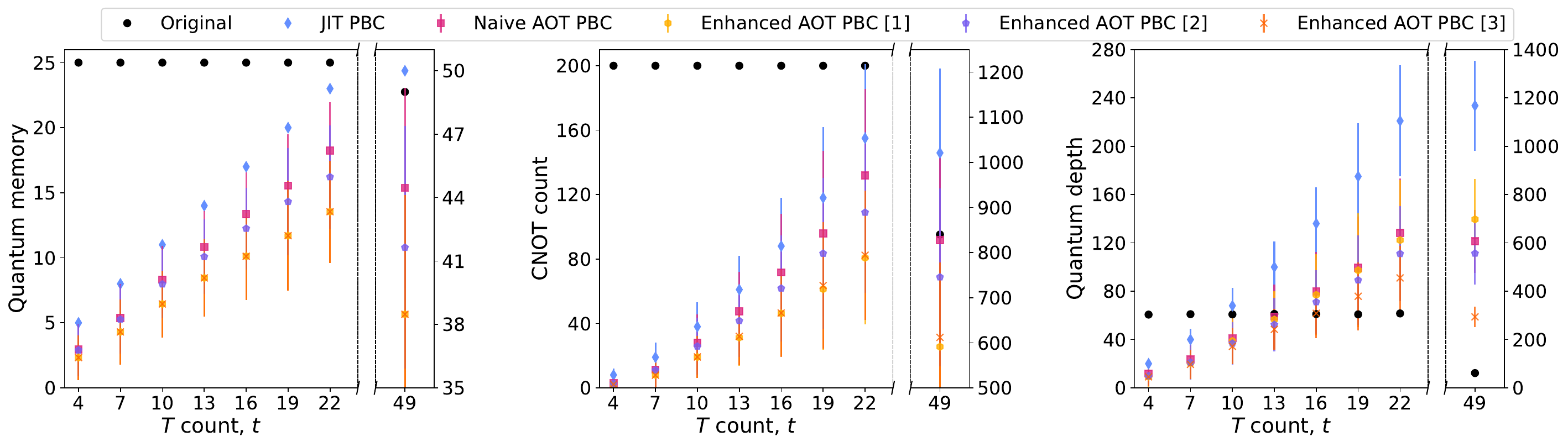}
    \caption{\textbf{Performance of the different PBC compilers.} We take the RQCs analyzed in Refs.~\cite{Peres2023quantum-sm,Peres2025reducing-sm}; these are 25-qubit RQCs with $T$ count varying between 4 and 22 and 49-qubit circuits with $t=49$. We see that, in general, the various enhanced versions of the AOT PBC compiler significantly outperform both JIT and naive AOT PBC compilation. We emphasize the quantum depth, where JIT PBC had clear limitations; the enhanced AOT PBC using Strategy 3 takes significant strides towards overcoming those, reducing the average quantum depth of JIT PBC-compiled circuits by 50\% for the smaller 25-qubit RQCs and by 75\% for the larger 49-qubit ones. Uncertainty bars represent 2 standard deviations.}
    \label{fig: All compilation results}
\end{figure*}

\begin{table*}[t]
    \centering
    \begin{tabular}{c|cccc|cccc|ccc}
    \hline\hline
    \multirow{2}{*}{Circuit} & \multicolumn{4}{c|}{Original circuits} & \multicolumn{4}{c|}{ZX-compiled circuits} & \multicolumn{3}{c}{Enhanced AOT PBC compiler} \\
     & $n$ & $t$ & $CX$ count & $d$ & $n$ & $t$ & $CX$ count & $d$ & $n$ & $CX$ count & $d$ \\
    \hline
    Deutsch-Josza\textunderscore n4 & $4$ & $7$ & $7$ & $16$ & $4$ & $7$ & $7$ & $16$ & \multicolumn{3}{c}{Function type is classically decidable!} \\
    Adder\textunderscore n4 & $4$ & $8$ & $10$ & $21$ & $4$ & $8$ & $10$ & $16$ & $2$ & $3$ & $9$ \\
    Adder\textunderscore n10 & $10$ & $56$ & $65$ & $176$ & $10$ & $32$ & $57$ & $134$ & $9$ & $43$ & $87$ \\
    SAT\textunderscore n11 & $11$ & $294$ & $252$ & $646$ & $11$ & $126$ & $214$ & $503$ & $12$ & $448$ & $670$ \\
    Multiply\textunderscore n13 & $13$ & $42$ & $40$ & $57$ & $13$ & $32$ & $40$ & $56$ & $7$ & $20$ & $47$ \\
    Multiplier\textunderscore n15 & $15$ & $252$ & $222$ & $354$ & $15$ & $70$ & $180$ & $244$ & $9$ & $92$ & $149$ \\
    QRAM\textunderscore n20 & $20$ & $140$ & $136$ & $188$ & $20$ & $96$ & $124$ & $187$ & $8$ & $98$ & $179$ \\
    \hline\hline
    \end{tabular}
    \caption{\textbf{Performance of our AOT PBC compiler for benchmark circuits.} The first column identifies the circuit under analysis. The ensuing four columns summarize the properties of the original circuits. We compiled the circuits using the ZX-calculus compiler~\cite{KissWet2020reduceT-sm, Wetering2020zxcalculus-sm, Staudacher2023-sm} and our AOT PBC compiler. For all circuits except one, we see an improvement in all circuit metrics: quantum memory size, CNOT gate count, and quantum depth.}
    \label{tab: AOT compiler benchmarks}
\end{table*}

\section{Further analysis of the performance of AOT PBC compilation}\label{sm: More on AOT compilation performance}

\subsection{Random quantum circuits}

In our Letter, we reported the improvements in the performance of the PBC compiler by turning it from a just-in-time~(JIT) compiler into an AOT one. To avoid confusion, we presented only the results for one of the three possible enhanced AOT strategies described in SM~\ref{sm: Removing uncompute block} (namely, Strategy 1). However, different strategies can lead to different trade-offs. We take this opportunity to analyze and discuss these details.

In Fig.~\ref{fig: All compilation results}, we report our results in full for the suite of RQCs that had been previously analyzed in Refs.~\cite{Peres2023quantum-sm, Peres2025reducing-sm}, yielding the results for the JIT PBC compiler as depicted using the blue diamonds. We notice that every AOT version (even the naive one) outperforms the JIT PBC compiler, which, in turn, already significantly outperformed the state-of-the-art ZX compilers~\cite{KissWet2020reduceT-sm, Wetering2020zxcalculus-sm, Staudacher2023-sm} in terms of CNOT gate count for all values of $t\,.$ The exception to this happens for $t=49$, where the ZX compiler achieved a result of $795\pm 26$ against the $1021\pm 187 $ of the JIT PBC compiler~\cite{Peres2025reducing-sm}. Thus, we now see that the AOT strategies significantly improve over state-of-the-art compilation results. 

A feature where the PBC compiler has traditionally underperformed is in the depth of compiled RQCs. We see that Strategy 3 for enhanced AOT PBC achieves significant improvements in depth with only a slight increase in the total number of CNOT gates as compared to Strategy 1. Interestingly, considering all readouts jointly leads to worse quantum memory and CNOT gate count results, indicating that the serialized approach comes with added benefits from the successive support reduction of the Pauli measurements. 

\subsection{Benchmark quantum circuits}

We also decided to check how AOT PBC compilation performed on a set of benchmark circuits. The results are presented in Table~\ref{tab: AOT compiler benchmarks}, where we see that, for most circuits, the enhanced AOT PBC compiler achieves significant improvements in all assessed measures: quantum memory size, CNOT gate count, and quantum depth. The single exception is the SAT circuit, for which PBC considerably underperforms, returning a circuit that is more resource-intensive than both the original and the ZX-compiled circuit. This is curious and points to some fundamentally different structure of this circuit (as compared to the others) that makes it unsuitable for PBC compilation. We defer a deeper study and understanding of the structures enabling or disabling good PBC compilation results to future work.


\begin{thebibliography}{53}%
\makeatletter
\providecommand \@ifxundefined [1]{%
 \@ifx{#1\undefined}
}%
\providecommand \@ifnum [1]{%
 \ifnum #1\expandafter \@firstoftwo
 \else \expandafter \@secondoftwo
 \fi
}%
\providecommand \@ifx [1]{%
 \ifx #1\expandafter \@firstoftwo
 \else \expandafter \@secondoftwo
 \fi
}%
\providecommand \natexlab [1]{#1}%
\providecommand \enquote  [1]{``#1''}%
\providecommand \bibnamefont  [1]{#1}%
\providecommand \bibfnamefont [1]{#1}%
\providecommand \citenamefont [1]{#1}%
\providecommand \href@noop [0]{\@secondoftwo}%
\providecommand \href [0]{\begingroup \@sanitize@url \@href}%
\providecommand \@href[1]{\@@startlink{#1}\@@href}%
\providecommand \@@href[1]{\endgroup#1\@@endlink}%
\providecommand \@sanitize@url [0]{\catcode `\\12\catcode `\$12\catcode `\&12\catcode `\#12\catcode `\^12\catcode `\_12\catcode `\%12\relax}%
\providecommand \@@startlink[1]{}%
\providecommand \@@endlink[0]{}%
\providecommand \url  [0]{\begingroup\@sanitize@url \@url }%
\providecommand \@url [1]{\endgroup\@href {#1}{\urlprefix }}%
\providecommand \urlprefix  [0]{URL }%
\providecommand \Eprint [0]{\href }%
\providecommand \doibase [0]{https://doi.org/}%
\providecommand \selectlanguage [0]{\@gobble}%
\providecommand \bibinfo  [0]{\@secondoftwo}%
\providecommand \bibfield  [0]{\@secondoftwo}%
\providecommand \translation [1]{[#1]}%
\providecommand \BibitemOpen [0]{}%
\providecommand \bibitemStop [0]{}%
\providecommand \bibitemNoStop [0]{.\EOS\space}%
\providecommand \EOS [0]{\spacefactor3000\relax}%
\providecommand \BibitemShut  [1]{\csname bibitem#1\endcsname}%
\let\auto@bib@innerbib\@empty
%</preamble>
\bibitem [{\citenamefont {Bravyi}\ \emph {et~al.}(2016)\citenamefont {Bravyi}, \citenamefont {Smith},\ and\ \citenamefont {Smolin}}]{Bravyi2016trading}%
  \BibitemOpen
  \bibfield  {author} {\bibinfo {author} {\bibfnamefont {S.}~\bibnamefont {Bravyi}}, \bibinfo {author} {\bibfnamefont {G.}~\bibnamefont {Smith}},\ and\ \bibinfo {author} {\bibfnamefont {J.~A.}\ \bibnamefont {Smolin}},\ }\href {https://doi.org/10.1103/PhysRevX.6.021043} {\bibfield  {journal} {\bibinfo  {journal} {Phys. Rev. X}\ }\textbf {\bibinfo {volume} {6}},\ \bibinfo {pages} {021043} (\bibinfo {year} {2016})}\BibitemShut {NoStop}%
\bibitem [{\citenamefont {Zurel}\ \emph {et~al.}(2020)\citenamefont {Zurel}, \citenamefont {Okay},\ and\ \citenamefont {Raussendorf}}]{Zurel2020hidden}%
  \BibitemOpen
  \bibfield  {author} {\bibinfo {author} {\bibfnamefont {M.}~\bibnamefont {Zurel}}, \bibinfo {author} {\bibfnamefont {C.}~\bibnamefont {Okay}},\ and\ \bibinfo {author} {\bibfnamefont {R.}~\bibnamefont {Raussendorf}},\ }\href {https://doi.org/10.1103/PhysRevLett.125.260404} {\bibfield  {journal} {\bibinfo  {journal} {Phys. Rev. Lett.}\ }\textbf {\bibinfo {volume} {125}},\ \bibinfo {pages} {260404} (\bibinfo {year} {2020})}\BibitemShut {NoStop}%
\bibitem [{\citenamefont {Zurel}\ \emph {et~al.}(2024)\citenamefont {Zurel}, \citenamefont {Okay},\ and\ \citenamefont {Raussendorf}}]{Zurel2024simulating}%
  \BibitemOpen
  \bibfield  {author} {\bibinfo {author} {\bibfnamefont {M.}~\bibnamefont {Zurel}}, \bibinfo {author} {\bibfnamefont {C.}~\bibnamefont {Okay}},\ and\ \bibinfo {author} {\bibfnamefont {R.}~\bibnamefont {Raussendorf}},\ }\href {https://doi.org/10.1103/PRXQuantum.5.030343} {\bibfield  {journal} {\bibinfo  {journal} {PRX Quantum}\ }\textbf {\bibinfo {volume} {5}},\ \bibinfo {pages} {030343} (\bibinfo {year} {2024})}\BibitemShut {NoStop}%
\bibitem [{\citenamefont {Yoganathan}\ \emph {et~al.}(2019)\citenamefont {Yoganathan}, \citenamefont {Jozsa},\ and\ \citenamefont {Strelchuk}}]{Yoganathan2019quantum}%
  \BibitemOpen
  \bibfield  {author} {\bibinfo {author} {\bibfnamefont {M.}~\bibnamefont {Yoganathan}}, \bibinfo {author} {\bibfnamefont {R.}~\bibnamefont {Jozsa}},\ and\ \bibinfo {author} {\bibfnamefont {S.}~\bibnamefont {Strelchuk}},\ }\href {https://doi.org/10.1098/rspa.2018.0427} {\bibfield  {journal} {\bibinfo  {journal} {Proc. R. Soc. A}\ }\textbf {\bibinfo {volume} {475}},\ \bibinfo {pages} {20180427} (\bibinfo {year} {2019})}\BibitemShut {NoStop}%
\bibitem [{\citenamefont {Bejan}\ \emph {et~al.}(2024)\citenamefont {Bejan}, \citenamefont {McLauchlan},\ and\ \citenamefont {B\'eri}}]{Bejan2024dynamical}%
  \BibitemOpen
  \bibfield  {author} {\bibinfo {author} {\bibfnamefont {M.}~\bibnamefont {Bejan}}, \bibinfo {author} {\bibfnamefont {C.}~\bibnamefont {McLauchlan}},\ and\ \bibinfo {author} {\bibfnamefont {B.}~\bibnamefont {B\'eri}},\ }\href {https://doi.org/10.1103/PRXQuantum.5.030332} {\bibfield  {journal} {\bibinfo  {journal} {PRX Quantum}\ }\textbf {\bibinfo {volume} {5}},\ \bibinfo {pages} {030332} (\bibinfo {year} {2024})}\BibitemShut {NoStop}%
\bibitem [{\citenamefont {Peres}\ and\ \citenamefont {Galv{\~{a}}o}(2023)}]{Peres2023quantum}%
  \BibitemOpen
  \bibfield  {author} {\bibinfo {author} {\bibfnamefont {F.~C.~R.}\ \bibnamefont {Peres}}\ and\ \bibinfo {author} {\bibfnamefont {E.~F.}\ \bibnamefont {Galv{\~{a}}o}},\ }\href {https://doi.org/10.22331/q-2023-10-03-1126} {\bibfield  {journal} {\bibinfo  {journal} {{Quantum}}\ }\textbf {\bibinfo {volume} {7}},\ \bibinfo {pages} {1126} (\bibinfo {year} {2023})}\BibitemShut {NoStop}%
\bibitem [{\citenamefont {Peres}\ and\ \citenamefont {Galv\~ao}(2025)}]{Peres2025reducing}%
  \BibitemOpen
  \bibfield  {author} {\bibinfo {author} {\bibfnamefont {F.~C.~R.}\ \bibnamefont {Peres}}\ and\ \bibinfo {author} {\bibfnamefont {E.~F.}\ \bibnamefont {Galv\~ao}},\ }\href {https://doi.org/10.1103/d3x5-cgky} {\bibfield  {journal} {\bibinfo  {journal} {Phys. Rev. A}\ }\textbf {\bibinfo {volume} {112}},\ \bibinfo {pages} {062604} (\bibinfo {year} {2025})}\BibitemShut {NoStop}%
\bibitem [{\citenamefont {Lovsted}\ \emph {et~al.}(2026)\citenamefont {Lovsted}, \citenamefont {Peres}, \citenamefont {Nevin}, \citenamefont {Ipek},\ and\ \citenamefont {Broadbent}}]{Lovsted2026blind}%
  \BibitemOpen
  \bibfield  {author} {\bibinfo {author} {\bibfnamefont {D.}~\bibnamefont {Lovsted}}, \bibinfo {author} {\bibfnamefont {F.~C.~R.}\ \bibnamefont {Peres}}, \bibinfo {author} {\bibfnamefont {J.}~\bibnamefont {Nevin}}, \bibinfo {author} {\bibfnamefont {S.}~\bibnamefont {Ipek}},\ and\ \bibinfo {author} {\bibfnamefont {A.}~\bibnamefont {Broadbent}},\ }\href@noop {} {\bibinfo {title} {{Blind Quantum Computation with a Small Quantum Server}}} (\bibinfo {year} {2026}),\ \Eprint {https://arxiv.org/abs/2609.20729} {arXiv:2609.20729 [quant-ph]} \BibitemShut {NoStop}%
\bibitem [{\citenamefont {He}\ \emph {et~al.}(2025)\citenamefont {He}, \citenamefont {Cowtan}, \citenamefont {Williamson},\ and\ \citenamefont {Yoder}}]{He2025extractors}%
  \BibitemOpen
  \bibfield  {author} {\bibinfo {author} {\bibfnamefont {Z.}~\bibnamefont {He}}, \bibinfo {author} {\bibfnamefont {A.}~\bibnamefont {Cowtan}}, \bibinfo {author} {\bibfnamefont {D.~J.}\ \bibnamefont {Williamson}},\ and\ \bibinfo {author} {\bibfnamefont {T.~J.}\ \bibnamefont {Yoder}},\ }\href@noop {} {\bibinfo {title} {Extractors: {QLDPC} {Architectures} for {Efficient} {Pauli}-{Based} {Computation}}} (\bibinfo {year} {2025}),\ \Eprint {https://arxiv.org/abs/2503.10390} {arXiv:2503.10390 [quant-ph]} \BibitemShut {NoStop}%
\bibitem [{\citenamefont {Zhang}\ and\ \citenamefont {Li}(2025)}]{Zhang2025time}%
  \BibitemOpen
  \bibfield  {author} {\bibinfo {author} {\bibfnamefont {G.}~\bibnamefont {Zhang}}\ and\ \bibinfo {author} {\bibfnamefont {Y.}~\bibnamefont {Li}},\ }\href {https://doi.org/10.1103/PhysRevLett.134.070602} {\bibfield  {journal} {\bibinfo  {journal} {Phys. Rev. Lett.}\ }\textbf {\bibinfo {volume} {134}},\ \bibinfo {pages} {070602} (\bibinfo {year} {2025})}\BibitemShut {NoStop}%
\bibitem [{\citenamefont {Cowtan}\ \emph {et~al.}(2026)\citenamefont {Cowtan}, \citenamefont {He}, \citenamefont {Williamson},\ and\ \citenamefont {Yoder}}]{Cowtan2026parallel}%
  \BibitemOpen
  \bibfield  {author} {\bibinfo {author} {\bibfnamefont {A.}~\bibnamefont {Cowtan}}, \bibinfo {author} {\bibfnamefont {Z.}~\bibnamefont {He}}, \bibinfo {author} {\bibfnamefont {D.~J.}\ \bibnamefont {Williamson}},\ and\ \bibinfo {author} {\bibfnamefont {T.~J.}\ \bibnamefont {Yoder}},\ }\href {https://doi.org/10.1103/gj8x-n5gg} {\bibfield  {journal} {\bibinfo  {journal} {PRX Quantum}\ }\textbf {\bibinfo {volume} {7}},\ \bibinfo {pages} {020325} (\bibinfo {year} {2026})}\BibitemShut {NoStop}%
\bibitem [{\citenamefont {Webster}\ \emph {et~al.}(2026)\citenamefont {Webster}, \citenamefont {Berent}, \citenamefont {Chandra}, \citenamefont {Hockings}, \citenamefont {Baspin}, \citenamefont {Thomsen}, \citenamefont {Smith},\ and\ \citenamefont {Cohen}}]{Webster2026pinnacle}%
  \BibitemOpen
  \bibfield  {author} {\bibinfo {author} {\bibfnamefont {P.}~\bibnamefont {Webster}}, \bibinfo {author} {\bibfnamefont {L.}~\bibnamefont {Berent}}, \bibinfo {author} {\bibfnamefont {O.}~\bibnamefont {Chandra}}, \bibinfo {author} {\bibfnamefont {E.~T.}\ \bibnamefont {Hockings}}, \bibinfo {author} {\bibfnamefont {N.}~\bibnamefont {Baspin}}, \bibinfo {author} {\bibfnamefont {F.}~\bibnamefont {Thomsen}}, \bibinfo {author} {\bibfnamefont {S.~C.}\ \bibnamefont {Smith}},\ and\ \bibinfo {author} {\bibfnamefont {L.~Z.}\ \bibnamefont {Cohen}},\ }\href@noop {} {\bibinfo {title} {The {Pinnacle} {Architecture}: {Reducing} the cost of breaking {RSA}-2048 to 100 000 physical qubits using quantum {LDPC} codes}} (\bibinfo {year} {2026}),\ \Eprint {https://arxiv.org/abs/2602.11457} {arXiv:2602.11457 [quant-ph]} \BibitemShut {NoStop}%
\bibitem [{\citenamefont {Litinski}(2019)}]{Litinski2019GoS}%
  \BibitemOpen
  \bibfield  {author} {\bibinfo {author} {\bibfnamefont {D.}~\bibnamefont {Litinski}},\ }\href {https://doi.org/10.22331/q-2019-03-05-128} {\bibfield  {journal} {\bibinfo  {journal} {Quantum}\ }\textbf {\bibinfo {volume} {3}},\ \bibinfo {pages} {128} (\bibinfo {year} {2019})}\BibitemShut {NoStop}%
\bibitem [{\citenamefont {Kan}\ \emph {et~al.}(2025)\citenamefont {Kan}, \citenamefont {Du}, \citenamefont {Liu}, \citenamefont {Wang}, \citenamefont {Ding}, \citenamefont {Li}, \citenamefont {Mao},\ and\ \citenamefont {Stein}}]{Kan2025SPARO}%
  \BibitemOpen
  \bibfield  {author} {\bibinfo {author} {\bibfnamefont {S.}~\bibnamefont {Kan}}, \bibinfo {author} {\bibfnamefont {Z.}~\bibnamefont {Du}}, \bibinfo {author} {\bibfnamefont {C.}~\bibnamefont {Liu}}, \bibinfo {author} {\bibfnamefont {M.}~\bibnamefont {Wang}}, \bibinfo {author} {\bibfnamefont {Y.}~\bibnamefont {Ding}}, \bibinfo {author} {\bibfnamefont {A.}~\bibnamefont {Li}}, \bibinfo {author} {\bibfnamefont {Y.}~\bibnamefont {Mao}},\ and\ \bibinfo {author} {\bibfnamefont {S.}~\bibnamefont {Stein}},\ }\href@noop {} {\bibinfo {title} {{SPARO}: {Surface}-code {Pauli}-based {Architectural} {Resource} {Optimization} for {Fault}-tolerant {Quantum} {Computing}}} (\bibinfo {year} {2025}),\ \Eprint {https://arxiv.org/abs/2504.21854} {arXiv:2504.21854 [quant-ph]} \BibitemShut {NoStop}%
\bibitem [{\citenamefont {Yoder}\ \emph {et~al.}(2025)\citenamefont {Yoder}, \citenamefont {Schoute}, \citenamefont {Rall}, \citenamefont {Pritchett}, \citenamefont {Gambetta}, \citenamefont {Cross}, \citenamefont {Carroll},\ and\ \citenamefont {Beverland}}]{Yoder2025tour}%
  \BibitemOpen
  \bibfield  {author} {\bibinfo {author} {\bibfnamefont {T.~J.}\ \bibnamefont {Yoder}}, \bibinfo {author} {\bibfnamefont {E.}~\bibnamefont {Schoute}}, \bibinfo {author} {\bibfnamefont {P.}~\bibnamefont {Rall}}, \bibinfo {author} {\bibfnamefont {E.}~\bibnamefont {Pritchett}}, \bibinfo {author} {\bibfnamefont {J.~M.}\ \bibnamefont {Gambetta}}, \bibinfo {author} {\bibfnamefont {A.~W.}\ \bibnamefont {Cross}}, \bibinfo {author} {\bibfnamefont {M.}~\bibnamefont {Carroll}},\ and\ \bibinfo {author} {\bibfnamefont {M.~E.}\ \bibnamefont {Beverland}},\ }\href@noop {} {\bibinfo {title} {Tour de gross: {A} modular quantum computer based on bivariate bicycle codes}} (\bibinfo {year} {2025}),\ \Eprint {https://arxiv.org/abs/2506.03094} {arXiv:2506.03094 [quant-ph]} \BibitemShut {NoStop}%
\bibitem [{\citenamefont {Gottesman}\ and\ \citenamefont {Chuang}(1999)}]{gottesman1999demonstrating}%
  \BibitemOpen
  \bibfield  {author} {\bibinfo {author} {\bibfnamefont {D.}~\bibnamefont {Gottesman}}\ and\ \bibinfo {author} {\bibfnamefont {I.~L.}\ \bibnamefont {Chuang}},\ }\href {https://doi.org/https://doi.org/10.1038/46503} {\bibfield  {journal} {\bibinfo  {journal} {Nature}\ }\textbf {\bibinfo {volume} {402}},\ \bibinfo {pages} {390} (\bibinfo {year} {1999})}\BibitemShut {NoStop}%
\bibitem [{\citenamefont {Raussendorf}\ and\ \citenamefont {Briegel}(2001)}]{raussendorf2001one}%
  \BibitemOpen
  \bibfield  {author} {\bibinfo {author} {\bibfnamefont {R.}~\bibnamefont {Raussendorf}}\ and\ \bibinfo {author} {\bibfnamefont {H.~J.}\ \bibnamefont {Briegel}},\ }\href {https://doi.org/10.1103/PhysRevLett.86.5188} {\bibfield  {journal} {\bibinfo  {journal} {Phys. Rev. Lett.}\ }\textbf {\bibinfo {volume} {86}},\ \bibinfo {pages} {5188} (\bibinfo {year} {2001})}\BibitemShut {NoStop}%
\bibitem [{\citenamefont {Nielsen}(2003)}]{nielsen2003quantum}%
  \BibitemOpen
  \bibfield  {author} {\bibinfo {author} {\bibfnamefont {M.~A.}\ \bibnamefont {Nielsen}},\ }\href {https://doi.org/https://doi.org/10.1016/S0375-9601(02)01803-0} {\bibfield  {journal} {\bibinfo  {journal} {Physics Letters A}\ }\textbf {\bibinfo {volume} {308}},\ \bibinfo {pages} {96} (\bibinfo {year} {2003})}\BibitemShut {NoStop}%
\bibitem [{\citenamefont {Aliferis}\ and\ \citenamefont {Leung}(2004)}]{Leung2004unify}%
  \BibitemOpen
  \bibfield  {author} {\bibinfo {author} {\bibfnamefont {P.}~\bibnamefont {Aliferis}}\ and\ \bibinfo {author} {\bibfnamefont {D.~W.}\ \bibnamefont {Leung}},\ }\href {https://doi.org/10.1103/PhysRevA.70.062314} {\bibfield  {journal} {\bibinfo  {journal} {Phys. Rev. A}\ }\textbf {\bibinfo {volume} {70}},\ \bibinfo {pages} {062314} (\bibinfo {year} {2004})}\BibitemShut {NoStop}%
\bibitem [{\citenamefont {Miller}\ and\ \citenamefont {Miyake}(2016)}]{miller2016hierarchy}%
  \BibitemOpen
  \bibfield  {author} {\bibinfo {author} {\bibfnamefont {J.}~\bibnamefont {Miller}}\ and\ \bibinfo {author} {\bibfnamefont {A.}~\bibnamefont {Miyake}},\ }\href {https://doi.org/https://doi.org/10.1038/npjqi.2016.36} {\bibfield  {journal} {\bibinfo  {journal} {npj Quantum Information}\ }\textbf {\bibinfo {volume} {2}},\ \bibinfo {pages} {16036} (\bibinfo {year} {2016})}\BibitemShut {NoStop}%
\bibitem [{\citenamefont {Bartolucci}\ \emph {et~al.}(2023)\citenamefont {Bartolucci}, \citenamefont {Birchall}, \citenamefont {Bombin}, \citenamefont {Cable}, \citenamefont {Dawson}, \citenamefont {Gimeno-Segovia}, \citenamefont {Johnston}, \citenamefont {Kieling}, \citenamefont {Nickerson}, \citenamefont {Pant} \emph {et~al.}}]{bartolucci2023fusion}%
  \BibitemOpen
  \bibfield  {author} {\bibinfo {author} {\bibfnamefont {S.}~\bibnamefont {Bartolucci}}, \bibinfo {author} {\bibfnamefont {P.}~\bibnamefont {Birchall}}, \bibinfo {author} {\bibfnamefont {H.}~\bibnamefont {Bombin}}, \bibinfo {author} {\bibfnamefont {H.}~\bibnamefont {Cable}}, \bibinfo {author} {\bibfnamefont {C.}~\bibnamefont {Dawson}}, \bibinfo {author} {\bibfnamefont {M.}~\bibnamefont {Gimeno-Segovia}}, \bibinfo {author} {\bibfnamefont {E.}~\bibnamefont {Johnston}}, \bibinfo {author} {\bibfnamefont {K.}~\bibnamefont {Kieling}}, \bibinfo {author} {\bibfnamefont {N.}~\bibnamefont {Nickerson}}, \bibinfo {author} {\bibfnamefont {M.}~\bibnamefont {Pant}}, \emph {et~al.},\ }\href {https://doi.org/https://doi.org/10.1038/s41467-023-36493-1} {\bibfield  {journal} {\bibinfo  {journal} {Nature Communications}\ }\textbf {\bibinfo {volume} {14}},\ \bibinfo {pages} {912} (\bibinfo {year} {2023})}\BibitemShut {NoStop}%
\bibitem [{\citenamefont {Danos}\ \emph {et~al.}(2007)\citenamefont {Danos}, \citenamefont {Kashefi},\ and\ \citenamefont {Panangaden}}]{danos2007measurement}%
  \BibitemOpen
  \bibfield  {author} {\bibinfo {author} {\bibfnamefont {V.}~\bibnamefont {Danos}}, \bibinfo {author} {\bibfnamefont {E.}~\bibnamefont {Kashefi}},\ and\ \bibinfo {author} {\bibfnamefont {P.}~\bibnamefont {Panangaden}},\ }\href {https://doi.org/https://doi.org/10.1145/1219092.1219096} {\bibfield  {journal} {\bibinfo  {journal} {Journal of the ACM (JACM)}\ }\textbf {\bibinfo {volume} {54}},\ \bibinfo {pages} {8} (\bibinfo {year} {2007})}\BibitemShut {NoStop}%
\bibitem [{\citenamefont {Danos}\ and\ \citenamefont {Kashefi}(2006)}]{danos2006determinism}%
  \BibitemOpen
  \bibfield  {author} {\bibinfo {author} {\bibfnamefont {V.}~\bibnamefont {Danos}}\ and\ \bibinfo {author} {\bibfnamefont {E.}~\bibnamefont {Kashefi}},\ }\href {https://doi.org/10.1103/PhysRevA.74.052310} {\bibfield  {journal} {\bibinfo  {journal} {Phys. Rev. A}\ }\textbf {\bibinfo {volume} {74}},\ \bibinfo {pages} {052310} (\bibinfo {year} {2006})}\BibitemShut {NoStop}%
\bibitem [{\citenamefont {Browne}\ \emph {et~al.}(2007)\citenamefont {Browne}, \citenamefont {Kashefi}, \citenamefont {Mhalla},\ and\ \citenamefont {Perdrix}}]{browne2007generalized}%
  \BibitemOpen
  \bibfield  {author} {\bibinfo {author} {\bibfnamefont {D.~E.}\ \bibnamefont {Browne}}, \bibinfo {author} {\bibfnamefont {E.}~\bibnamefont {Kashefi}}, \bibinfo {author} {\bibfnamefont {M.}~\bibnamefont {Mhalla}},\ and\ \bibinfo {author} {\bibfnamefont {S.}~\bibnamefont {Perdrix}},\ }\href {https://doi.org/10.1088/1367-2630/9/8/250} {\bibfield  {journal} {\bibinfo  {journal} {New Journal of Physics}\ }\textbf {\bibinfo {volume} {9}},\ \bibinfo {pages} {250} (\bibinfo {year} {2007})}\BibitemShut {NoStop}%
\bibitem [{\citenamefont {Gachechiladze}\ \emph {et~al.}(2019)\citenamefont {Gachechiladze}, \citenamefont {G{\"u}hne},\ and\ \citenamefont {Miyake}}]{gachechiladze2019changing}%
  \BibitemOpen
  \bibfield  {author} {\bibinfo {author} {\bibfnamefont {M.}~\bibnamefont {Gachechiladze}}, \bibinfo {author} {\bibfnamefont {O.}~\bibnamefont {G{\"u}hne}},\ and\ \bibinfo {author} {\bibfnamefont {A.}~\bibnamefont {Miyake}},\ }\href {https://doi.org/10.1103/PhysRevA.99.052304} {\bibfield  {journal} {\bibinfo  {journal} {Phys. Rev. A}\ }\textbf {\bibinfo {volume} {99}},\ \bibinfo {pages} {052304} (\bibinfo {year} {2019})}\BibitemShut {NoStop}%
\bibitem [{\citenamefont {de~Beaudrap}\ \emph {et~al.}(2020)\citenamefont {de~Beaudrap}, \citenamefont {Duncan}, \citenamefont {Horsman},\ and\ \citenamefont {Perdrix}}]{de2019pauli}%
  \BibitemOpen
  \bibfield  {author} {\bibinfo {author} {\bibfnamefont {N.}~\bibnamefont {de~Beaudrap}}, \bibinfo {author} {\bibfnamefont {R.}~\bibnamefont {Duncan}}, \bibinfo {author} {\bibfnamefont {D.}~\bibnamefont {Horsman}},\ and\ \bibinfo {author} {\bibfnamefont {S.}~\bibnamefont {Perdrix}},\ }\href {https://doi.org/10.4204/eptcs.318.6} {\bibfield  {journal} {\bibinfo  {journal} {Electronic Proceedings in Theoretical Computer Science}\ }\textbf {\bibinfo {volume} {318}},\ \bibinfo {pages} {85–105} (\bibinfo {year} {2020})}\BibitemShut {NoStop}%
\bibitem [{\citenamefont {de~Felice}\ \emph {et~al.}(2024)\citenamefont {de~Felice}, \citenamefont {Poór}, \citenamefont {Yeh},\ and\ \citenamefont {Cashman}}]{de2024fusion}%
  \BibitemOpen
  \bibfield  {author} {\bibinfo {author} {\bibfnamefont {G.}~\bibnamefont {de~Felice}}, \bibinfo {author} {\bibfnamefont {B.}~\bibnamefont {Poór}}, \bibinfo {author} {\bibfnamefont {L.}~\bibnamefont {Yeh}},\ and\ \bibinfo {author} {\bibfnamefont {W.}~\bibnamefont {Cashman}},\ }\href@noop {} {\bibinfo {title} {Fusion and flow: formal protocols to reliably build photonic graph states}} (\bibinfo {year} {2024}),\ \Eprint {https://arxiv.org/abs/2409.13541} {arXiv:2409.13541 [quant-ph]} \BibitemShut {NoStop}%
\bibitem [{\citenamefont {Raussendorf}\ \emph {et~al.}(2016)\citenamefont {Raussendorf}, \citenamefont {Sarvepalli}, \citenamefont {Wei},\ and\ \citenamefont {Haghnegahdar}}]{raussendorf2016symmetry}%
  \BibitemOpen
  \bibfield  {author} {\bibinfo {author} {\bibfnamefont {R.}~\bibnamefont {Raussendorf}}, \bibinfo {author} {\bibfnamefont {P.}~\bibnamefont {Sarvepalli}}, \bibinfo {author} {\bibfnamefont {T.-C.}\ \bibnamefont {Wei}},\ and\ \bibinfo {author} {\bibfnamefont {P.}~\bibnamefont {Haghnegahdar}},\ }\href {https://doi.org/https://doi.org/10.1016/j.ic.2016.02.010} {\bibfield  {journal} {\bibinfo  {journal} {Information and Computation}\ }\textbf {\bibinfo {volume} {250}},\ \bibinfo {pages} {115} (\bibinfo {year} {2016})}\BibitemShut {NoStop}%
\bibitem [{\citenamefont {Jorrand}\ and\ \citenamefont {Perdrix}(2005)}]{Perdrix2005unify}%
  \BibitemOpen
  \bibfield  {author} {\bibinfo {author} {\bibfnamefont {P.}~\bibnamefont {Jorrand}}\ and\ \bibinfo {author} {\bibfnamefont {S.}~\bibnamefont {Perdrix}},\ }in\ \href {https://doi.org/10.1117/12.620302} {\emph {\bibinfo {booktitle} {{Quantum Informatics 2004}}}},\ Vol.\ \bibinfo {volume} {5833},\ \bibinfo {editor} {edited by\ \bibinfo {editor} {\bibfnamefont {Y.~I.}\ \bibnamefont {Ozhigov}}},\ \bibinfo {organization} {International Society for Optics and Photonics}\ (\bibinfo  {publisher} {SPIE},\ \bibinfo {year} {2005})\ pp.\ \bibinfo {pages} {44 -- 51}\BibitemShut {NoStop}%
\bibitem [{\citenamefont {Childs}\ \emph {et~al.}(2005)\citenamefont {Childs}, \citenamefont {Leung},\ and\ \citenamefont {Nielsen}}]{ChildsLeung2005unify}%
  \BibitemOpen
  \bibfield  {author} {\bibinfo {author} {\bibfnamefont {A.~M.}\ \bibnamefont {Childs}}, \bibinfo {author} {\bibfnamefont {D.~W.}\ \bibnamefont {Leung}},\ and\ \bibinfo {author} {\bibfnamefont {M.~A.}\ \bibnamefont {Nielsen}},\ }\href {https://doi.org/10.1103/PhysRevA.71.032318} {\bibfield  {journal} {\bibinfo  {journal} {Phys. Rev. A}\ }\textbf {\bibinfo {volume} {71}},\ \bibinfo {pages} {032318} (\bibinfo {year} {2005})}\BibitemShut {NoStop}%
\bibitem [{\citenamefont {Bravyi}\ and\ \citenamefont {Kitaev}(2005)}]{bravyi2005universal}%
  \BibitemOpen
  \bibfield  {author} {\bibinfo {author} {\bibfnamefont {S.}~\bibnamefont {Bravyi}}\ and\ \bibinfo {author} {\bibfnamefont {A.}~\bibnamefont {Kitaev}},\ }\href {https://doi.org/10.1103/PhysRevA.71.022316} {\bibfield  {journal} {\bibinfo  {journal} {Phys. Rev. A}\ }\textbf {\bibinfo {volume} {71}},\ \bibinfo {pages} {022316} (\bibinfo {year} {2005})}\BibitemShut {NoStop}%
\bibitem [{\citenamefont {Heinrich}\ and\ \citenamefont {Gross}(2019)}]{heinrich2019robustness}%
  \BibitemOpen
  \bibfield  {author} {\bibinfo {author} {\bibfnamefont {M.}~\bibnamefont {Heinrich}}\ and\ \bibinfo {author} {\bibfnamefont {D.}~\bibnamefont {Gross}},\ }\href {https://doi.org/10.22331/q-2019-04-08-132} {\bibfield  {journal} {\bibinfo  {journal} {{Quantum}}\ }\textbf {\bibinfo {volume} {3}},\ \bibinfo {pages} {132} (\bibinfo {year} {2019})}\BibitemShut {NoStop}%
\bibitem [{\citenamefont {Raussendorf}\ \emph {et~al.}(2003)\citenamefont {Raussendorf}, \citenamefont {Browne},\ and\ \citenamefont {Briegel}}]{raussendorf2003measurement}%
  \BibitemOpen
  \bibfield  {author} {\bibinfo {author} {\bibfnamefont {R.}~\bibnamefont {Raussendorf}}, \bibinfo {author} {\bibfnamefont {D.~E.}\ \bibnamefont {Browne}},\ and\ \bibinfo {author} {\bibfnamefont {H.~J.}\ \bibnamefont {Briegel}},\ }\href {https://doi.org/10.1103/PhysRevA.68.022312} {\bibfield  {journal} {\bibinfo  {journal} {Phys. Rev. A}\ }\textbf {\bibinfo {volume} {68}},\ \bibinfo {pages} {022312} (\bibinfo {year} {2003})}\BibitemShut {NoStop}%
\bibitem [{\citenamefont {Krishnan~Vijayan}\ \emph {et~al.}(2024)\citenamefont {Krishnan~Vijayan}, \citenamefont {Paler}, \citenamefont {Gavriel}, \citenamefont {Myers}, \citenamefont {Rohde},\ and\ \citenamefont {Devitt}}]{Vijayan+2024compilation}%
  \BibitemOpen
  \bibfield  {author} {\bibinfo {author} {\bibfnamefont {M.}~\bibnamefont {Krishnan~Vijayan}}, \bibinfo {author} {\bibfnamefont {A.}~\bibnamefont {Paler}}, \bibinfo {author} {\bibfnamefont {J.}~\bibnamefont {Gavriel}}, \bibinfo {author} {\bibfnamefont {C.~R.}\ \bibnamefont {Myers}}, \bibinfo {author} {\bibfnamefont {P.~P.}\ \bibnamefont {Rohde}},\ and\ \bibinfo {author} {\bibfnamefont {S.~J.}\ \bibnamefont {Devitt}},\ }\href {https://doi.org/10.1088/2058-9565/ad1f39} {\bibfield  {journal} {\bibinfo  {journal} {Quantum Science and Technology}\ }\textbf {\bibinfo {volume} {9}},\ \bibinfo {pages} {025005} (\bibinfo {year} {2024})}\BibitemShut {NoStop}%
\bibitem [{\citenamefont {Kaldenbach}\ and\ \citenamefont {Heller}(2024)}]{KaldenbachHeller2025mapping}%
  \BibitemOpen
  \bibfield  {author} {\bibinfo {author} {\bibfnamefont {T.~N.}\ \bibnamefont {Kaldenbach}}\ and\ \bibinfo {author} {\bibfnamefont {M.}~\bibnamefont {Heller}},\ }\href {https://doi.org/10.1088/2058-9565/ad802b} {\bibfield  {journal} {\bibinfo  {journal} {Quantum Science and Technology}\ }\textbf {\bibinfo {volume} {10}},\ \bibinfo {pages} {015010} (\bibinfo {year} {2024})}\BibitemShut {NoStop}%
\bibitem [{\citenamefont {Okay}\ \emph {et~al.}(2024)\citenamefont {Okay}, \citenamefont {Yucel},\ and\ \citenamefont {Ipek}}]{Okay2024classical}%
  \BibitemOpen
  \bibfield  {author} {\bibinfo {author} {\bibfnamefont {C.}~\bibnamefont {Okay}}, \bibinfo {author} {\bibfnamefont {A.~T.}\ \bibnamefont {Yucel}},\ and\ \bibinfo {author} {\bibfnamefont {S.}~\bibnamefont {Ipek}},\ }\href@noop {} {\bibinfo {title} {Classical simulation of universal measurement-based quantum computation using multipartite {B}ell scenarios}} (\bibinfo {year} {2024}),\ \Eprint {https://arxiv.org/abs/2410.23734} {arXiv:2410.23734 [quant-ph]} \BibitemShut {NoStop}%
\bibitem [{\citenamefont {Beverland}\ \emph {et~al.}(2022)\citenamefont {Beverland}, \citenamefont {Kliuchnikov},\ and\ \citenamefont {Schoute}}]{Beverland2022surface}%
  \BibitemOpen
  \bibfield  {author} {\bibinfo {author} {\bibfnamefont {M.}~\bibnamefont {Beverland}}, \bibinfo {author} {\bibfnamefont {V.}~\bibnamefont {Kliuchnikov}},\ and\ \bibinfo {author} {\bibfnamefont {E.}~\bibnamefont {Schoute}},\ }\href {https://doi.org/10.1103/PRXQuantum.3.020342} {\bibfield  {journal} {\bibinfo  {journal} {PRX Quantum}\ }\textbf {\bibinfo {volume} {3}},\ \bibinfo {pages} {020342} (\bibinfo {year} {2022})}\BibitemShut {NoStop}%
\bibitem [{\citenamefont {Chamberland}\ and\ \citenamefont {Campbell}(2022)}]{Chamberland2022universal}%
  \BibitemOpen
  \bibfield  {author} {\bibinfo {author} {\bibfnamefont {C.}~\bibnamefont {Chamberland}}\ and\ \bibinfo {author} {\bibfnamefont {E.~T.}\ \bibnamefont {Campbell}},\ }\href {https://doi.org/10.1103/PRXQuantum.3.010331} {\bibfield  {journal} {\bibinfo  {journal} {PRX Quantum}\ }\textbf {\bibinfo {volume} {3}},\ \bibinfo {pages} {010331} (\bibinfo {year} {2022})}\BibitemShut {NoStop}%
\bibitem [{\citenamefont {Peng}\ \emph {et~al.}(2020)\citenamefont {Peng}, \citenamefont {Harrow}, \citenamefont {Ozols},\ and\ \citenamefont {Wu}}]{PengHarrowOW2020simulating}%
  \BibitemOpen
  \bibfield  {author} {\bibinfo {author} {\bibfnamefont {T.}~\bibnamefont {Peng}}, \bibinfo {author} {\bibfnamefont {A.~W.}\ \bibnamefont {Harrow}}, \bibinfo {author} {\bibfnamefont {M.}~\bibnamefont {Ozols}},\ and\ \bibinfo {author} {\bibfnamefont {X.}~\bibnamefont {Wu}},\ }\href {https://doi.org/10.1103/PhysRevLett.125.150504} {\bibfield  {journal} {\bibinfo  {journal} {Phys. Rev. Lett.}\ }\textbf {\bibinfo {volume} {125}},\ \bibinfo {pages} {150504} (\bibinfo {year} {2020})}\BibitemShut {NoStop}%
\bibitem [{\citenamefont {Piveteau}\ and\ \citenamefont {Sutter}(2024)}]{Piveteau2024circuit}%
  \BibitemOpen
  \bibfield  {author} {\bibinfo {author} {\bibfnamefont {C.}~\bibnamefont {Piveteau}}\ and\ \bibinfo {author} {\bibfnamefont {D.}~\bibnamefont {Sutter}},\ }\href {https://doi.org/10.1109/tit.2023.3310797} {\bibfield  {journal} {\bibinfo  {journal} {IEEE Transactions on Information Theory}\ }\textbf {\bibinfo {volume} {70}},\ \bibinfo {pages} {2734–2745} (\bibinfo {year} {2024})}\BibitemShut {NoStop}%
\bibitem [{\citenamefont {Cao}\ and\ \citenamefont {Eisert}(2026)}]{CaoEisert2026mmtdriven}%
  \BibitemOpen
  \bibfield  {author} {\bibinfo {author} {\bibfnamefont {C.}~\bibnamefont {Cao}}\ and\ \bibinfo {author} {\bibfnamefont {J.}~\bibnamefont {Eisert}},\ }\href {https://doi.org/10.1103/4b99-xmqn} {\bibfield  {journal} {\bibinfo  {journal} {Phys. Rev. Lett.}\ }\textbf {\bibinfo {volume} {136}},\ \bibinfo {pages} {080601} (\bibinfo {year} {2026})}\BibitemShut {NoStop}%
\bibitem [{\citenamefont {Grilo}\ \emph {et~al.}(2026)\citenamefont {Grilo}, \citenamefont {Kashefi}, \citenamefont {Markham},\ and\ \citenamefont {Oliveira}}]{Grilo2026power}%
  \BibitemOpen
  \bibfield  {author} {\bibinfo {author} {\bibfnamefont {A.}~\bibnamefont {Grilo}}, \bibinfo {author} {\bibfnamefont {E.}~\bibnamefont {Kashefi}}, \bibinfo {author} {\bibfnamefont {D.}~\bibnamefont {Markham}},\ and\ \bibinfo {author} {\bibfnamefont {M.}~\bibnamefont {Oliveira}},\ }\bibfield  {journal} {\bibinfo  {journal} {ACM Transactions on Quantum Computing}\ }\href {https://doi.org/10.1145/3802819} {10.1145/3802819} (\bibinfo {year} {2026}),\ \bibinfo {note} {{Just Accepted}}\BibitemShut {NoStop}%
\bibitem [{\citenamefont {Lu}\ \emph {et~al.}(2022)\citenamefont {Lu}, \citenamefont {Lessa}, \citenamefont {Kim},\ and\ \citenamefont {Hsieh}}]{Lu2022mmt}%
  \BibitemOpen
  \bibfield  {author} {\bibinfo {author} {\bibfnamefont {T.-C.}\ \bibnamefont {Lu}}, \bibinfo {author} {\bibfnamefont {L.~A.}\ \bibnamefont {Lessa}}, \bibinfo {author} {\bibfnamefont {I.~H.}\ \bibnamefont {Kim}},\ and\ \bibinfo {author} {\bibfnamefont {T.~H.}\ \bibnamefont {Hsieh}},\ }\href {https://doi.org/10.1103/PRXQuantum.3.040337} {\bibfield  {journal} {\bibinfo  {journal} {PRX Quantum}\ }\textbf {\bibinfo {volume} {3}},\ \bibinfo {pages} {040337} (\bibinfo {year} {2022})}\BibitemShut {NoStop}%
\bibitem [{\citenamefont {Smith}\ \emph {et~al.}(2023)\citenamefont {Smith}, \citenamefont {Crane}, \citenamefont {Wiebe},\ and\ \citenamefont {Girvin}}]{Smith2023deterministic}%
  \BibitemOpen
  \bibfield  {author} {\bibinfo {author} {\bibfnamefont {K.~C.}\ \bibnamefont {Smith}}, \bibinfo {author} {\bibfnamefont {E.}~\bibnamefont {Crane}}, \bibinfo {author} {\bibfnamefont {N.}~\bibnamefont {Wiebe}},\ and\ \bibinfo {author} {\bibfnamefont {S.}~\bibnamefont {Girvin}},\ }\href {https://doi.org/10.1103/PRXQuantum.4.020315} {\bibfield  {journal} {\bibinfo  {journal} {PRX Quantum}\ }\textbf {\bibinfo {volume} {4}},\ \bibinfo {pages} {020315} (\bibinfo {year} {2023})}\BibitemShut {NoStop}%
\bibitem [{\citenamefont {DeCross}\ \emph {et~al.}(2023)\citenamefont {DeCross}, \citenamefont {Chertkov}, \citenamefont {Kohagen},\ and\ \citenamefont {Foss-Feig}}]{DeCross+2023qubitreuse}%
  \BibitemOpen
  \bibfield  {author} {\bibinfo {author} {\bibfnamefont {M.}~\bibnamefont {DeCross}}, \bibinfo {author} {\bibfnamefont {E.}~\bibnamefont {Chertkov}}, \bibinfo {author} {\bibfnamefont {M.}~\bibnamefont {Kohagen}},\ and\ \bibinfo {author} {\bibfnamefont {M.}~\bibnamefont {Foss-Feig}},\ }\href {https://doi.org/10.1103/PhysRevX.13.041057} {\bibfield  {journal} {\bibinfo  {journal} {Phys. Rev. X}\ }\textbf {\bibinfo {volume} {13}},\ \bibinfo {pages} {041057} (\bibinfo {year} {2023})}\BibitemShut {NoStop}%
\bibitem [{\citenamefont {Fang}\ \emph {et~al.}(2026)\citenamefont {Fang}, \citenamefont {Zhang}, \citenamefont {Shi},\ and\ \citenamefont {Li}}]{Fang+2026}%
  \BibitemOpen
  \bibfield  {author} {\bibinfo {author} {\bibfnamefont {K.}~\bibnamefont {Fang}}, \bibinfo {author} {\bibfnamefont {M.}~\bibnamefont {Zhang}}, \bibinfo {author} {\bibfnamefont {R.}~\bibnamefont {Shi}},\ and\ \bibinfo {author} {\bibfnamefont {Y.}~\bibnamefont {Li}},\ }\href {https://doi.org/10.1109/TC.2025.3643826} {\bibfield  {journal} {\bibinfo  {journal} {IEEE Transactions on Computers}\ }\textbf {\bibinfo {volume} {75}},\ \bibinfo {pages} {748} (\bibinfo {year} {2026})}\BibitemShut {NoStop}%
\bibitem [{\citenamefont {B\"aumer}\ \emph {et~al.}(2024)\citenamefont {B\"aumer}, \citenamefont {Tripathi}, \citenamefont {Seif}, \citenamefont {Lidar},\ and\ \citenamefont {Wang}}]{Baumer+2024qft}%
  \BibitemOpen
  \bibfield  {author} {\bibinfo {author} {\bibfnamefont {E.}~\bibnamefont {B\"aumer}}, \bibinfo {author} {\bibfnamefont {V.}~\bibnamefont {Tripathi}}, \bibinfo {author} {\bibfnamefont {A.}~\bibnamefont {Seif}}, \bibinfo {author} {\bibfnamefont {D.}~\bibnamefont {Lidar}},\ and\ \bibinfo {author} {\bibfnamefont {D.~S.}\ \bibnamefont {Wang}},\ }\href {https://doi.org/10.1103/PhysRevLett.133.150602} {\bibfield  {journal} {\bibinfo  {journal} {Phys. Rev. Lett.}\ }\textbf {\bibinfo {volume} {133}},\ \bibinfo {pages} {150602} (\bibinfo {year} {2024})}\BibitemShut {NoStop}%
\bibitem [{\citenamefont {Kissinger}\ and\ \citenamefont {van~de Wetering}(2020)}]{KissWet2020reduceT}%
  \BibitemOpen
  \bibfield  {author} {\bibinfo {author} {\bibfnamefont {A.}~\bibnamefont {Kissinger}}\ and\ \bibinfo {author} {\bibfnamefont {J.}~\bibnamefont {van~de Wetering}},\ }\href {https://doi.org/10.1103/PhysRevA.102.022406} {\bibfield  {journal} {\bibinfo  {journal} {Phys. Rev. A}\ }\textbf {\bibinfo {volume} {102}},\ \bibinfo {pages} {022406} (\bibinfo {year} {2020})}\BibitemShut {NoStop}%
\bibitem [{\citenamefont {van~de Wetering}(2020)}]{Wetering2020zxcalculus}%
  \BibitemOpen
  \bibfield  {author} {\bibinfo {author} {\bibfnamefont {J.}~\bibnamefont {van~de Wetering}},\ }\href@noop {} {\bibinfo {title} {{ZX-calculus for the working quantum computer scientist}}} (\bibinfo {year} {2020}),\ \Eprint {https://arxiv.org/abs/2012.13966} {arXiv:2012.13966 [quant-ph]} \BibitemShut {NoStop}%
\bibitem [{\citenamefont {Staudacher}\ \emph {et~al.}(2023)\citenamefont {Staudacher}, \citenamefont {Guggemos}, \citenamefont {Grundner-Culemann},\ and\ \citenamefont {Gehrke}}]{Staudacher2023}%
  \BibitemOpen
  \bibfield  {author} {\bibinfo {author} {\bibfnamefont {K.}~\bibnamefont {Staudacher}}, \bibinfo {author} {\bibfnamefont {T.}~\bibnamefont {Guggemos}}, \bibinfo {author} {\bibfnamefont {S.}~\bibnamefont {Grundner-Culemann}},\ and\ \bibinfo {author} {\bibfnamefont {W.}~\bibnamefont {Gehrke}},\ }\href {https://doi.org/10.4204/eptcs.394.3} {\bibfield  {journal} {\bibinfo  {journal} {Electronic Proceedings in Theoretical Computer Science}\ }\textbf {\bibinfo {volume} {394}},\ \bibinfo {pages} {29–45} (\bibinfo {year} {2023})}\BibitemShut {NoStop}%
\bibitem [{\citenamefont {Kaldenbach}\ and\ \citenamefont {Heller}(2025)}]{kaldenbach2025mapping}%
  \BibitemOpen
  \bibfield  {author} {\bibinfo {author} {\bibfnamefont {T.~N.}\ \bibnamefont {Kaldenbach}}\ and\ \bibinfo {author} {\bibfnamefont {M.}~\bibnamefont {Heller}},\ }\href {https://doi.org/10.1088/2058-9565/ad802b} {\bibfield  {journal} {\bibinfo  {journal} {Quantum Science and Technology}\ }\textbf {\bibinfo {volume} {10}},\ \bibinfo {pages} {015010} (\bibinfo {year} {2025})}\BibitemShut {NoStop}%
\bibitem [{\citenamefont {Peres}(2026)}]{GitHubPACTS}%
  \BibitemOpen
  \bibfield  {author} {\bibinfo {author} {\bibfnamefont {F.~C.~R.}\ \bibnamefont {Peres}},\ }\href@noop {} {\bibinfo {title} {{PACTS}}},\ \bibinfo {howpublished} {\url{https://github.com/fcrperes/PACTS}} (\bibinfo {year} {2026})\BibitemShut {NoStop}%
\bibitem [{\citenamefont {Aaronson}\ and\ \citenamefont {Gottesman}(2004)}]{AarosonGottesman2004improved}%
  \BibitemOpen
  \bibfield  {author} {\bibinfo {author} {\bibfnamefont {S.}~\bibnamefont {Aaronson}}\ and\ \bibinfo {author} {\bibfnamefont {D.}~\bibnamefont {Gottesman}},\ }\href {https://doi.org/10.1103/PhysRevA.70.052328} {\bibfield  {journal} {\bibinfo  {journal} {Phys. Rev. A}\ }\textbf {\bibinfo {volume} {70}},\ \bibinfo {pages} {052328} (\bibinfo {year} {2004})}\BibitemShut {NoStop}%
\end{thebibliography}

\begin{thebibliography}{17}%
\makeatletter
\providecommand \@ifxundefined [1]{%
 \@ifx{#1\undefined}
}%
\providecommand \@ifnum [1]{%
 \ifnum #1\expandafter \@firstoftwo
 \else \expandafter \@secondoftwo
 \fi
}%
\providecommand \@ifx [1]{%
 \ifx #1\expandafter \@firstoftwo
 \else \expandafter \@secondoftwo
 \fi
}%
\providecommand \natexlab [1]{#1}%
\providecommand \enquote  [1]{``#1''}%
\providecommand \bibnamefont  [1]{#1}%
\providecommand \bibfnamefont [1]{#1}%
\providecommand \citenamefont [1]{#1}%
\providecommand \href@noop [0]{\@secondoftwo}%
\providecommand \href [0]{\begingroup \@sanitize@url \@href}%
\providecommand \@href[1]{\@@startlink{#1}\@@href}%
\providecommand \@@href[1]{\endgroup#1\@@endlink}%
\providecommand \@sanitize@url [0]{\catcode `\\12\catcode `\$12\catcode `\&12\catcode `\#12\catcode `\^12\catcode `\_12\catcode `\%12\relax}%
\providecommand \@@startlink[1]{}%
\providecommand \@@endlink[0]{}%
\providecommand \url  [0]{\begingroup\@sanitize@url \@url }%
\providecommand \@url [1]{\endgroup\@href {#1}{\urlprefix }}%
\providecommand \urlprefix  [0]{URL }%
\providecommand \Eprint [0]{\href }%
\providecommand \doibase [0]{https://doi.org/}%
\providecommand \selectlanguage [0]{\@gobble}%
\providecommand \bibinfo  [0]{\@secondoftwo}%
\providecommand \bibfield  [0]{\@secondoftwo}%
\providecommand \translation [1]{[#1]}%
\providecommand \BibitemOpen [0]{}%
\providecommand \bibitemStop [0]{}%
\providecommand \bibitemNoStop [0]{.\EOS\space}%
\providecommand \EOS [0]{\spacefactor3000\relax}%
\providecommand \BibitemShut  [1]{\csname bibitem#1\endcsname}%
\let\auto@bib@innerbib\@empty
%</preamble>
\bibitem [{\citenamefont {Bravyi}\ \emph {et~al.}(2016)\citenamefont {Bravyi}, \citenamefont {Smith},\ and\ \citenamefont {Smolin}}]{Bravyi2016trading-sm}%
  \BibitemOpen
  \bibfield  {author} {\bibinfo {author} {\bibfnamefont {S.}~\bibnamefont {Bravyi}}, \bibinfo {author} {\bibfnamefont {G.}~\bibnamefont {Smith}},\ and\ \bibinfo {author} {\bibfnamefont {J.~A.}\ \bibnamefont {Smolin}},\ }\href {https://doi.org/10.1103/PhysRevX.6.021043} {\bibfield  {journal} {\bibinfo  {journal} {Phys. Rev. X}\ }\textbf {\bibinfo {volume} {6}},\ \bibinfo {pages} {021043} (\bibinfo {year} {2016})}\BibitemShut {NoStop}%
\bibitem [{\citenamefont {Yoganathan}\ \emph {et~al.}(2019)\citenamefont {Yoganathan}, \citenamefont {Jozsa},\ and\ \citenamefont {Strelchuk}}]{Yoganathan2019quantum-sm}%
  \BibitemOpen
  \bibfield  {author} {\bibinfo {author} {\bibfnamefont {M.}~\bibnamefont {Yoganathan}}, \bibinfo {author} {\bibfnamefont {R.}~\bibnamefont {Jozsa}},\ and\ \bibinfo {author} {\bibfnamefont {S.}~\bibnamefont {Strelchuk}},\ }\href {https://doi.org/10.1098/rspa.2018.0427} {\bibfield  {journal} {\bibinfo  {journal} {Proc. R. Soc. A}\ }\textbf {\bibinfo {volume} {475}},\ \bibinfo {pages} {20180427} (\bibinfo {year} {2019})}\BibitemShut {NoStop}%
\bibitem [{\citenamefont {Pashayan}\ \emph {et~al.}(2022)\citenamefont {Pashayan}, \citenamefont {Reardon-Smith}, \citenamefont {Korzekwa},\ and\ \citenamefont {Bartlett}}]{pashayan2022fast-sm}%
  \BibitemOpen
  \bibfield  {author} {\bibinfo {author} {\bibfnamefont {H.}~\bibnamefont {Pashayan}}, \bibinfo {author} {\bibfnamefont {O.}~\bibnamefont {Reardon-Smith}}, \bibinfo {author} {\bibfnamefont {K.}~\bibnamefont {Korzekwa}},\ and\ \bibinfo {author} {\bibfnamefont {S.~D.}\ \bibnamefont {Bartlett}},\ }\href {https://doi.org/10.1103/PRXQuantum.3.020361} {\bibfield  {journal} {\bibinfo  {journal} {PRX Quantum}\ }\textbf {\bibinfo {volume} {3}},\ \bibinfo {pages} {020361} (\bibinfo {year} {2022})}\BibitemShut {NoStop}%
\bibitem [{\citenamefont {Raussendorf}(2013)}]{Raussendorf2013contextuality-sm}%
  \BibitemOpen
  \bibfield  {author} {\bibinfo {author} {\bibfnamefont {R.}~\bibnamefont {Raussendorf}},\ }\href {https://doi.org/10.1103/PhysRevA.88.022322} {\bibfield  {journal} {\bibinfo  {journal} {Phys. Rev. A}\ }\textbf {\bibinfo {volume} {88}},\ \bibinfo {pages} {022322} (\bibinfo {year} {2013})}\BibitemShut {NoStop}%
\bibitem [{\citenamefont {Aaronson}\ and\ \citenamefont {Gottesman}(2004)}]{AarosonGottesman2004improved-sm}%
  \BibitemOpen
  \bibfield  {author} {\bibinfo {author} {\bibfnamefont {S.}~\bibnamefont {Aaronson}}\ and\ \bibinfo {author} {\bibfnamefont {D.}~\bibnamefont {Gottesman}},\ }\href {https://doi.org/10.1103/PhysRevA.70.052328} {\bibfield  {journal} {\bibinfo  {journal} {Phys. Rev. A}\ }\textbf {\bibinfo {volume} {70}},\ \bibinfo {pages} {052328} (\bibinfo {year} {2004})}\BibitemShut {NoStop}%
\bibitem [{\citenamefont {He}\ \emph {et~al.}(2025)\citenamefont {He}, \citenamefont {Cowtan}, \citenamefont {Williamson},\ and\ \citenamefont {Yoder}}]{He2025extractors-sm}%
  \BibitemOpen
  \bibfield  {author} {\bibinfo {author} {\bibfnamefont {Z.}~\bibnamefont {He}}, \bibinfo {author} {\bibfnamefont {A.}~\bibnamefont {Cowtan}}, \bibinfo {author} {\bibfnamefont {D.~J.}\ \bibnamefont {Williamson}},\ and\ \bibinfo {author} {\bibfnamefont {T.~J.}\ \bibnamefont {Yoder}},\ }\href@noop {} {\bibinfo {title} {Extractors: {QLDPC} {Architectures} for {Efficient} {Pauli}-{Based} {Computation}}} (\bibinfo {year} {2025}),\ \Eprint {https://arxiv.org/abs/2503.10390} {arXiv:2503.10390 [quant-ph]} \BibitemShut {NoStop}%
\bibitem [{\citenamefont {Nielsen}\ and\ \citenamefont {Chuang}(2011)}]{NielsenChuang-sm}%
  \BibitemOpen
  \bibfield  {author} {\bibinfo {author} {\bibfnamefont {M.~A.}\ \bibnamefont {Nielsen}}\ and\ \bibinfo {author} {\bibfnamefont {I.~L.}\ \bibnamefont {Chuang}},\ }\href@noop {} {\emph {\bibinfo {title} {{Quantum Computation and Quantum Information: 10th Anniversary Edition}}}},\ \bibinfo {edition} {10th}\ ed.\ (\bibinfo  {publisher} {Cambridge University Press},\ \bibinfo {address} {USA},\ \bibinfo {year} {2011})\BibitemShut {NoStop}%
\bibitem [{\citenamefont {Shende}\ and\ \citenamefont {Markov}(2009)}]{ShendeMarkov08-sm}%
  \BibitemOpen
  \bibfield  {author} {\bibinfo {author} {\bibfnamefont {V.~V.}\ \bibnamefont {Shende}}\ and\ \bibinfo {author} {\bibfnamefont {I.~L.}\ \bibnamefont {Markov}},\ }\href {https://doi.org/10.26421/QIC8.5-6-8} {\bibfield  {journal} {\bibinfo  {journal} {Quantum Info. Comput.}\ }\textbf {\bibinfo {volume} {9}},\ \bibinfo {pages} {461–486} (\bibinfo {year} {2009})}\BibitemShut {NoStop}%
\bibitem [{\citenamefont {Peres}(2026)}]{GitHubPACTS-sm}%
  \BibitemOpen
  \bibfield  {author} {\bibinfo {author} {\bibfnamefont {F.~C.~R.}\ \bibnamefont {Peres}},\ }\href@noop {} {\bibinfo {title} {{PACTS}}},\ \bibinfo {howpublished} {\url{https://github.com/fcrperes/PACTS}} (\bibinfo {year} {2026})\BibitemShut {NoStop}%
\bibitem [{\citenamefont {Tang}(2019)}]{Tang2019quantuminspired-sm}%
  \BibitemOpen
  \bibfield  {author} {\bibinfo {author} {\bibfnamefont {E.}~\bibnamefont {Tang}},\ }in\ \href {https://doi.org/10.1145/3313276.3316310} {\emph {\bibinfo {booktitle} {Proceedings of the 51st Annual ACM SIGACT Symposium on Theory of Computing}}},\ \bibinfo {series and number} {STOC 2019}\ (\bibinfo  {publisher} {Association for Computing Machinery},\ \bibinfo {address} {New York, NY, USA},\ \bibinfo {year} {2019})\ p.\ \bibinfo {pages} {217–228}\BibitemShut {NoStop}%
\bibitem [{\citenamefont {Bravyi}\ \emph {et~al.}(2021)\citenamefont {Bravyi}, \citenamefont {Shaydulin}, \citenamefont {Hu},\ and\ \citenamefont {Maslov}}]{Bravyi2021clifford-sm}%
  \BibitemOpen
  \bibfield  {author} {\bibinfo {author} {\bibfnamefont {S.}~\bibnamefont {Bravyi}}, \bibinfo {author} {\bibfnamefont {R.}~\bibnamefont {Shaydulin}}, \bibinfo {author} {\bibfnamefont {S.}~\bibnamefont {Hu}},\ and\ \bibinfo {author} {\bibfnamefont {D.}~\bibnamefont {Maslov}},\ }\href {https://doi.org/10.22331/q-2021-11-16-580} {\bibfield  {journal} {\bibinfo  {journal} {{Quantum}}\ }\textbf {\bibinfo {volume} {5}},\ \bibinfo {pages} {580} (\bibinfo {year} {2021})}\BibitemShut {NoStop}%
\bibitem [{\citenamefont {Bravyi}\ and\ \citenamefont {Maslov}(2021)}]{Bravyi2021hadamardfree-sm}%
  \BibitemOpen
  \bibfield  {author} {\bibinfo {author} {\bibfnamefont {S.}~\bibnamefont {Bravyi}}\ and\ \bibinfo {author} {\bibfnamefont {D.}~\bibnamefont {Maslov}},\ }\href {https://doi.org/10.1109/TIT.2021.3081415} {\bibfield  {journal} {\bibinfo  {journal} {IEEE Transactions on Information Theory}\ }\textbf {\bibinfo {volume} {67}},\ \bibinfo {pages} {4546} (\bibinfo {year} {2021})}\BibitemShut {NoStop}%
\bibitem [{\citenamefont {Peres}\ and\ \citenamefont {Galv{\~{a}}o}(2023)}]{Peres2023quantum-sm}%
  \BibitemOpen
  \bibfield  {author} {\bibinfo {author} {\bibfnamefont {F.~C.~R.}\ \bibnamefont {Peres}}\ and\ \bibinfo {author} {\bibfnamefont {E.~F.}\ \bibnamefont {Galv{\~{a}}o}},\ }\href {https://doi.org/10.22331/q-2023-10-03-1126} {\bibfield  {journal} {\bibinfo  {journal} {{Quantum}}\ }\textbf {\bibinfo {volume} {7}},\ \bibinfo {pages} {1126} (\bibinfo {year} {2023})}\BibitemShut {NoStop}%
\bibitem [{\citenamefont {Peres}\ and\ \citenamefont {Galv\~ao}(2025)}]{Peres2025reducing-sm}%
  \BibitemOpen
  \bibfield  {author} {\bibinfo {author} {\bibfnamefont {F.~C.~R.}\ \bibnamefont {Peres}}\ and\ \bibinfo {author} {\bibfnamefont {E.~F.}\ \bibnamefont {Galv\~ao}},\ }\href {https://doi.org/10.1103/d3x5-cgky} {\bibfield  {journal} {\bibinfo  {journal} {Phys. Rev. A}\ }\textbf {\bibinfo {volume} {112}},\ \bibinfo {pages} {062604} (\bibinfo {year} {2025})}\BibitemShut {NoStop}%
\bibitem [{\citenamefont {Kissinger}\ and\ \citenamefont {van~de Wetering}(2020)}]{KissWet2020reduceT-sm}%
  \BibitemOpen
  \bibfield  {author} {\bibinfo {author} {\bibfnamefont {A.}~\bibnamefont {Kissinger}}\ and\ \bibinfo {author} {\bibfnamefont {J.}~\bibnamefont {van~de Wetering}},\ }\href {https://doi.org/10.1103/PhysRevA.102.022406} {\bibfield  {journal} {\bibinfo  {journal} {Phys. Rev. A}\ }\textbf {\bibinfo {volume} {102}},\ \bibinfo {pages} {022406} (\bibinfo {year} {2020})}\BibitemShut {NoStop}%
\bibitem [{\citenamefont {van~de Wetering}(2020)}]{Wetering2020zxcalculus-sm}%
  \BibitemOpen
  \bibfield  {author} {\bibinfo {author} {\bibfnamefont {J.}~\bibnamefont {van~de Wetering}},\ }\href@noop {} {\bibinfo {title} {{ZX-calculus for the working quantum computer scientist}}} (\bibinfo {year} {2020}),\ \Eprint {https://arxiv.org/abs/2012.13966} {arXiv:2012.13966 [quant-ph]} \BibitemShut {NoStop}%
\bibitem [{\citenamefont {Staudacher}\ \emph {et~al.}(2023)\citenamefont {Staudacher}, \citenamefont {Guggemos}, \citenamefont {Grundner-Culemann},\ and\ \citenamefont {Gehrke}}]{Staudacher2023-sm}%
  \BibitemOpen
  \bibfield  {author} {\bibinfo {author} {\bibfnamefont {K.}~\bibnamefont {Staudacher}}, \bibinfo {author} {\bibfnamefont {T.}~\bibnamefont {Guggemos}}, \bibinfo {author} {\bibfnamefont {S.}~\bibnamefont {Grundner-Culemann}},\ and\ \bibinfo {author} {\bibfnamefont {W.}~\bibnamefont {Gehrke}},\ }\href {https://doi.org/10.4204/eptcs.394.3} {\bibfield  {journal} {\bibinfo  {journal} {Electronic Proceedings in Theoretical Computer Science}\ }\textbf {\bibinfo {volume} {394}},\ \bibinfo {pages} {29–45} (\bibinfo {year} {2023})}\BibitemShut {NoStop}%
\end{thebibliography}
\end{document}